\documentclass[11pt]{article}
\pdfoutput=1
\usepackage[whole]{bxcjkjatype}
\usepackage[utf8]{inputenc}
\usepackage[
 a4paper,
 total={160mm,257mm},
 left=25mm,
 top=20mm]{geometry}
\usepackage{hyperref}
\usepackage{amsfonts}
\usepackage{amsmath}
\usepackage{amssymb}
\usepackage{mathrsfs}
\usepackage{color}
\usepackage{authblk}
\usepackage{physics}
\usepackage{multicol}
\usepackage{tikz}
\usepackage{setspace}
\usepackage{amsthm}

\theoremstyle{definition}
\newtheorem{theorem}{Theorem}[section]
\newtheorem{corollary}[theorem]{Corollary}
\newtheorem{definition}[theorem]{Definition}
\newtheorem{lemma}[theorem]{Lemma}
\newtheorem{proposition}[theorem]{Proposition}
\theoremstyle{remark}

\newtheorem{remark}[theorem]{Remark}



\newcommand{\rem}[1]{\textbf{\textcolor{blue}{(#1)}}}

%
\newcommand{\rvev}[1]{\langle \, #1 \, \rangle}

\DeclareMathOperator*{\Pf}{Pf}

\definecolor{c1}{RGB}{0,90,187}
\definecolor{c2}{RGB}{255,213,0}

\parskip=.5em

\numberwithin{equation}{section}

\usepackage[root radius=.8mm,edge length=.5cm,mark=o]{dynkin-diagrams}

\title{\textbf{Pfaffian Interaction and $BCD$-quiver Matrix Models}} 
\author{\textsc{Nicolas Babinet} and \textsc{Taro Kimura}}
\affil{Institut de Math\'ematiques de Bourgogne, Universit\'e Bourgogne Franche-Comt\'e}
\date{}

\begin{document}

\maketitle

\begin{abstract}
We study matrix models involving Pfaffian interactions as generalizations of the standard $\beta = 1$ and $\beta = 4$ matrix models.
We present the Pfaffian formulas for the partition function and the characteristic polynomial averages.
We also explore the matrix chain with the Pfaffian interaction, which realizes the $BCD$-type quiver matrix models.
\end{abstract}

\tableofcontents

\section{Introduction}

Pfaffian is an analog of the determinant, which is one of the central concepts in the linear algebra.
In the context of statistical physics and quantum field theory, Pfaffian is frequently utilized to describe real (Majorana) fermions, while the determinant is used for complex (Dirac) fermions.
It has been also known that the Pfaffian structure is often discussed for certain classes of random matrices, called the orthogonal and symplectic ensembles, which are classified into the $\beta = 1$ and $\beta = 4$ models due to the Wigner--Dyson classification.
Similarly to the $\beta = 2$ model (unitary ensemble), which exhibits the determinantal structure, the correlation functions (including the partition function) are in general described using the Pfaffian.%
\footnote{One may also use the quaternion determinant, which is essentially equivalent to the Pfaffian.}
See \cite{Mehta:2004RMT,Forrester:2010} for details.

In this paper, we explore formal eigenvalue integrals, involving various Pfaffian interactions, that we call the generalized $\beta = 1$ and $\beta = 4$ models.
We obtain the Pfaffian formulas for the corresponding partition functions and also the average of characteristic polynomials.
As an application of the Pfaffian interaction, we then construct the matrix chain model, involving several sets of eigenvalues.
In order to classify such a matrix chain, we use the graphical description based on the Dynkin-quiver diagram.
Such a multi-matrix model is often called the quiver matrix model, which has been originally introduced to study its connection with conformal field theory~\cite{Marshakov:1991gc,Kharchev:1992iv}.
For example, the standard linear matrix chain is classified into the type $A$ quiver model,
\begin{equation}
    A_n \ : \ \dynkin{A}{}
    \label{eq:Dynkin_A}
\end{equation}
The remaining classical classes are known as the $BCD$-type,
\begin{equation}
    B_n \ : \ \dynkin{B}{} \hspace{4em}
    C_n \ : \ \dynkin{C}{} \hspace{4em}
    D_n \ : \ \dynkin{D}{}
    \label{eq:Dynkin_BCD}
\end{equation}
In particular, the $B$ and $C$ types involve the oriented arrows (non-simply-laced quivers), so that its realization is not straightforward as for the other simply-laced classes.
We show that such a non-simply-laced quiver matrix model can be constructed using the Pfaffian interaction, and discuss its relation to the simply-laced cases through the folding process based on the outer automorphism of the Dynkin-quiver diagram.

The remaining part of this paper is organizes as follows.
In Section~\ref{sec:preliminary}, we collect the notations and the fundamental properties and formulas regarding the determinant and the Pfaffian, that we use in this paper.
In Section~\ref{sec:Pf}, we introduce the generalized $\beta = 1$ and $\beta = 4$ matrix models involving the Pfaffian interactions, and show the Pfaffian formulas for their partition functions.
In Section~\ref{sec:ch_poly}, we consider the correlation functions of the characteristic polynomials with respect to the generalized $\beta = 1$ and $\beta = 4$ models, and again show the Pfaffian formulas for them.
In Section~\ref{sec:chain}, we explore the matrix chain model, and show that the quiver matrix models classified into the type $BCD$ could be constructed using the Pfaffian interaction.

\subsubsection*{Acknowledgments}

We would like to thank Soichi Okada for a useful communication.
This work was supported by ``Investissements d'Avenir'' program, Project ISITE-BFC (No.~ANR-15-IDEX-0003), EIPHI Graduate School (No.~ANR-17-EURE-0002), and Bourgogne-Franche-Comté region.

\section{Preliminary}\label{sec:preliminary}

In this Section, we present the notations and the preliminary results that we will use in the following parts.

\subsection{Notations}

For $n \in \mathbb{N}$, we define a set of integers,
\begin{align}
    [n] := ( 1, \ldots, n )
    \, .
\end{align}
For a subset $I \subset [n]$, we denote
\begin{align}
    \Sigma(I) = \sum_{i \in I} i
    \, .
\end{align}
Let $X$ be an $n \times m$ matrix. 
For $I \subset [n]$, $J \subset [m]$, we denote by $X(I;J)$ the submatrix of $X$ obtained by picking up rows indexed by $I$ and columns indexed by $J$. 
If $X$ is a skew-symmetric matrix, we write $X(I) = X(I;I)$ for simplicity.

\subsubsection*{Operator formalism}

Let $\dd{\mu(x)}$ be an integral measure for the variable $x$.
We denote a function (wave function) associated with a formal state $\ket{f}$ in the Hilbert space by
\begin{align}
    f(x) = \rvev{ x \mid f }
    \, .
\end{align}
For the integral operator $K$ associated with the corresponding kernel $K(x,y)$, we write
\begin{align}
   \rvev{x \mid K \mid f} = \int \dd{\mu(y)} K(x,y) f(y)
    \, .
\end{align}
In this convention, the kernel of the integral operator $K$ is given by
\begin{align}
    K(x,y) = \rvev{ x \mid K \mid y }
    \, .
\end{align}
Hence, the identity operator is given by the integral over the complete set,
\begin{align}
    \int \dd{\mu(x)} \ket{x} \bra{x} = 1
    \, .
\end{align}
We denote the transposition of the kernel,
\begin{align}
    K^\text{T}(y,x) = \rvev{ y \mid K^\text{T} \mid x }
    \, .
\end{align}
We also write the inner product as follows,
\begin{subequations}
\begin{align}
    \rvev{f \mid g} & = \int \dd{\mu(x)} f(x) g(x)
    \, , \\
    \rvev{ f \mid K \mid g} & = \int \dd{\mu(x)} \dd{\mu(y)} f(x) K(x,y) g(y)
    \, .
\end{align}
\end{subequations}

\subsection{Determinantal formulas}

\subsubsection{Vandermonde determinant}

We first show several determinantal formulas regarding the Vandermonde determinant, which is a main building block of the matrix model.
See also the monographs on this subject \cite{Mehta:2004RMT,Forrester:2010,Eynard:2015aea} for more details.

\begin{proposition}[Vandermonde determinant]
Let $N \in \mathbb{N}$.
For a set of variables $X = (x_i)_{i \in [N]}$, the $N$-variable Vandermonde determinant is given by a rank $N$ determinant,
\begin{align}
    \Delta_N(X) = \prod_{1 \le i < j \le N} (x_i - x_j) = \det_{i,j \in [N]} p_{N-j}(x_i)
    \, ,
    \label{eq:VDMdet}
\end{align}
where $p_{k}(x) = x^k + \cdots$ is a degree-$k$ monic polynomial.
\end{proposition}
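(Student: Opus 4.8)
The plan is to prove the statement in two stages: first reduce the general monic-polynomial determinant to the pure monomial Vandermonde determinant, and then establish the latter directly. For the reduction, I note that each $p_{N-j}$ is monic of degree $N-j$, so it expands in the monomial basis as $p_{N-j}(x) = x^{N-j} + \sum_{k>j} c_{jk}\, x^{N-k}$, involving only monomials of degree strictly less than $N-j$. Consequently the matrix $\big( p_{N-j}(x_i) \big)_{i,j\in[N]}$ factorizes as $\big( x_i^{N-j} \big)_{i,j} \cdot C$, where $C$ is the transition matrix from the monomial basis to the basis $(p_{N-j})_j$. Since each $p_{N-j}$ is monic and involves only lower-degree corrections, $C$ is triangular with unit diagonal, hence $\det C = 1$. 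Equivalently, one clears the lower-degree terms in each column by elementary column operations, which preserve the determinant. Therefore $\det_{i,j \in [N]} p_{N-j}(x_i) = \det_{i,j \in [N]} x_i^{N-j}$, and it suffices to treat $p_k(x) = x^k$.

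For the monomial case, I would argue by the standard degree-and-vanishing method. Viewing $D(X) := \det_{i,j \in [N]} x_i^{N-j}$ as a polynomial in $X = (x_1,\dots,x_N)$, each term of its permutation expansion is a product $\prod_i x_i^{N-\sigma(i)}$ of total degree $\sum_i (N-\sigma(i)) = \binom{N}{2}$, so $D$ is homogeneous of degree $\binom{N}{2}$. Moreover, setting $x_a = x_b$ for any $a\neq b$ makes two rows of the matrix coincide, so $D$ vanishes on the hyperplane $x_a = x_b$; hence the linear form $(x_a - x_b)$ divides $D$. As the factors $(x_a - x_b)$ for $a<b$ are pairwise coprime in the polynomial ring, their product $\prod_{a<b}(x_a - x_b)$ — itself homogeneous of degree $\binom{N}{2}$ — divides $D$. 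A comparison of degrees then forces $D = c\,\prod_{a<b}(x_a - x_b)$ for a constant $c$.

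To fix $c$, I would identify the coefficient of the leading monomial $x_1^{N-1}x_2^{N-2}\cdots x_N^0$ on both sides: on the product side it is $+1$, and in $D$ it arises solely from the identity permutation (the unique diagonal term $\prod_i x_i^{N-i}$), again with coefficient $+1$. Hence $c=1$, giving $D = \prod_{a<b}(x_a - x_b) = \Delta_N(X)$ and completing the proof.

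I expect no genuine difficulty in this classical argument; the only points requiring care are the coprimality of the linear factors, which upgrades individual divisibility to divisibility by the full product, and the verification that the top monomial receives a contribution from exactly one permutation with sign $+1$, which pins down the overall normalization.
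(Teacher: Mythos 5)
Your proof is correct. The paper itself states this proposition without proof, treating it as a classical fact and deferring to the standard monographs, so there is no in-paper argument to compare against; your two-stage argument (triangular change of basis from monic polynomials to monomials, then the degree-and-vanishing divisibility argument with normalization fixed by the coefficient of $x_1^{N-1}x_2^{N-2}\cdots x_N^0$) is the standard complete derivation and fills that gap correctly. The only point worth a half-sentence more of justification is the claim that the coefficient of $x_1^{N-1}\cdots x_N^0$ in $\prod_{a<b}(x_a-x_b)$ equals $+1$; this follows by selecting the term $x_a$ from every factor and checking no other selection produces that monomial, which is routine.
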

Using this expression, we have the following Lemma, which is a crucial identity to discuss the $\beta = 4$ model.
\begin{lemma}\label{lemma:beta=4Vandermonde}
For the $N$-variable Vandermonde determinant, the following determinantal formula holds,
\begin{align}
    \Delta_N(X)^4 = (-1)^{N(N+1)/2} \det_{\substack{i \in [N] \\ j \in [2N]}} \qty[ p_{2N-j}(x_i) \ p_{2N-j}'(x_i) ]
    \, .
\end{align}
\end{lemma}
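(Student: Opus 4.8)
The plan is to obtain the right-hand side as a confluent Vandermonde determinant, produced by coalescing the nodes of the ordinary $2N$-variable Vandermonde determinant of the Proposition in $N$ pairs, and then to read off both the power $(x_i-x_j)^4$ and the global sign from the resulting product.

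First I would reduce to the monomial case $p_k(x)=x^k$. Since each $p_k$ is monic of degree $k$ and the column index $j$ labels the degree $2N-j$, the passage from the monomial basis to $(p_{2N-1},\dots,p_1,p_0)$ is unitriangular in $j$ and therefore has determinant $1$; as differentiation is linear, the corresponding column operations act identically on the value-entries $p_{2N-j}(x_i)$ and on the derivative-entries $p'_{2N-j}(x_i)$. Hence the determinant is unchanged under $p_{2N-j}(x_i)\to x_i^{2N-j}$ and $p'_{2N-j}(x_i)\to(2N-j)\,x_i^{2N-j-1}$, and it is enough to treat these monomial entries.

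Next I would introduce $2N$ auxiliary variables arranged in two blocks, $Y=(x_1,\dots,x_N,\,x_1+\epsilon,\dots,x_N+\epsilon)$, and start from the ordinary Vandermonde determinant $\Delta_{2N}(Y)=\det_{\alpha,\beta}y_\alpha^{2N-\beta}=\prod_{\alpha<\beta}(y_\alpha-y_\beta)$ of the Proposition. For each $i\in[N]$ I replace the bottom-block row $R_{N+i}$ by $(R_{N+i}-R_i)/\epsilon$; these operations preserve the determinant apart from the factor $\epsilon^{-N}$ coming from the $N$ divisions, and as $\epsilon\to0$ the row $((x_i+\epsilon)^{2N-\beta})_\beta$ tends to the derivative row $((2N-\beta)\,x_i^{2N-\beta-1})_\beta$. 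Thus $\lim_{\epsilon\to0}\epsilon^{-N}\Delta_{2N}(Y)$ equals exactly the monomial determinant of the statement, with the value-rows forming the top block and the derivative-rows the bottom block, both in the order $i=1,\dots,N$.

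It remains to expand the product $\prod_{\alpha<\beta}(y_\alpha-y_\beta)$. Splitting the factors into top--top, bottom--bottom, and cross contributions, the first two groups each collapse to $\Delta_N(X)$ (the shift $\epsilon$ cancels inside the bottom block), while the cross factors $\prod_{i,j\in[N]}((x_i-x_j)-\epsilon)$ give the diagonal part $(-\epsilon)^N$ times, after pairing the $(i,j)$ and $(j,i)$ terms for $i<j$, the off-diagonal part $(-1)^{\binom{N}{2}}\prod_{i<j}((x_i-x_j)^2-\epsilon^2)$. Collecting these, dividing by $\epsilon^N$, and letting $\epsilon\to0$ yields $(-1)^{N+\binom{N}{2}}\Delta_N(X)^4=(-1)^{N(N+1)/2}\Delta_N(X)^4$ for the determinant, which is equivalent to the claimed identity. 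The step requiring genuine care is this sign bookkeeping---the $(-\epsilon)^N$ from the coalescing pairs combined with the $(-1)^{\binom{N}{2}}$ from the cross terms---together with keeping the two row-blocks in the correct order; the analytic coalescence itself is routine.
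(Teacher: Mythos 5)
Your argument is correct and is essentially the paper's own proof: both obtain the right-hand side as the confluent limit of the $2N$-variable Vandermonde determinant (the paper sends $y_i\to x_i$ independently after a row subtraction and division by $y_i-x_i$, you use a uniform shift $\epsilon$), and both extract the sign $(-1)^{N+\binom{N}{2}}=(-1)^{N(N+1)/2}$ from the same factorization of the product form. Your preliminary unitriangular reduction to monomials is a harmless extra step that the paper sidesteps by working directly with the monic $p_k$.
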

\begin{proof}
The $2N$-variable Vandermonde determinant of $X = (x_i)_{i \in [N]}$ and $Y = (y_i)_{i \in [N]}$ may be written as follows,
\begin{align}
    \Delta_{2N}(X;Y) = \prod_{\substack{i,j\in[N] \\ i<j}} (x_i - x_j) (y_i - y_j) (x_i - y_j) (x_j - y_i) \prod_{i \in [N]} (x_i - y_i)
    \, .
\end{align}
On the other hand, the $2N$-variable Vandermonde determinant is given by
\begin{align}
    \Delta_{2N}(X;Y) 
    = \det_{\substack{i \in [N] \\ j \in [2N]}} \qty[ p_{2N-j}(x_i) \ p_{2N-j}(y_i) ] 
    = \det_{\substack{i \in [N] \\ j \in [2N]}} \qty[ p_{2N-j}(x_i) \ p_{2N-j}(y_i) - p_{2N-j}(x_i)] 
    \, .
\end{align}
Hence, we obtain
\begin{align}
    \Delta_N(X)^4 & = (-1)^{N(N+1)/2} \lim_{y_i \to x_i} \Delta_{2N}(X;Y) \prod_{i \in [N]} (y_i - x_i)^{-1}
    \nonumber \\
    & = (-1)^{N(N+1)/2} \lim_{y_i \to x_i} \det_{\substack{i \in [N] \\ j \in [2N]}} \qty[ p_{2N-j}(x_i) \ \frac{p_{2N-j}(y_i) - p_{2N-j}(x_i)}{y_i - x_i} ]
    \nonumber \\
    & = (-1)^{N(N+1)/2} \det_{\substack{i \in [N] \\ j \in [2N]}} \qty[ p_{2N-j}(x_i) \ p_{2N-j}'(x_i) ]
    \, .
\end{align}
This completes the proof.
\end{proof}

\subsubsection{Schur polynomials}

In this paper, we shall use the Schur polynomial as the basis of the symmetric polynomial of the formal eigenvalues.
See, e.g.~\cite{Macdonald:2015} for the details on this subject.

\begin{definition}[Schur polynomial]
Let $\lambda$ be a partition, a non-increasing sequence of non-negative integers,
\begin{align}
    \lambda = (\lambda_1 \ge \lambda_2 \ge \cdots \ge \lambda_\ell > \lambda_{\ell+1} = \cdots = 0 )
    \, ,
\end{align}
where $\ell = \ell(\lambda)$ is called the length of the partition.
Denoting the transposed partition by $\lambda^\text{T}$, we have $\ell(\lambda) = \lambda_1^\text{T}$.
We define the size of the partition by
\begin{align}
    |\lambda| = \sum_{i = 1}^\infty \lambda_i = \sum_{i = 1}^\infty \lambda_i^\text{T} = |\lambda^\text{T}|
    \, .
\end{align}
Then, the Schur polynomial of $N$ variables, $X = (x_i)_{i\in[N]}$, is defined as follows,
\begin{align}
    s_\lambda(X) = \frac{1}{\Delta_N(X)} \det_{i,j \in [N]} x_i^{\lambda_j + N - j}
    \, .
    \label{eq:Schur_def}
\end{align}
If $\ell(\lambda) > N$, it always vanishes $s_\lambda(X) = 0$.
We also remark $s_\emptyset(X) = 1$.
\end{definition}

For the latter convenience, we also define a modified version of the Schur polynomial as follows.
\begin{definition}
Let $N \in \mathbb{N}$, and we denote $X = (x_i)_{i \in [N]}$, $Y = (y_i)_{i \in [N]}$.
For a partition $\lambda$, we define 
\begin{align}
    a_\lambda(X;Y) = \det_{i \in [N], j \in [2N]} 
    \begin{bmatrix}
    x_i^{\lambda_j + 2N - j} & (y_i^{\lambda_j + 2N - j})'
    \end{bmatrix}
    \, .
\end{align}
Then, we define a modified Schur polynomial as follows,
\begin{align}
    \mathsf{s}_\lambda(X;Y) = \frac{a_\lambda(X;Y)}{a_\emptyset(X;Y)}
    \, .
    \label{eq:Schur_def_mod}
\end{align}
\end{definition}

Regarding the Schur polynomials, we will use the following decomposition in the study of the characteristic polynomials.
\begin{lemma}[Schur polynomial expansion]\label{lemma:Schur_expansion}
Let $X = (x_i)_{i \in [N]}$ and $Y = (y_i)_{i \in [M]}$.
We have the following expansions with the Schur polynomial,
\begin{subequations}
\begin{align}
    \prod_{\substack{i \in [N], j \in [M]}} (x_i - y_j)
    & = \sum_{\lambda \subseteq (M^N)} (-1)^{|\lambda|} s_{\lambda^\vee} (X) s_\lambda(Y)
    \, , \\
    \prod_{\substack{i \in [N], j \in [M]}} (x_i - y_j)^{-1}
    & = \det_N X^{-M} \sum_{\lambda | \ell(\lambda) \le \operatorname{min}(M,N)} s_\lambda(X^{-1}) s_\lambda(Y)
    \, , \nonumber\\
    & = \det_M Y^{-N} \sum_{\lambda | \ell(\lambda) \le \operatorname{min}(M,N)} s_\lambda(X) s_\lambda(Y^{-1})
    \, ,
\end{align}
\end{subequations}
where we define the dual partition
\begin{align}
    \lambda^\vee = (\lambda_1^\vee,\ldots,\lambda_M^\vee) = (N - \lambda_M^\text{T}, \ldots, N - \lambda_1^\text{T})
    \, ,
\end{align}
and the length of the partition denoted by $\ell(\lambda) = \lambda_1$.
\end{lemma}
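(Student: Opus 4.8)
For the first (polynomial) identity, the plan is to realize the product as a ratio of Vandermonde determinants and then Laplace-expand. Ordering the combined variables as $Z = (x_1,\dots,x_N,y_1,\dots,y_M)$, the factorization of the Vandermonde product gives
\begin{align}
    \Delta_{N+M}(Z) = \Delta_N(X)\,\Delta_M(Y) \prod_{i\in[N],\,j\in[M]} (x_i - y_j)
    \, ,
\end{align}
so that $\prod_{i,j}(x_i-y_j) = \Delta_{N+M}(Z)/(\Delta_N(X)\Delta_M(Y))$. Using the monic choice $p_k(x)=x^k$ in the Vandermonde determinant formula, I would write $\Delta_{N+M}(Z) = \det_{a,b\in[N+M]} z_a^{N+M-b}$ and Laplace-expand this $(N+M)\times(N+M)$ determinant along the first $N$ rows, i.e.\ the $X$-block.

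Each term of the expansion is indexed by a choice of $N$ columns $S=\{s_1<\cdots<s_N\}\subset[N+M]$: it contributes the $X$-minor on columns $S$ times the complementary $Y$-minor on columns $S^\text{c}$, with sign $(-1)^{\Sigma([N])+\Sigma(S)}$. Setting $\mu_k = M - s_k + k$, the exponents $N+M-s_k = \mu_k + N - k$ exhibit $\mu=(\mu_1\ge\cdots\ge\mu_N)$ as a partition inside the box $(M^N)$, and identify the $X$-minor with $s_\mu(X)\,\Delta_N(X)$ via the bialternant definition \eqref{eq:Schur_def}; the complementary columns $S^\text{c}$ produce $s_\nu(Y)\,\Delta_M(Y)$ for a partition $\nu$. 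Dividing by $\Delta_N(X)\Delta_M(Y)$ leaves $\sum_S \pm\, s_\mu(X)\,s_\nu(Y)$, which is then to be matched against the claimed sum.

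The combinatorial core, and the step I expect to be the main obstacle, is to verify two things: that the complementary column set yields exactly the dual partition, $\mu = \nu^\vee$ with $\nu^\vee$ defined as in the statement (complement within the box followed by reversal), and that the Laplace sign collapses correctly. For the sign, substituting $s_k = M+k-\mu_k$ gives $\Sigma(S) = NM + N(N+1)/2 - |\mu|$, so that together with $\Sigma([N]) = N(N+1)/2$ the overall sign is $(-1)^{NM-|\mu|} = (-1)^{|\mu^\vee|} = (-1)^{|\nu|}$. The careful part is the box-complementation identity relating $S,S^\text{c}$ to $\mu,\mu^\vee$; once this is pinned down, summing over $\lambda = \nu$ ranging over $(M^N)$ reproduces $\sum_\lambda (-1)^{|\lambda|} s_{\lambda^\vee}(X) s_\lambda(Y)$.

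For the second identity, I would reduce to the classical Cauchy identity. Factoring $(x_i - y_j)^{-1} = x_i^{-1}(1 - x_i^{-1}y_j)^{-1}$ yields
\begin{align}
    \prod_{i\in[N],\,j\in[M]} (x_i - y_j)^{-1}
    = \Bigl( \prod_{i\in[N]} x_i^{-M} \Bigr) \prod_{i,j} (1 - x_i^{-1} y_j)^{-1}
    = \det_N X^{-M} \prod_{i,j} (1 - x_i^{-1} y_j)^{-1}
    \, .
\end{align}
Applying the Cauchy identity $\prod_{i,j}(1-u_i v_j)^{-1} = \sum_\lambda s_\lambda(U) s_\lambda(V)$ with $U=X^{-1}$ and $V=Y$, and noting that $s_\lambda$ vanishes once $\ell(\lambda)$ exceeds the number of variables so the sum truncates at $\ell(\lambda)\le\min(M,N)$, gives the first equality. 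The second equality then follows by performing the same computation after factoring out $y_j^{-1}$ instead, or equivalently from the $X\leftrightarrow Y$, $N\leftrightarrow M$ symmetry of the left-hand side. I regard this second identity as essentially a direct application of Cauchy's theorem, whereas the genuine difficulty lies in the sign and duality bookkeeping of the first.
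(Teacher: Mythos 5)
Your strategy is sound, and for the second identity it is complete: factoring out $\prod_i x_i^{-M}$ and invoking the classical Cauchy identity is exactly what the paper's one-line proof (``this follows from the Cauchy sum formula, see Macdonald'') is pointing at, and the truncation to $\ell(\lambda)\le\min(M,N)$ is correctly justified by the vanishing of Schur polynomials in too few variables. For the first identity, your Laplace expansion of $\Delta_{N+M}(Z)$ along the $X$-block is the standard derivation of this box-complement (dual Cauchy) identity, and your sign computation is right: $\Sigma(S)+\Sigma([N])\equiv NM-|\mu| \pmod 2$ together with $|\mu|+|\mu^\vee|=NM$ gives $(-1)^{|\mu^\vee|}$. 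The step you flag as the main obstacle---that the complementary column set encodes precisely the dual partition---is a standard fact (Macdonald I.(1.7)): for $\mu\subseteq(M^N)$ the sets $\{\mu_i+N-i\}_{i\in[N]}$ and $\{N-1+j-\mu_j^{\mathrm{T}}\}_{j\in[M]}$ partition $\{0,\dots,N+M-1\}$; applied to the exponent sets of the two minors it yields $\nu_j=N-\mu_{M+1-j}^{\mathrm{T}}$, i.e.\ $\nu=\mu^\vee$. So the gap is real but routine, not an obstacle.

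One point your bookkeeping would surface if carried to the end: the $X$-minor is indexed by $\mu\subseteq(M^N)$ while the complementary $Y$-minor is indexed by $\nu=\mu^\vee\subseteq(N^M)$, so when the sum is re-indexed by the partition $\lambda$ attached to $Y$ it runs over $\lambda\subseteq(N^M)$ with $\lambda^\vee=(M-\lambda_N^{\mathrm{T}},\dots,M-\lambda_1^{\mathrm{T}})$ attached to $X$. The lemma as printed interchanges $N$ and $M$ in both the box and the definition of $\lambda^\vee$ (test $N=1$, $M=2$: the printed sum produces $y_1^2+y_1y_2+y_2^2$ where $y_1y_2$ is required), which is harmless when $N=M$ and is used in the correct orientation in the paper's later applications; your closing sentence ``summing over $\lambda=\nu$ ranging over $(M^N)$'' silently inherits this swap rather than establishing it. With that relabelling fixed, your argument is complete and supplies exactly the details the paper delegates to the citation.
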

\begin{proof}
This follows from the Cauchy sum formula.
See, e.g.,~\cite{Macdonald:2015} for details.
\end{proof}

\begin{lemma}[Cauchy determinant]
We define the Cauchy determinant for $X = (x_i)_{i \in [N]}$ and $Y = (y_i)_{i \in [M]}$ as follows,
\begin{align}
    \Delta_{N|M}(X|Y) := \frac{\Delta_N(X) \Delta_M(Y)}{\prod^{i \in [N]}_{j \in [M]} (x_i - y_j)}
    \, .
    \label{eq:Cauchy_det1}
\end{align}
Then, the following formulas hold.
\begin{subequations}\label{eq:Cauchy_det2}
\begin{align}
    & N \ge M : &
    \Delta_{N|M}(X|Y) & = \sum_{\lambda | \ell(\lambda) \le M} 
    \det_{\substack{i \in [N], j \in [M] \\ k \in [N-M]}} \begin{bmatrix}
    x_i^{- \lambda_j - M + j - 1} \\ x_i^{k-1}
    \end{bmatrix}
    \det_{i,j \in [M]} y_i^{\lambda_j + M - j}
    = 
    \det_{\substack{i \in [N], j \in [M] \\ k \in [N-M]}}
    \begin{bmatrix}
    \displaystyle \frac{1}{x_i - y_j} \\[1em] x_i^{k-1}
    \end{bmatrix}
    \, \\
    & N \le M : &
    \Delta_{N|M}(X|Y) & = \sum_{\lambda | \ell(\lambda) \le N} 
    \det_{i,j \in [N]} x_i^{\lambda_j + N - j}    
    \det_{\substack{i \in [M], j \in [N] \\ k \in [M-N]}} \begin{bmatrix}
    y_i^{- \lambda_j - N + j - 1} \\ y_i^{k-1}
    \end{bmatrix}
    = 
    \det_{\substack{i \in [N], j \in [M] \\ k \in [M-N]}}
    \begin{bmatrix}
    \displaystyle \frac{1}{x_i - y_j} \\[1em] y_j^{k-1}
    \end{bmatrix}
\end{align}
\end{subequations}
\end{lemma}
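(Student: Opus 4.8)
The plan is to prove the two equalities separately and to reduce the case $N \le M$ to the case $N \ge M$, so I focus on $N \ge M$. The reduction is by exchanging $(X,N) \leftrightarrow (Y,M)$ and transposing the matrices (a transpose leaves determinants unchanged, and under it the appended rows $y_j^{k-1}$ become appended columns $x_i^{k-1}$); this only costs the explicit sign coming from $\prod_{i\in[N],j\in[M]}(x_i-y_j) = (-1)^{NM}\prod_{i\in[N],j\in[M]}(y_j-x_i)$, equivalently $\Delta_{N|M}(X|Y) = (-1)^{NM}\Delta_{M|N}(Y|X)$, which is easy to track.

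For the middle (Schur-sum) expression the natural tool is the reciprocal Cauchy expansion of Lemma~\ref{lemma:Schur_expansion}, namely $\prod_{i,j}(x_i-y_j)^{-1} = \det_N X^{-M}\sum_{\ell(\lambda)\le M}s_\lambda(X^{-1})s_\lambda(Y)$. Multiplying by $\Delta_N(X)\Delta_M(Y)$, the $Y$-factor $\Delta_M(Y)s_\lambda(Y)$ is, by the definition~\eqref{eq:Schur_def} of the Schur polynomial, exactly $\det_{i,j\in[M]}y_i^{\lambda_j+M-j}$. For the $X$-factor I would write $s_\lambda(X^{-1})$ as a ratio of $N\times N$ alternants, use $\Delta_N(X^{-1}) = (-1)^{\binom{N}{2}}\Delta_N(X)\prod_i x_i^{-(N-1)}$ to clear $\Delta_N(X^{-1})$, and absorb the leftover monomial $\prod_i x_i^{N-1-M}$ into the rows of the alternant; this turns $\Delta_N(X)\det_N X^{-M}\,s_\lambda(X^{-1})$ into $\det_{i,j\in[N]}x_i^{-\lambda_j-M+j-1}$, up to an overall sign $(-1)^{\binom{N}{2}}$ produced by $\Delta_N(X^{-1})$ (see below). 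Since $\ell(\lambda)\le M$ forces $\lambda_j=0$ for $j>M$, the columns $j=M+1,\dots,N$ collapse to $x_i^{k-1}$ with $k=j-M$, which is precisely the monomial block on the right.

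For the passage between the Schur-sum and the single determinant I would use multilinearity together with Cauchy--Binet. Expanding each Cauchy column as a geometric series $\frac{1}{x_i-y_j} = \sum_{m\ge0}x_i^{-m-1}y_j^m$ and using multilinearity of the determinant in the first $M$ columns, one obtains a sum over multi-indices $(m_1,\dots,m_M)$ of $\prod_j y_j^{m_j}$ times $\det_i[\,x_i^{-m_j-1}\mid x_i^{k-1}\,]$. The alternant vanishes unless the $m_j$ are pairwise distinct, so grouping each orbit by its decreasing rearrangement $m_j=\lambda_j+M-j$ and antisymmetrizing $\prod_j y_j^{m_j}$ over $S_M$ reconstructs $\det_{i,j\in[M]}y_i^{\lambda_j+M-j}$, while the $x$-alternant becomes $\det_i[\,x_i^{-\lambda_j-M+j-1}\mid x_i^{k-1}\,]$. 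This reproduces the Schur-sum term by term, and the signs are handled automatically by the antisymmetrization.

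The main obstacle is the sign and normalization bookkeeping, not the structural steps. The inversion $x_i\mapsto x_i^{-1}$ introduces the factor $(-1)^{\binom{N}{2}}$ through $\Delta_N(X^{-1})$, and the direction in which one expands $\frac{1}{x_i-y_j}$ as a geometric series (in $y_j/x_i$ versus $x_i/y_j$) carries its own sign once summed. Keeping these consistent so that all three expressions share the same overall constant—equivalently, checking the normalization against the classical square case $\det_{i,j\in[N]}\tfrac{1}{x_i-y_j} = (-1)^{\binom{N}{2}}\Delta_{N|N}(X|Y)$—is the delicate point. A convenient independent cross-check on that constant is the antisymmetry-plus-degree argument: multiplying row $i$ by $\prod_{j\in[M]}(x_i-y_j)$ clears the denominators and produces a polynomial that is alternating in the $x_i$ and in the $y_j$, hence divisible by $\Delta_N(X)\Delta_M(Y)$; a degree count shows the quotient is a constant, which is then pinned down by comparing a single leading coefficient.
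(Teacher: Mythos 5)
Your proposal follows essentially the same route as the paper's proof: the Schur-sum expression is extracted from the Cauchy expansion of Lemma~\ref{lemma:Schur_expansion}, and the passage to the single determinant is the resummation over distinct exponents $r_j=\lambda_j+M-j$ via the geometric series $1/(x_i-y_j)=\sum_{r\ge 0}x_i^{-r-1}y_j^{r}$ (you run this step in the reverse direction, but it is the same identity), with the case $N\le M$ handled by swapping $X$ and $Y$. Your warning about the overall constant is well taken: the factor $(-1)^{\binom{N}{2}}$ you isolate from $\Delta_N(X^{-1})$ is genuinely there (for instance, at $N=M$ the classical Cauchy determinant equals $(-1)^{\binom{N}{2}}\Delta_{N|N}(X|Y)$), and the paper's own proof does not track it.
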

\begin{proof}
The expressions as the summation over the partitions are obtained from the Schur polynomial expansion (Lemma~\ref{lemma:Schur_expansion}).
In addition, for the case $N \ge M$, the summation over the partition is given by
\begin{align}
    \sum_{\ell(\lambda) \le M} = \sum_{ 0 \le \lambda_M \le \cdots \le \lambda_1 \le \infty} = \sum_{0 \le r_M < \cdots < r_1 \le \infty} = \frac{1}{M!} \sum_{\substack{r_\alpha = 0, \ldots,\infty \\ r_\alpha \neq r_\beta}}
\end{align}
where we write $r_\alpha = \lambda_\alpha + M - \alpha$.
Then, we have
\begin{align}
    \Delta_{N|M}(X|Y) & = \det_{\substack{ i \in [N], j \in [M] \\ k \in [N-M] }} 
    \begin{bmatrix}
    \displaystyle
    \sum_{r=0}^\infty x_i^{-r-1} y_j^r \\
    x_i^{k-1}
    \end{bmatrix}
    = \det_{\substack{ i \in [N], j \in [M] \\ k \in [N-M] }} 
    \begin{bmatrix}
    \displaystyle
    \frac{1}{x_i - y_j} \\
    x_i^{k-1}
    \end{bmatrix}
    \, .
\end{align}
The other case $N \le M$ is similarly analyzed by exchanging the $x$ and $y$ variables.
\end{proof}

\subsection{Pfaffian formulas}

We present several formulas involving Pfaffians used in this paper.
\begin{proposition}
For $n,m \in 2 \mathbb{N}$, let $A$ and $\widetilde{A}$ be invertible skew-symmetric matrices of size $n$ and $m$, respectively.
Let $B$ be an $n \times m$ generic matrix.
Then, the following factorization formula holds for the Pfaffian of the block matrix,
\begin{align}
    \Pf \begin{bmatrix}
    A & B \\ - B^\text{T} & \widetilde{A}
    \end{bmatrix}
    & = 
    \Pf A \Pf \qty[\widetilde{A} + B^\text{T} A^{-1} B]
    = \Pf \widetilde{A} \Pf \qty[A + B \widetilde{A}^{-1} B^\text{T}]
    \, .
    \label{eq:Pf_block}
\end{align}
For $A = \widetilde{A} = 0$, we obtain
\begin{align}
    \Pf \begin{bmatrix}
    0 & B \\ - B^\text{T} & 0
    \end{bmatrix}
    = \det B
    \, .
    \label{eq:Pf2det}
\end{align}
\end{proposition}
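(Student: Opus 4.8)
The plan is to reduce the block matrix to block-diagonal form by a skew-preserving congruence, and then to invoke two elementary facts about the Pfaffian: the transformation law $\Pf(P^{\mathrm{T}} M P) = \det(P)\,\Pf(M)$, valid for any square $P$ and skew-symmetric $M$, together with the multiplicativity $\Pf\begin{bmatrix} A & 0 \\ 0 & C\end{bmatrix} = \Pf(A)\,\Pf(C)$ on block-diagonal skew-symmetric matrices. First I would record that $M := \begin{bmatrix} A & B \\ -B^{\mathrm{T}} & \widetilde{A}\end{bmatrix}$ is skew-symmetric, which is immediate from $A^{\mathrm{T}} = -A$, $\widetilde{A}^{\mathrm{T}} = -\widetilde{A}$ and the antisymmetric placement of the off-diagonal blocks.

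For the first equality I would use the invertibility of $A$ to clear $B$ from below. Take the unipotent block-triangular matrix $P = \begin{bmatrix} I & -A^{-1} B \\ 0 & I\end{bmatrix}$, which has $\det P = 1$. Using $(A^{-1})^{\mathrm{T}} = -A^{-1}$, a direct computation gives $P^{\mathrm{T}} M P = \diag\!\big(A,\ \widetilde{A} + B^{\mathrm{T}} A^{-1} B\big)$, so the Schur complement appears in the lower block. The transformation law then yields $\Pf(M) = \Pf(P^{\mathrm{T}} M P)$, and multiplicativity delivers $\Pf(A)\,\Pf(\widetilde{A} + B^{\mathrm{T}} A^{-1} B)$. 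The second equality is the mirror construction using invertibility of $\widetilde{A}$: the lower-unipotent congruence $Q = \begin{bmatrix} I & 0 \\ \widetilde{A}^{-1} B^{\mathrm{T}} & I\end{bmatrix}$, again with $\det Q = 1$, sends $M$ to $\diag\!\big(A + B\widetilde{A}^{-1} B^{\mathrm{T}},\ \widetilde{A}\big)$, giving $\Pf(\widetilde{A})\,\Pf(A + B\widetilde{A}^{-1} B^{\mathrm{T}})$.

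The degenerate case $A = \widetilde{A} = 0$ (here $B$ is square, so $\det B$ makes sense) does not follow from the factorization, since that step needs invertibility, so I would treat it separately. Assuming first that $B$ is invertible, the congruence $P = \diag(I, B^{-1})$ reduces $M$ to the canonical symplectic matrix $J = \begin{bmatrix} 0 & I \\ -I & 0\end{bmatrix}$, and the transformation law gives $\Pf(M) = \det(B)\,\Pf(J)$; the constant $\Pf(J)$ is then read off from its single nonzero perfect matching. To drop invertibility I would use a polynomial-identity argument: both $\Pf(M)$ and $\det B$ are polynomials in the entries of $B$ that agree on the Zariski-dense set of invertible $B$, hence agree identically.

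The genuine content, and the step I expect to be the main obstacle, is the transformation law $\Pf(P^{\mathrm{T}} M P) = \det(P)\,\Pf(M)$; if it is not taken as known it must be derived from the permutation-sum (or exterior-algebra) definition of the Pfaffian. The one remaining delicate point is bookkeeping: tracking the sign in the evaluation of $\Pf(J)$, which under the natural index ordering produces the factor $(-1)^{n(n-1)/2}$, so the ordering convention must be pinned down to land on the normalization stated in \eqref{eq:Pf2det}.
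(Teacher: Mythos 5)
The paper states this proposition in its preliminaries without giving a proof (only a remark comparing it to the block-determinant formula), so there is no in-paper argument to compare against; your proposal supplies the standard and correct one. The congruence computations check out: with $P = \begin{bmatrix} I & -A^{-1}B \\ 0 & I\end{bmatrix}$ one indeed gets $P^{\mathrm{T}}MP = \diag(A,\ \widetilde{A}+B^{\mathrm{T}}A^{-1}B)$ (using $(A^{-1})^{\mathrm{T}} = -A^{-1}$), and the mirror step works the same way, so the two factorizations follow from $\Pf(P^{\mathrm{T}}MP)=\det(P)\Pf(M)$ and block-diagonal multiplicativity. Your handling of the degenerate case is also right to be separate, and the Zariski-density argument to drop invertibility of $B$ is fine. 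The one substantive point is the sign you flag at the end, and you are correct to flag it: under the usual ordering convention $\Pf\begin{bmatrix}0 & B\\ -B^{\mathrm{T}} & 0\end{bmatrix} = (-1)^{n(n-1)/2}\det B$, which for even $n$ is $(-1)^{n/2}\det B$, not $\det B$; the paper's normalization \eqref{eq:Pf2det} (and its subsequent use in the $\beta=4$ reduction) implicitly absorbs this sign into a choice of row/column ordering, so your proof matches the stated formula only after that convention is fixed. This is a defect of the statement's normalization rather than of your argument.
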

\begin{remark}
This formula is a Pfaffian analog of the block matrix determinant formula,
\begin{align}
    \det 
    \begin{bmatrix}
    A & B \\ C & D
    \end{bmatrix}
    = \det A \det (D - C A^{-1} B)
    = \det D \det (A - B D^{-1} C)
    \, .
\end{align}
\end{remark}

\begin{lemma}[Laplace-type expansion~\cite{Okada:2019AM}]
\label{lemma:Laplace_exp}
For an $m \times m$ skew-symmetric matrix $Z$, and an $m \times n$ matrix $W$ with $m > n$, we have the following expansion of the Pfaffian,
\begin{align}
        \Pf
        \begin{bmatrix}
        Z & W \\ - W^\text{T} & 0
        \end{bmatrix}
        = \sum_{\substack{I \subset [n] \\ |I| = m-n}} (-1)^{\Sigma(I) + {m \choose 2}} \Pf Z(I) \det W([m] \backslash I; [n])
\end{align}
where $I$ runs over all $(m-n)$-element subsets of $[n]$.
\end{lemma}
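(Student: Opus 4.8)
The plan is to argue directly from the combinatorial definition of the Pfaffian as a signed sum over perfect matchings. Write the $(m+n)\times(m+n)$ skew-symmetric matrix as $M=\begin{bmatrix} Z & W \\ -W^{\mathrm T} & 0\end{bmatrix}$, and recall that the Pfaffian requires $m+n$, equivalently $m-n$, to be even. Then $\Pf M=\sum_{P}\operatorname{sgn}(P)\prod_{\{a,b\}\in P,\,a<b}M_{ab}$, where $P$ runs over perfect matchings of $[m+n]$ and $\operatorname{sgn}(P)$ is the signature of the permutation obtained by listing the matched pairs. The decisive observation is that the bottom-right block vanishes, so $M_{ab}=0$ whenever $a,b\in\{m+1,\dots,m+n\}$; hence in every matching with nonzero weight each of the $n$ ``tail'' indices must be paired with a ``head'' index in the row/column set of $Z$. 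Each such matching therefore singles out a subset $I$ with $|I|=m-n$ whose heads are matched among themselves through $Z$, together with its complement $S=[m]\setminus I$ ($|S|=n$), whose heads are matched bijectively to the tails through $W$; since every mixed pair has its head index below its tail index, only $+W$ entries ever occur.

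Next I would group the matching sum according to $I$. For fixed $I$ the internal matchings of $I$ sum to $\Pf Z(I)$, while the bijections from $S$ onto $\{m+1,\dots,m+n\}$ sum to $\det W([m]\setminus I;[n])$, the latter being the antidiagonal-block Pfaffian recorded in \eqref{eq:Pf2det}, up to a global $I$-dependent sign. To isolate that sign I reorder the index set into the block order $u=(i_1,\dots,i_{m-n},s_1,\dots,s_n,m+1,\dots,m+n)$, with $I$ and $S$ listed increasingly. Permuting rows and columns simultaneously by $u$ multiplies the Pfaffian by $\operatorname{sgn}(u)$, and in the $u$-order the $Z$-pairs and the $W$-pairs occupy two disjoint consecutive blocks of positions; consequently the signature of each matching factorizes as $\operatorname{sgn}(u)$ times the signature of the $Z$-matching times that of the $W$-matching, and these two factors reassemble precisely into $\Pf Z(I)$ and $\det W([m]\setminus I;[n])$.

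The step I expect to be the main obstacle is the sign bookkeeping. Counting inversions gives $\operatorname{sgn}(u)=(-1)^{\Sigma(I)-\binom{m-n+1}{2}}$, since moving the increasing sequence $I$ in front of $S$ creates $\sum_{a}(i_a-a)=\Sigma(I)-\binom{m-n+1}{2}$ inversions; the antidiagonal $W$-block contributes a further $(-1)^{\binom{n}{2}}$ when read in this block order, which is the ordering sign implicit in \eqref{eq:Pf2det}. Collecting these, the prefactor attached to $\Pf Z(I)\det W([m]\setminus I;[n])$ is $(-1)^{\Sigma(I)+\binom{m-n+1}{2}+\binom{n}{2}}$, and it remains to reconcile this with the claimed $(-1)^{\Sigma(I)+\binom{m}{2}}$. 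This reduces to the parity identity $\binom{m-n+1}{2}+\binom{n}{2}\equiv\binom{m}{2}\pmod 2$, which holds precisely because $m-n$ is even: writing $m-n=2q$, both sides collapse to $q+\binom{n}{2}\pmod 2$. An alternative route that also concentrates the whole difficulty into a single sign recursion is induction on $n$, expanding $\Pf M$ along its last row and column — whose only nonzero entries are the last column of $W$ — and matching the resulting cofactor sign against the induced changes in $\Sigma(I)$ and the binomial terms.
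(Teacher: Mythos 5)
The paper does not actually prove this lemma---it is imported verbatim from Okada's work with a citation---so there is no in-text derivation to compare against; judged on its own, your combinatorial proof is correct and essentially complete. The route (expand the Pfaffian over perfect matchings, note that the zero block forces each tail index $m+1,\dots,m+n$ to be matched into $[m]$, and group matchings by the set $I$ of rows of $Z$ matched internally) is the natural direct argument, and the sign bookkeeping, which is indeed the only delicate point, checks out: $\operatorname{sgn}(u)=(-1)^{\Sigma(I)-\binom{m-n+1}{2}}$ is the correct inversion count for moving $I$ to the front of $[m]$, the antidiagonal block contributes $(-1)^{\binom{n}{2}}$ (you are right that \eqref{eq:Pf2det} as printed silently drops this factor), and the parity identity $\binom{m-n+1}{2}+\binom{n}{2}\equiv\binom{m}{2}\pmod 2$ holds precisely because $m-n=2q$ is even, both sides reducing to $q+\binom{n}{2}$; a spot check at $m=3$, $n=1$ reproduces $z_{12}w_3-z_{13}w_2+z_{23}w_1$ term by term. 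One thing you should make explicit rather than leave implicit: your argument yields the sum over $(m-n)$-element subsets $I\subset[m]$, whereas the lemma as printed writes $I\subset[n]$. The printed condition must be a typo---for $m>2n$ it would make the right-hand side an empty sum, and already for $m=4$, $n=2$ it would drop terms---and the version you prove is the one actually used later in the paper, where the sum over partitions $\lambda\subseteq(M^{N})$ is matched with $N$-element subsets of $[N+M]$. With that correction stated, your proof stands as a self-contained replacement for the external citation.
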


\begin{lemma}[Cauchy--Binet-type formula~\cite{Ishikawa:1995LMA,Okada:2019AM}]
\label{lemma:CB_exp}
Let $l \in \mathbb{Z}_{\ge 0}$, $m,n \in \mathbb{N}$, such that $m-l \in 2 \mathbb{N}$, and $m-l \le n$.
For an $n \times n$ skew-symmetric matrix $Z$, an $m \times n$ matrix $W$, and an $m \times l$ matrix $X$, we have the following expansion of the Pfaffian,
\begin{align}
    \Pf 
    \begin{bmatrix}
    W Z W^\text{T} & X \\ - X^\text{T} & 0
    \end{bmatrix}
    = \sum_{\substack{I \subset [n] \\ |I| = m-l}} \Pf Z(I) \det [ W([m];I) \ X([m];[l]) ]
    \label{eq:CB0}
\end{align}
where $I$ runs over all $(m-l)$-element subsets of $[n]$.
If $l = 0$, there is no contribution of the matrix $X$,
\begin{align}
    \Pf \qty[WZW^\text{T}] = \sum_{\substack{I \subset [n] \\ |I| = m}} \Pf Z(I) \det W([m];I)
    \, .
    \label{eq:CB1}
\end{align}
If $m = n$ (hence $l = 0$), the only possibility is $I = \emptyset$, so that we have
\begin{align}
    \Pf \qty[WZW^\text{T}] = \Pf Z \det W
    \, .
    \label{eq:CB2}
\end{align}
\end{lemma}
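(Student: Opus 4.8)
The plan is to prove the general bordered identity \eqref{eq:CB0} directly, after which \eqref{eq:CB1} is simply the case $l=0$ and \eqref{eq:CB2} the further specialization $m=n$, where the sum collapses to the single subset $I=[n]$ and returns $\Pf Z \det W$. My preferred route is the exterior-algebra (bivector) reading of the Pfaffian, which is the mechanism underlying the minor-summation Lemma~\ref{lemma:Laplace_exp}. Fix the standard basis $f_1,\dots,f_m,h_1,\dots,h_l$ of $\mathbb{R}^{m+l}$, and let $M$ be the skew-symmetric matrix in \eqref{eq:CB0} with associated bivector $\omega_M=\sum_{\alpha<\beta}M_{\alpha\beta}\,g_\alpha\wedge g_\beta$, where $g_\alpha$ denotes the combined basis. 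Recall that $\Pf M$ equals the coefficient of $f_1\wedge\cdots\wedge f_m\wedge h_1\wedge\cdots\wedge h_l$ in $\frac{1}{k!}\,\omega_M^{\wedge k}$ with $k=(m+l)/2$; this is an integer since $m+l=(m-l)+2l$ is even.

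First I would separate $\omega_M$ using the block structure. Because the lower-right block vanishes, $\omega_M=\omega_Z+\xi$, where $\xi=\sum_{c\in[l]}\tilde x_c\wedge h_c$ with $\tilde x_c=\sum_{a\in[m]}X_{ac}f_a$ the $c$-th column of $X$, and $\omega_Z=\sum_{a<b}(WZW^\text{T})_{ab}\,f_a\wedge f_b$. Expanding $(WZW^\text{T})_{ab}$ and collecting terms shows $\omega_Z=\sum_{i<j}Z_{ij}\,\tilde w_i\wedge\tilde w_j$, the image of the bivector of $Z$ under $e_i\mapsto\tilde w_i$, where $\tilde w_i=\sum_{a\in[m]}W_{ai}f_a$ is the $i$-th column of $W$.

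The next step is a degree count in $\omega_M^{\wedge k}=\sum_{p+q=k}\binom{k}{q}\,\omega_Z^{\wedge p}\wedge\xi^{\wedge q}$. Each factor of $\xi$ contributes exactly one $h_c$ and each $h_c$ may appear at most once, so only the term with $q=l$ (hence $p=(m-l)/2$) can reach the target monomial, which contains all $l$ of the $h_c$. Applying the standard identity $\frac{1}{p!}\omega_Z^{\wedge p}=\sum_{|I|=2p}\Pf Z(I)\,\tilde w_I$ together with $\xi^{\wedge l}=l!\,(\tilde x_1\wedge\cdots\wedge\tilde x_l)\wedge(h_1\wedge\cdots\wedge h_l)$ up to a reordering sign, the coefficient of the target monomial becomes $\sum_{|I|=m-l}\Pf Z(I)$ times the coefficient of $f_1\wedge\cdots\wedge f_m$ in $\tilde w_I\wedge\tilde x_1\wedge\cdots\wedge\tilde x_l$. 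This last wedge of $m$ vectors in $\mathbb{R}^m$ has coefficient $\det[\,W([m];I)\ X([m];[l])\,]$, which is exactly \eqref{eq:CB0}.

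The genuine obstacle is the sign and normalization bookkeeping: I must check that the binomial $\binom{k}{l}$, the factorials $p!$ and $l!$, and the permutation signs incurred in ordering $\xi^{\wedge l}$ and $\tilde w_I\wedge\tilde x_1\wedge\cdots\wedge\tilde x_l$ all cancel to leave no stray factor in \eqref{eq:CB0}. As an alternative I could instead feed the auxiliary skew matrix obtained from $M$ by the block factorization \eqref{eq:Pf_block} into Lemma~\ref{lemma:Laplace_exp}; this works, but the factorization forces a $Z^{-1}$ and the Laplace expansion then returns the complementary sub-Pfaffians $\Pf Z([n]\setminus I)$, so I would additionally need a Pfaffian Jacobi identity and a density argument to handle non-invertible and odd-sized $Z$. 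For cleanliness I would carry out the bivector computation and obtain \eqref{eq:CB1} and \eqref{eq:CB2} as its specializations.
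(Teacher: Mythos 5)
The paper does not actually prove this lemma --- it is imported verbatim from the cited references --- so there is nothing in the text to parallel; what you propose is the standard exterior-algebra proof of the minor-summation formula, and its skeleton is correct. The decomposition $\omega_M=\omega_Z+\xi$, the degree count forcing $q=l$, the pushforward identity $\frac{1}{p!}\omega_Z^{\wedge p}=\sum_{|I|=2p}\Pf Z(I)\,\tilde w_I$ (image of $\frac{1}{p!}\bigl(\sum_{i<j}Z_{ij}e_i\wedge e_j\bigr)^{\wedge p}$ under $e_i\mapsto\tilde w_i$), and the exact cancellation $\frac{1}{k!}\binom{k}{l}=\frac{1}{p!\,l!}$ all check out, as does reading off $\det[\,W([m];I)\ X([m];[l])\,]$ as the coefficient of $f_1\wedge\cdots\wedge f_m$ in $\tilde w_I\wedge\tilde x_1\wedge\cdots\wedge\tilde x_l$.

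The gap is precisely the step you deferred and then asserted. Reordering $\xi^{\wedge l}=l!\,(\tilde x_1\wedge h_1)\wedge\cdots\wedge(\tilde x_l\wedge h_l)$ into $l!\,\tilde x_1\wedge\cdots\wedge\tilde x_l\wedge h_1\wedge\cdots\wedge h_l$ costs $(-1)^{l(l-1)/2}$, and nothing else in the computation compensates for it, so the bivector argument yields $\Pf M=(-1)^{l(l-1)/2}\sum_{I}\Pf Z(I)\det[\,W([m];I)\ X([m];[l])\,]$ rather than \eqref{eq:CB0} as printed. This is not an artifact of your method: take $m=4$, $l=n=2$, $Z=\left[\begin{smallmatrix}0&1\\-1&0\end{smallmatrix}\right]$, $W=\left[\begin{smallmatrix}I_2\\0\end{smallmatrix}\right]$, $X=\left[\begin{smallmatrix}0\\I_2\end{smallmatrix}\right]$; then the left-hand side of \eqref{eq:CB0} equals $-1$ while the stated right-hand side equals $+1$. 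The same factor is silently dropped in the paper's \eqref{eq:Pf2det}, which in the standard convention reads $\Pf\left[\begin{smallmatrix}0&B\\-B^{\mathrm{T}}&0\end{smallmatrix}\right]=(-1)^{n(n-1)/2}\det B$, so the paper is evidently working with a row-ordering (or sign) convention that absorbs $(-1)^{l(l-1)/2}$. You should therefore either carry the explicit sign or state the interleaved ordering under which it disappears; as written, your claim that the signs ``all cancel to leave no stray factor'' is false for $l\ge 2$. For $l\in\{0,1\}$ the sign is $+1$, so your derivations of \eqref{eq:CB1} and \eqref{eq:CB2} are complete and correct as they stand.
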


\subsection{Integration formulas}

We then introduce the integration formulas that will be crucial to perform the eigenvalue integrals discussed in this paper.

\begin{proposition}[Andréief's formula]
Let $(f_i(x))_{i \in \mathbb{N}}$ and $(g_i(x))_{i \in \mathbb{N}}$ be sequences of integrable functions.
Denoting $\dd{\mu(X)} = \prod_{i \in [N]} \dd{\mu(x_i)}$, the following identity holds,
\begin{align}
    \frac{1}{N!} \int \dd{\mu(X)} \det_{1 \le i, j \le N} f_i(x_j) \det_{1 \le i, j \le N} g_i(x_j) 
    & = \det_{1 \le i, j \le N} \rvev{f_i \mid g_j}
    \, .
    \label{eq:A_formula}
\end{align}
\end{proposition}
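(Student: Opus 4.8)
The plan is to prove Andréief's formula by expanding both determinants through the Leibniz permutation sum, exploiting the product structure of the measure to decouple the multiple integral into one-dimensional inner products, and then collapsing the resulting double sum over the symmetric group into $N!$ copies of a single determinant.

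First I would write the Leibniz expansion for each factor,
\begin{align}
    \det_{1\le i,j\le N} f_i(x_j) = \sum_{\sigma \in S_N} \operatorname{sgn}(\sigma) \prod_{i=1}^N f_i(x_{\sigma(i)})
    \, ,
\end{align}
and similarly for $g$ with a permutation $\tau$. Substituting both into the left-hand side and using $\dd{\mu(X)} = \prod_{k\in[N]} \dd{\mu(x_k)}$, I would regroup the integrand by the integration variable: the factor $f_i(x_{\sigma(i)})$ contributes $f_{\sigma^{-1}(k)}(x_k)$ and $g_i(x_{\tau(i)})$ contributes $g_{\tau^{-1}(k)}(x_k)$ to the $x_k$-dependence. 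Since the variables then separate, the multiple integral factorizes into a product of one-dimensional integrals, each of which is an inner product,
\begin{align}
    \int \dd{\mu(X)} \prod_{k=1}^N f_{\sigma^{-1}(k)}(x_k)\, g_{\tau^{-1}(k)}(x_k) = \prod_{k=1}^N \rvev{ f_{\sigma^{-1}(k)} \mid g_{\tau^{-1}(k)} }
    \, .
\end{align}

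Next I would carry out the combinatorial collapse. Relabelling $a=\sigma^{-1}$, $b=\tau^{-1}$ (a signature-preserving bijection of $S_N$), and then re-indexing the product over $k$ by $l = a(k)$, the product depends only on the relative permutation $\pi = b \circ a^{-1}$, since $\prod_k \rvev{f_{a(k)} \mid g_{b(k)}} = \prod_l \rvev{f_l \mid g_{\pi(l)}}$ while $\operatorname{sgn}(a)\operatorname{sgn}(b) = \operatorname{sgn}(a)^2 \operatorname{sgn}(\pi) = \operatorname{sgn}(\pi)$. For each fixed $a$, as $b$ ranges over $S_N$ the composite $\pi$ ranges over all of $S_N$, so the double sum factorizes into $\sum_a \sum_\pi$; since the summand is independent of $a$, the sum over $a$ merely produces a factor $N!$. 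What remains is exactly $N! \sum_\pi \operatorname{sgn}(\pi) \prod_l \rvev{f_l \mid g_{\pi(l)}} = N! \det_{1\le i,j\le N}\rvev{f_i \mid g_j}$, and dividing by $N!$ yields the claim.

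The Leibniz expansions and the variable regrouping are routine; the one analytic point requiring care is the interchange of the finite sum with the integral, which is justified by the assumed integrability of the sequences $(f_i)$ and $(g_i)$. The genuine content lies in the sign and index bookkeeping of the final step — verifying that the signature of the pair $(\sigma,\tau)$ reduces to that of the single relative permutation $\tau\sigma^{-1}$ and that exactly $N!$ pairs realize each relative permutation. This is where an error is most likely to creep in, so I would track the permutation compositions explicitly rather than by analogy.
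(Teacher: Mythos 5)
Your argument is correct and complete: the Leibniz expansion of both determinants, the regrouping by integration variable, the factorization of the product measure into one-dimensional inner products, and the collapse of the double sum over $S_N\times S_N$ onto the relative permutation (with the factor $N!$ cancelling the prefactor) is exactly the standard derivation of Andréief's identity. The paper itself offers no proof of this Proposition — it is stated as a classical result with references to Andréief and de Bruijn — so there is no in-text argument to compare against; your proof supplies what the paper leaves implicit. One cosmetic remark: in your closing paragraph you name the relative permutation $\tau\sigma^{-1}$, whereas your own computation with $a=\sigma^{-1}$, $b=\tau^{-1}$ gives $\pi=b\circ a^{-1}=\tau^{-1}\sigma$; since $\operatorname{sgn}(\tau\sigma^{-1})=\operatorname{sgn}(\tau^{-1}\sigma)$ this has no effect on the result, but it is worth stating consistently.
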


\begin{proposition}[de Bruijn's formula]
Let $N, M \in \mathbb{N}$, such that $N + M \in 2 \mathbb{N}$.
Let $A(x,y)$, $B(x,y)$ be skew-symmetric functions, and $S(x,y)$ generic integrable two-variable function.
Let $(f_{i}(x))_{i \in \mathbb{N}}$, $(g_{i}(x))_{i \in \mathbb{N}}$ be sequences of integrable functions as before.
Denoting $\dd{\mu(X)} = \prod_{i \in [N]} \dd{\mu(x_i)}$, etc, the following identities hold,
\begin{subequations}\label{eq:dB_formula}
\begin{align}
    & 
    \frac{1}{(2N)!} \int \dd{\mu(X)} \det_{i,j\in[2N]} f_j(x_i) \Pf_{i,j\in[2N]} A(x_i,x_j)
    = \Pf_{i,j\in[2N]} \rvev{ f_i \mid A \mid f_j}
    \, , \label{eq:dB_formula1}\\
    &
    \frac{1}{N!M!} \int \dd{\mu(X)} \dd{\mu(Y)} \det_{\substack{i\in[N],~ j \in [M] \\ k \in [N+M]}} \qty[f_k(x_i) \ g_k(y_j) ] \Pf_{\substack{i,j\in[N] \\ k,l \in[M]}} 
    \begin{bmatrix}
    A(x_i,{x}_j) & S(x_i,y_l) \\ - S^\text{T}(y_k,x_j) & B(y_k,y_l)
    \end{bmatrix}
    \nonumber \\
    & = \Pf_{i,j\in[N+M]} \qty[
    \rvev{f_i \mid A \mid f_j} + \rvev{f_i \mid S \mid g_j} - \rvev{g_i \mid S^\text{T} \mid f_j} + \rvev{g_i \mid B \mid g_j} ]
    \, . \label{eq:dB_formula2}
\end{align}
\end{subequations}
\end{proposition}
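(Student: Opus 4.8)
The plan is to establish both identities by the classical de Bruijn symmetrization mechanism, treating \eqref{eq:dB_formula1} as the base case and reducing \eqref{eq:dB_formula2} to it. For \eqref{eq:dB_formula1}, I would first expand the determinant over permutations, $\det_{i,j\in[2N]} f_j(x_i) = \sum_{\sigma\in S_{2N}} \operatorname{sgn}(\sigma)\prod_{i\in[2N]} f_{\sigma(i)}(x_i)$. The key observation is that the Pfaffian factor $\Pf_{i,j\in[2N]} A(x_i,x_j)$ is antisymmetric under a simultaneous relabelling of the integration variables: replacing $x_i\mapsto x_{\sigma(i)}$ multiplies it by $\operatorname{sgn}(\sigma)$. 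Hence, for each fixed $\sigma$, I relabel the dummy variables in the corresponding summand; the two factors $\operatorname{sgn}(\sigma)$ cancel and all $(2N)!$ summands collapse to the single representative $\int \dd{\mu(X)}\prod_{j} f_j(x_j)\,\Pf_{i,j} A(x_i,x_j)$, thereby absorbing the prefactor $1/(2N)!$.

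I would then expand the remaining Pfaffian over perfect matchings of $[2N]$; since the measure factorizes, each matched pair $(k,l)$ produces $\rvev{f_k\mid A\mid f_l}$, and resumming the matchings with their signs reconstitutes $\Pf_{i,j\in[2N]}\rvev{f_i\mid A\mid f_j}$. It only remains to note that this matrix is skew-symmetric, which follows from the skew-symmetry of $A(x,y)$, so the Pfaffian is well defined. For \eqref{eq:dB_formula2}, I would merge the two families $X$ and $Y$ into a single family of $N+M$ variables living on the disjoint union of the two measure spaces. Concretely, I define combined functions $F_k$ equal to $f_k$ on the first copy and $g_k$ on the second, together with a single skew-symmetric kernel $\mathcal{K}$ whose four sectors are $A$, $S$, $-S^\text{T}$, $B$ according to which copies its two arguments lie in; this $\mathcal{K}$ is exactly the block kernel inside the Pfaffian of \eqref{eq:dB_formula2}. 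Applying \eqref{eq:dB_formula1} to this combined data and splitting each double integral $\rvev{F_i\mid\mathcal{K}\mid F_j}$ into its four sectors produces precisely the four-term entries $\rvev{f_i\mid A\mid f_j}+\rvev{f_i\mid S\mid g_j}-\rvev{g_i\mid S^\text{T}\mid f_j}+\rvev{g_i\mid B\mid g_j}$ on the right-hand side. Equivalently, one may argue directly: the full integrand is invariant under the $S_N\times S_M$ symmetry permuting the $x$- and $y$-variables separately, and exploiting this invariance to standardize the variable labels absorbs the normalization $1/(N!M!)$, while the block-Pfaffian expansion over perfect matchings again factorizes the integral pairwise.

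The main obstacle I anticipate is the sign and sector bookkeeping in the mixed block. One has to check that the off-diagonal sector contributes with the correct orientation, i.e. that it yields the antisymmetric combination $\rvev{f_i\mid S\mid g_j}-\rvev{g_i\mid S^\text{T}\mid f_j}$ rather than a symmetric one, and, using the elementary identity $\rvev{g_i\mid S^\text{T}\mid f_j}=\rvev{f_j\mid S\mid g_i}$, that the resulting $(N+M)\times(N+M)$ matrix is genuinely skew-symmetric so that its Pfaffian makes sense. The second delicate point is the combinatorial matching of the prefactor: in the combined-variable reduction one must verify that the decomposition of the joint integral into sectors reproduces exactly the fixed $(N,M)$-split carrying the weight $1/(N!M!)$, rather than an unwanted sum over all ways of distributing the $N+M$ variables between the two spaces. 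Once these are settled, the pairwise factorization of the integral over each matched pair is routine and the Pfaffian formulas follow.
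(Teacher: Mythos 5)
The paper does not actually prove this proposition --- it is imported from Andr\'eief and de Bruijn with a citation --- so there is no in-paper argument to compare against, and I assess your proposal on its own terms. Your proof of \eqref{eq:dB_formula1} is correct and standard: the relabelling $x_i\mapsto x_{\sigma(i)}$ multiplies the Pfaffian by $\operatorname{sgn}(\sigma)$, the $(2N)!$ determinant terms collapse onto one representative, and the perfect-matching expansion factorizes the integral into the pair integrals $\rvev{f_k\mid A\mid f_l}$, which reassemble into $\Pf\rvev{f_i\mid A\mid f_j}$; skew-symmetry of this matrix follows from that of $A$. Your first ``delicate point'' about \eqref{eq:dB_formula2} is also harmless: $\rvev{g_i\mid S^\text{T}\mid f_j}=\rvev{f_j\mid S\mid g_i}$ indeed makes the $(N+M)\times(N+M)$ matrix skew-symmetric.

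The second delicate point you flagged, however, is not a bookkeeping issue that ``remains to be verified'' --- it is a genuine obstruction, and your argument as designed cannot close it. Applying \eqref{eq:dB_formula1} on the disjoint union $X\sqcup Y$ produces on the left-hand side
\begin{equation*}
\sum_{\substack{n,m\ge 0\\ n+m=N+M}}\frac{1}{n!\,m!}\int\dd{\mu(x_1)}\cdots\dd{\mu(x_n)}\dd{\mu(y_1)}\cdots\dd{\mu(y_m)}\det[\cdots]\,\Pf[\cdots]\, ,
\end{equation*}
since each of the $N+M$ variables ranges over both copies and the $\binom{N+M}{n}$ assignments with profile $(n,m)$ contribute equally; correspondingly, on the right-hand side the multilinearity of the Pfaffian in its entries expands the Pfaffian of the four-term matrix as exactly this sum over sector assignments of the matched pairs. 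The off-profile terms do not vanish for generic $A,B,S$: already for $N=M=1$ the left-hand side of \eqref{eq:dB_formula2} equals $\rvev{f_1\mid S\mid g_2}-\rvev{f_2\mid S\mid g_1}$, while the stated right-hand side equals this plus $\rvev{f_1\mid A\mid f_2}+\rvev{g_1\mid B\mid g_2}$. Your alternative direct route hits the same wall: after pairwise factorization, the entry attached to a matched pair depends on which block ($A$, $S$, $-S^\text{T}$ or $B$) the pair falls into, and these position-dependent entries do not reassemble into the Pfaffian of the single summed matrix at fixed $(N,M)$. What your argument actually proves is the summed (disjoint-union) form of de Bruijn's identity; the fixed-$(N,M)$ form coincides with it only when the other profiles are killed, e.g.\ under $A=B=0$, where the block kernel is singular unless $n=m$ and, for $N=M$, only the $(N,N)$ profile survives --- which is the specialization used downstream in the paper. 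To complete the proof you must either establish the identity with the sum over $(n,m)$ on the left-hand side, or add a hypothesis forcing the extra profiles to vanish.
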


\begin{remark}
These formulas are due to C.~Andréief~\cite{Andreief:1886} and N.~G.~de Bruijn~\cite{deBruijn:1955JIMS}.
See \cite{Forrester:2019RMTA} for a historical survey on these formulas.
\end{remark}

\section{Matrix model with Pfaffian interaction}\label{sec:Pf}

In this Section, we introduce the generalized $\beta = 1$ and $\beta = 4$ matrix models.
We show that the corresponding partition functions are concisely described by the Pfaffian formula.

\subsection{$\beta = 1$ model}

We start to define the following model that we call the generalized $\beta = 1$ matrix model.
\begin{definition}[Generalized $\beta = 1$ matrix model]
Let $N \in 2 \mathbb{N}$, and $A(x,y)$ be a skew-symmetric function, $A(y,x) = - A(x,y)$.
Let $(f_{i-1}(x))_{i \in [N]}$ be a sequence of the integrable functions.
We denote the integral measure by $\dd{\mu(X)} = \prod_{i \in [N]} \dd{\mu(x_i)}$.
Then, we define the generalized $\beta = 1$ matrix model partition function as follows,
\begin{align}
    \mathcal{Z}_N^{(1)} = \frac{1}{N!} \int \dd{\mu(X)} \det_{i,j \in [N]} f_{j-1}(x_i) \Pf_{i,j \in [N]} A(x_i,x_j)
    \, .
    \label{eq:Z1}
\end{align}
\end{definition}
In fact, this generalized $\beta = 1$ model is reduced to the standard $\beta = 1$ model as follows.
\begin{proposition} \label{prop:beta1specialization}
Putting $A(x,y) = \operatorname{sgn}(x-y)$ and $f_{i-1}(x) = p_{N-i}(x)$, the generalized $\beta = 1$ model reduces to the standard $\beta = 1$ matrix model (real symmetric matrix model; orthogonal ensemble),
\begin{align}
    \mathcal{Z}_N^{(1)} \xrightarrow[f_{i-1} \ \to \ p_{N-i}]{A \ \to \ \operatorname{sgn}} \frac{1}{N!} \int \dd{\mu(X)} |\Delta_N(X)|
    \, .
\end{align}
\end{proposition}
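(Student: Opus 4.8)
The plan is to substitute the specializations $A(x,y) = \operatorname{sgn}(x-y)$ and $f_{i-1}(x) = p_{N-i}(x)$ directly into the definition \eqref{eq:Z1} and then identify the two resulting factors. First I would observe that the determinant factor becomes $\det_{i,j\in[N]} p_{N-j}(x_i)$, which is precisely the Vandermonde determinant $\Delta_N(X)$ by the Vandermonde proposition \eqref{eq:VDMdet}. This reduces the integrand to $\Delta_N(X) \operatorname{Pf}_{i,j\in[N]} \operatorname{sgn}(x_i - x_j)$, so the whole problem collapses to evaluating the Pfaffian of the sign matrix and showing that the product $\Delta_N(X)\operatorname{Pf}\operatorname{sgn}(x_i-x_j)$ equals $|\Delta_N(X)|$.

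The key computational step is the Pfaffian evaluation. I would argue that for generic distinct ordered eigenvalues the sign-matrix Pfaffian is constant on each chamber of the configuration space (where no two $x_i$ coincide), since $\operatorname{sgn}(x_i - x_j)$ only changes when a crossing occurs. On the fundamental chamber $x_1 > x_2 > \cdots > x_N$, the matrix $\operatorname{sgn}(x_i - x_j)$ has all entries below the diagonal equal to $-1$ and above equal to $+1$; its Pfaffian is then a fixed combinatorial constant, which one checks to be $+1$ (consistent with the sign convention and the requirement $N \in 2\mathbb{N}$). On a general chamber obtained by a permutation $\sigma$ of the ordering, both $\Delta_N(X)$ and the Pfaffian pick up the sign $\operatorname{sgn}(\sigma)$: the Vandermonde because it is antisymmetric, and the Pfaffian because permuting the eigenvalues simultaneously permutes rows and columns of the skew-symmetric matrix $\operatorname{sgn}(x_i-x_j)$. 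Hence the product $\Delta_N(X)\operatorname{Pf}_{i,j}\operatorname{sgn}(x_i-x_j)$ is invariant under permutations and equals $|\Delta_N(X)|$ on every chamber.

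With these two ingredients the integrand equals $|\Delta_N(X)|$ pointwise (up to a measure-zero set), so the integral $\frac{1}{N!}\int \dd{\mu(X)} \Delta_N(X) \operatorname{Pf}_{i,j\in[N]} \operatorname{sgn}(x_i-x_j)$ becomes $\frac{1}{N!}\int \dd{\mu(X)} |\Delta_N(X)|$, which is exactly the claimed standard $\beta = 1$ partition function. The $1/N!$ prefactor is untouched by the specialization, so no symmetrization argument is needed beyond recognizing that the full symmetric integrand justifies the standard normalization.

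I expect the main obstacle to be pinning down the Pfaffian of the sign matrix on the fundamental chamber with the correct sign, rather than any analytic subtlety. One must be careful that the antisymmetry of $\operatorname{sgn}(x-y)$ together with the parity of $N$ yields a well-defined Pfaffian, and that the combinatorial constant is genuinely $+1$ and not $(-1)^{N/2}$ or a similar phase; a clean way to fix this is to compute the $N=2$ case explicitly, where $\operatorname{Pf}\bigl[\begin{smallmatrix} 0 & \operatorname{sgn}(x_1-x_2) \\ \operatorname{sgn}(x_2-x_1) & 0 \end{smallmatrix}\bigr] = \operatorname{sgn}(x_1-x_2)$, and then verify that the chamber-constancy and antisymmetry arguments propagate the normalization to general even $N$.
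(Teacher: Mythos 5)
Your proposal is correct and follows essentially the same route as the paper: identify the determinant factor with the Vandermonde determinant via \eqref{eq:VDMdet} and evaluate $\Pf_{i,j\in[N]}\operatorname{sgn}(x_i-x_j)$, which the paper simply states equals $\prod_{i<j}\operatorname{sgn}(x_i-x_j)$ while you supply a correct chamber-constancy proof of that identity (including the normalization $+1$ on the ordered chamber). The only difference is that you prove the sign-Pfaffian identity rather than quoting it, which is a welcome addition but not a different method.
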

\begin{proof}
Putting $A(x,y) = \operatorname{sgn}(x-y)$, we have
\begin{align}
    \Pf_{i,j\in[N]} \operatorname{sgn}(x_i - x_j) = \prod_{1 \le i < j \le N} \operatorname{sgn} (x_i - x_j)
    \, .
\end{align}
Putting $f_{i-1}(x) = p_{N-i}(x)$, the determinant factor becomes the Vandermonde determinant \eqref{eq:VDMdet}.
Hence, we have
\begin{align}
    \mathcal{Z}_N^{(1)} \ \longrightarrow \ \frac{1}{N!} \int \dd{\mu(X)} \Delta_N(X) \prod_{1 \le i < j \le N} \operatorname{sgn} (x_i - x_j)
    = \frac{1}{N!} \int \dd{\mu(X)} |\Delta_N(X)|
    \, .
\end{align}
This completes the proof.
\end{proof}

Then, the following Pfaffian formula holds for the generalized $\beta = 1$ model partition function.
\begin{proposition}
Let $N \in 2 \mathbb{N}$.
The following Pfaffian formula holds for the generalized $\beta = 1$ model partition function,
\begin{align}
    \mathcal{Z}_N^{(1)} = \Pf_{i,j\in[N]} \mathsf{N}^{(1)}_{i-1,j-1}
    \, ,
    \label{eq:Z1_pf}
\end{align}
where we define a skew-symmetric matrix,
\begin{align}
    \mathsf{N}^{(1)}_{i,j} = \rvev{ f_{i} \mid A \mid f_{j} }
    \, .
\end{align}
\end{proposition}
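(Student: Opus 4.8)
The plan is to recognize the partition function \eqref{eq:Z1} as an immediate application of de Bruijn's formula \eqref{eq:dB_formula1}, which was set up in the preliminary section precisely for this purpose. The target identity \eqref{eq:Z1_pf} has exactly the shape of the right-hand side of \eqref{eq:dB_formula1}, so the whole argument reduces to matching the integrand of $\mathcal{Z}_N^{(1)}$ against the integrand on the left-hand side of that formula.

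First I would note that de Bruijn's formula \eqref{eq:dB_formula1} is stated for $2N$ variables; since here $N \in 2\mathbb{N}$, I set $N = 2M$ and apply the formula with its "$N$" replaced by $M$, so that its index range $[2N]$ becomes $[N]$. With the dictionary $f_j(x_i) \leftrightarrow f_{j-1}(x_i)$ (relabelling the sequence index from $\{1,\dots,N\}$ to $\{0,\dots,N-1\}$) and the skew-symmetric function $A(x_i,x_j)$ taken to be the same $A$ appearing in the definition of the model, the left-hand side of \eqref{eq:dB_formula1} reads
\begin{align}
    \frac{1}{N!} \int \dd{\mu(X)} \det_{i,j\in[N]} f_{j-1}(x_i) \, \Pf_{i,j\in[N]} A(x_i,x_j)
    \, ,
\end{align}
which is exactly $\mathcal{Z}_N^{(1)}$ as given in \eqref{eq:Z1}.

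Then the right-hand side of \eqref{eq:dB_formula1} gives
\begin{align}
    \mathcal{Z}_N^{(1)} = \Pf_{i,j\in[N]} \rvev{ f_{i-1} \mid A \mid f_{j-1} }
    \, ,
\end{align}
and invoking the definition $\mathsf{N}^{(1)}_{i,j} = \rvev{ f_{i} \mid A \mid f_{j} }$ yields \eqref{eq:Z1_pf} directly. I would also remark that $\mathsf{N}^{(1)}$ is skew-symmetric, as required for the Pfaffian to be defined: this follows from the skew-symmetry of $A$, since $\mathsf{N}^{(1)}_{j,i} = \rvev{ f_j \mid A \mid f_i } = \int \dd{\mu(x)}\dd{\mu(y)} f_j(x) A(x,y) f_i(y) = -\rvev{ f_i \mid A \mid f_j } = -\mathsf{N}^{(1)}_{i,j}$ upon relabelling the dummy integration variables and using $A(y,x) = -A(x,y)$.

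There is essentially no obstacle here beyond bookkeeping: the real content lives in de Bruijn's formula itself, which is assumed. The only point that requires minor care is the index-shift convention — aligning the $1,\dots,N$ determinant labelling of \eqref{eq:dB_formula1} with the $0,\dots,N-1$ labelling built into the definition of $\mathsf{N}^{(1)}$ — and confirming that the parity hypothesis $N \in 2\mathbb{N}$ matches the $2N$-variable (even-dimensional) setting demanded by de Bruijn's formula. Once those are checked, the proof is a one-line specialization.
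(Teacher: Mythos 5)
Your proposal is correct and follows exactly the paper's own route: the paper's proof is the one-line observation that the result is a direct application of de Bruijn's formula~\eqref{eq:dB_formula1}, and your careful bookkeeping of the index shift and the skew-symmetry of $\mathsf{N}^{(1)}$ simply makes that specialization explicit.
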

\begin{proof}
We can directly obtain this expression from de Bruijn's integration formula~\eqref{eq:dB_formula1}.
\end{proof}

\begin{remark}[Skew-orthogonal functions]
Defining the skew-orthogonal functions $(F_i)_{i \in [N]}$ obeying the relation for $i, j \in [N/2]$,
\begin{subequations}\label{eq:skeworth1}
\begin{align}
    \rvev{ F_{2i-2} \mid A \mid F_{2j-1} } = - \rvev{ F_{2j-1} \mid A \mid F_{2i-2} } & = h^{(1)}_i \delta_{i,j}
    \, , \\
    \rvev{ F_{2i-2} \mid A \mid F_{2j-2} } = \rvev{ F_{2i-1} \mid A \mid F_{2i-1} } & = 0
    \, ,
\end{align}
\end{subequations}
the partition function is given by
\begin{align}
    \mathcal{Z}_N^{(1)} = \prod_{i \in [N/2]} h_{i-1}^{(1)}
    \, .
\end{align}
\end{remark}

Similarly to the $\beta = 2$ matrix model, we may introduce the Christoffel--Darboux kernel for the $\beta = 1$ model as well, which will be a building block of the probability distribution function of the formal eigenvalues.
\begin{definition}[Christoffel--Darboux kernel]
Let $N \in 2 \mathbb{N}$.
We denote the inverse matrix of $\mathsf{N}^{(1)}$ by $\widetilde{\mathsf{N}}^{(1)}$.
Then, we define the Christoffel--Darboux (CD) kernel for the $\beta = 1$ model,
\begin{align}
    K_N^{(1)} (x,y) 
    & = \sum_{i,j = 0}^{N-1} f_{i}(x) \widetilde{\mathsf{N}}^{(1)}_{i,j} f_{j}(y)
    \nonumber \\
    & = \sum_{i = 0}^{N/2-1} \frac{F_{2i-1}(x) F_{2i-2}(y) - F_{2i-2}(x) F_{2i-1}(y)}{h_i^{(1)}}
    \, ,
\end{align}
where $(F_i)_{i \in [N]}$ is a set of the skew-orthogonal functions defined in \eqref{eq:skeworth1}.
The corresponding integral operator is given by
\begin{align}
    K_N^{(1)} & = \sum_{i,j=0}^{N-1} \ket{f_i} \widetilde{\mathsf{N}}^{(1)}_{i,j} \bra{f_j}
    \, .
\end{align}
\end{definition}
\begin{remark}
The CD kernel introduced here is slightly different from the quaternion kernel used in the standard analysis of the $\beta = 1$ model.
See, e.g., \cite{Mehta:2004RMT,Forrester:2010} for details.
\end{remark}
In fact, the CD kernel shows the property, called the self-reproducing property, which will be an important property to compute the correlation function.
\begin{proposition}
The CD kernel $K_N^{(1)}$ is self-reproducing,
\begin{subequations}
\begin{align}
 K_N^{(1)} \cdot A \cdot K_N^{(1)} & = K_N^{(1)}
 \, , \\
 \Tr \qty[ K_N^{(1)} \cdot A ] & = N
 \, .    
\end{align}
\end{subequations}
\end{proposition}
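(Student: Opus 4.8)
The plan is to work directly from the
definition of the CD kernel in terms of the skew-orthogonal functions
$(F_i)_{i\in[N]}$, since the skew-orthogonality relations
\eqref{eq:skeworth1} are exactly what make the two claimed identities
collapse to clean expressions. First I would write the integral operator
$A$ in bra-ket form and compute the composite operator $K_N^{(1)} \cdot
A \cdot K_N^{(1)}$ by inserting the skew-orthogonal expansion on both
sides. Concretely, using
\begin{align}
    K_N^{(1)} = \sum_{i=0}^{N/2-1} \frac{1}{h_i^{(1)}}
    \qty( \ket{F_{2i-1}}\bra{F_{2i-2}} - \ket{F_{2i-2}}\bra{F_{2i-1}} )
    \, ,
\end{align}
the middle factor $A$ will act on the inner bra-ket pair, producing
matrix elements of the form $\rvev{F_a \mid A \mid F_b}$. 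The key point
is that each such matrix element is governed by \eqref{eq:skeworth1}:
the only nonvanishing pairings are $\rvev{F_{2i-2}\mid A\mid F_{2j-1}} =
h_i^{(1)}\delta_{i,j}$ and its skew partner, while all same-parity
pairings vanish.

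The main computation is therefore to expand the double sum
$K_N^{(1)} \cdot A \cdot K_N^{(1)} = \sum_{i,j}
\tfrac{1}{h_i^{(1)} h_j^{(1)}} (\cdots) \rvev{F_a \mid A \mid F_b}
(\cdots)$ and observe that the skew-orthogonality relations force $a,b$
to pair up so that the middle matrix element contributes a factor
$h^{(1)}$ together with a Kronecker delta that identifies the two
summation indices. After the delta collapses one sum, one factor of
$h^{(1)}$ in the numerator cancels one of the two $h^{(1)}$ in the
denominator, and the surviving terms reassemble into precisely the
single-sum expression defining $K_N^{(1)}$ again. Keeping track of the
two signs in each antisymmetric combination $\ket{F_{2i-1}}\bra{F_{2i-2}}
- \ket{F_{2i-2}}\bra{F_{2i-1}}$ is the bookkeeping heart of the argument,
and this is the step I expect to require the most care, since one must
verify that the cross terms combine with the correct signs rather than
cancelling.

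For the trace identity I would use the operator form directly:
$\Tr[K_N^{(1)}\cdot A] = \sum_{i=0}^{N/2-1} \tfrac{1}{h_i^{(1)}}
\Tr[ (\ket{F_{2i-1}}\bra{F_{2i-2}} - \ket{F_{2i-2}}\bra{F_{2i-1}})
\cdot A ]$, and evaluate each trace by the cyclic rule as a matrix
element $\rvev{F_{2i-2} \mid A \mid F_{2i-1}}$ (up to sign) plus its
partner. By the skew-orthogonality normalization each diagonal block
contributes $\rvev{F_{2i-2}\mid A\mid F_{2i-1}}/h_i^{(1)} = 1$ from each
of the two antisymmetric pieces, giving $2$ per block and hence
$2\cdot(N/2) = N$ in total. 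The only subtlety here is the sign and the
ordering of the arguments of $A$ inside the trace; once those are fixed
consistently with the convention $\rvev{f\mid A\mid g} =
\int\dd{\mu(x)}\dd{\mu(y)} f(x)A(x,y)g(y)$, both identities follow
immediately.
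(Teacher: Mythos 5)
Your argument is correct, but it takes a different route from the paper. The paper works directly in the basis $(f_i)_{i\in[N]}$ used to define the kernel, writing $K_N^{(1)} = \sum_{i,j}\ket{f_i}\widetilde{\mathsf{N}}^{(1)}_{i,j}\bra{f_j}$ and reducing both identities to the elementary matrix facts $\widetilde{\mathsf{N}}^{(1)}\mathsf{N}^{(1)}\widetilde{\mathsf{N}}^{(1)}=\widetilde{\mathsf{N}}^{(1)}$ and $\Tr\qty[\widetilde{\mathsf{N}}^{(1)}\mathsf{N}^{(1)}]=N$; this requires nothing beyond the invertibility of $\mathsf{N}^{(1)}$, which is already assumed in the definition of the kernel. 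You instead pass to the skew-orthogonal basis $(F_i)$, where the middle matrix elements $\rvev{F_a\mid A\mid F_b}$ collapse via \eqref{eq:skeworth1}; your bookkeeping is right, e.g.\ writing $P_i=\ket{F_{2i-1}}\bra{F_{2i-2}}-\ket{F_{2i-2}}\bra{F_{2i-1}}$ one finds $P_i\cdot A\cdot P_j=h_i^{(1)}\delta_{i,j}P_i$ and $\Tr\qty[P_i\cdot A]=2h_i^{(1)}$, giving $K_N^{(1)}$ and $N$ respectively. What your version buys is transparency about \emph{why} the identities hold (each $2\times 2$ skew block is idempotent after one insertion of $A$ and contributes $2$ to the trace); what it costs is reliance on the existence of the skew-orthogonal functions, i.e.\ on the second displayed form of the kernel, whereas the paper's computation uses only the first form and is basis-independent. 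Both are complete proofs; just make sure you state explicitly that you are using the skew-orthogonal representation of $K_N^{(1)}$, which the paper's definition asserts to be equal to the primary one.
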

\begin{proof}
This property can be checked by direct calculation, 
\begin{align}
   K_N^{(1)} \cdot A \cdot K_N^{(1)} & = 
   \sum_{i,j,k,l = 0}^{N-1} \ket{f_i} \widetilde{\mathsf{N}}^{(1)}_{i,j} \underbrace{\rvev{ f_j \mid A \mid f_k }}_{\mathsf{N}^{(1)}_{j,k}} \widetilde{\mathsf{N}}^{(1)}_{k,l} \bra{f_l}
   = \sum_{i,l=0}^{N-1} \ket{f_i} \widetilde{\mathsf{N}}^{(1)}_{i,l} \bra{f_l}
   = K_N^{(1)}
   \, .
\end{align}
The trace condition is shown as follows,
\begin{align}
   \Tr\qty[ K_N^{(1)} \cdot A ] & 
   = \sum_{i,j=0}^{N-1} \int \int \dd{\mu(x)} \dd{\mu(y)} f_i(x) \widetilde{\mathsf{N}}^{(1)}_{i,j} f_j(y) A(y,x)
   = \sum_{i,j = 0}^{N-1} \widetilde{\mathsf{N}}^{(1)}_{i,j} \mathsf{N}^{(1)}_{j,i} = N
   \, .
\end{align}
\end{proof}
\begin{remark}
From the completeness condition, we obtain
\begin{align}
    K_\infty^{(1)}(x,y) = \lim_{N \to \infty} K_N^{(1)}(x,y) = \widetilde{A}(x,y)
    \, ,
    \label{eq:K1_infini}
\end{align}
where $\widetilde{A}(x,y)$ is the kernel of the inverse operator $A^{-1}$,
\begin{align}
    \int \dd{\mu(z)} A(x,z) \widetilde{A}(z,y) = \int \dd{\mu(z)} \widetilde{A}(x,z) A(z,y) = \delta(x-y)
    \, .
\end{align}
\end{remark}

Using this CD kernel, we may concisely express the probability distribution function as follows.
\begin{corollary}
The probability distribution function of the formal eigenvalues associated with the $\beta=1$ matrix model~\eqref{eq:Z1},
\begin{align}
    \mathsf{P}_N^{(1)}(X) \dd{\mu(X)} = \frac{1}{N! \mathcal{Z}_N^{(1)}} \det_{i,j \in [N]} f_{j-1}(x_i) \Pf_{i,j \in [N]} A(x_i,x_j) \dd{\mu(X)}
    \, ,
\end{align}
is written in terms of the CD kernel
\begin{align}
    \mathsf{P}_N^{(1)}(X) \dd{\mu(X)} & = \frac{1}{N!} \Pf_{i,j \in [N]} K_N^{(1)} (x_i,x_j) \Pf_{i,j \in [N]} A(x_i,x_j) \dd{\mu(X)}
    \, .
\end{align}
\end{corollary}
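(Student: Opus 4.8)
The plan is to strip off the common factor and reduce the identity to the Cauchy--Binet-type Pfaffian formula \eqref{eq:CB2}. Both expressions for $\mathsf{P}_N^{(1)}(X)\dd{\mu(X)}$ carry the same factor $\frac{1}{N!}\Pf_{i,j\in[N]} A(x_i,x_j)\dd{\mu(X)}$, so it suffices to establish
\begin{align}
    \Pf_{i,j\in[N]} K_N^{(1)}(x_i,x_j) = \frac{1}{\mathcal{Z}_N^{(1)}} \det_{i,j\in[N]} f_{j-1}(x_i)
    \, .
    \label{eq:plan_target}
\end{align}
First I would recognize that the matrix of kernel values is a triple matrix product. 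Introducing the $N\times N$ matrix $W$ with entries $W_{i,a}=f_{a-1}(x_i)$ for $i,a\in[N]$, the very definition of the CD kernel gives $\bigl[K_N^{(1)}(x_i,x_j)\bigr]_{i,j\in[N]} = W\,\widetilde{\mathsf{N}}^{(1)}\,W^{\mathrm{T}}$, where $\widetilde{\mathsf{N}}^{(1)}$ is the inverse of $\mathsf{N}^{(1)}$ and hence itself skew-symmetric.

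Next, since $W$ is square of size $N$ and $\widetilde{\mathsf{N}}^{(1)}$ is an $N\times N$ skew-symmetric matrix, I would apply the Cauchy--Binet-type identity \eqref{eq:CB2} with $Z=\widetilde{\mathsf{N}}^{(1)}$ to obtain
\begin{align}
    \Pf_{i,j\in[N]} K_N^{(1)}(x_i,x_j) = \Pf \bigl[W\,\widetilde{\mathsf{N}}^{(1)}\,W^{\mathrm{T}}\bigr] = \Pf \widetilde{\mathsf{N}}^{(1)} \, \det W
    \, ,
\end{align}
in which $\det W = \det_{i,j\in[N]} f_{j-1}(x_i)$ is precisely the determinant factor entering $\mathsf{P}_N^{(1)}$. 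Finally I would invoke the elementary identity $\Pf(M^{-1}) = (\Pf M)^{-1}$ for any invertible skew-symmetric $M$, which yields $\Pf \widetilde{\mathsf{N}}^{(1)} = (\Pf \mathsf{N}^{(1)})^{-1}$; combined with the Pfaffian formula for the partition function, $\mathcal{Z}_N^{(1)}=\Pf_{i,j\in[N]} \mathsf{N}^{(1)}_{i-1,j-1}=\Pf \mathsf{N}^{(1)}$, this gives exactly \eqref{eq:plan_target} and completes the argument.

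The computation is short and the only genuine points of care are bookkeeping rather than conceptual: I must match the index shift between the $0,\ldots,N-1$ labelling of $\widetilde{\mathsf{N}}^{(1)}$ and the $[N]$ labelling of $W$, and I must ensure the hypotheses of \eqref{eq:CB2} (the square case $m=n$) are met, which relies on $N\in 2\mathbb{N}$ so that all Pfaffians are of even size and $\Pf\mathsf{N}^{(1)}\neq 0$. I expect no real obstacle beyond verifying the Pfaffian-of-inverse identity in the form needed here; the self-reproducing property established above is not required for this particular reduction.
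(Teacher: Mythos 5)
Your reduction is the natural one---and since the paper states this corollary without any proof, presumably the intended one: cancel the common factor $\frac{1}{N!}\Pf_{i,j\in[N]}A(x_i,x_j)$, recognize $\bigl[K_N^{(1)}(x_i,x_j)\bigr]_{i,j\in[N]} = W\,\widetilde{\mathsf{N}}^{(1)}\,W^{\mathrm{T}}$ with $W_{i,a}=f_{a-1}(x_i)$, apply \eqref{eq:CB2}, and relate $\Pf\widetilde{\mathsf{N}}^{(1)}$ to $\mathcal{Z}_N^{(1)}=\Pf\mathsf{N}^{(1)}$. The gap is in the very last step: the ``elementary identity'' $\Pf(M^{-1})=(\Pf M)^{-1}$ is false. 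For an invertible skew-symmetric $M$ of size $2n$, taking $B=M^{-1}$ in $\Pf(BMB^{\mathrm{T}})=\det B\,\Pf M$ and using $(M^{-1})^{\mathrm{T}}=-M^{-1}$ gives
\begin{align}
\Pf\bigl(M^{-1}\bigr)=\frac{(-1)^{n}}{\Pf M}
\qquad\Bigl(\text{e.g. } M=\begin{bmatrix}0&a\\-a&0\end{bmatrix},\ \Pf(M^{-1})=-\tfrac{1}{a}\Bigr).
\end{align}
Carrying the correct sign through your computation yields
\begin{align}
\Pf_{i,j\in[N]}K_N^{(1)}(x_i,x_j)=\frac{(-1)^{N/2}}{\mathcal{Z}_N^{(1)}}\det_{i,j\in[N]}f_{j-1}(x_i)\,,
\end{align}
so your target identity, and hence the displayed corollary, acquires an extra factor $(-1)^{N/2}$.

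This is not a defect hiding elsewhere in your argument: the sign is really there. At $N=2$ one has $\mathsf{N}^{(1)}=\begin{bmatrix}0&h\\-h&0\end{bmatrix}$ with $h=\rvev{f_0\mid A\mid f_1}=\mathcal{Z}_2^{(1)}$, whence $K_2^{(1)}(x,y)=h^{-1}\bigl(f_1(x)f_0(y)-f_0(x)f_1(y)\bigr)$ and $\Pf_{i,j\in[2]}K_2^{(1)}(x_i,x_j)=-\bigl(\mathcal{Z}_2^{(1)}\bigr)^{-1}\det_{i,j\in[2]}f_{j-1}(x_i)$; equivalently, the claimed right-hand side integrates to $(-1)^{N/2}$ rather than $1$. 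So as written your proof is invalid---it arrives at the printed formula only because the erroneous Pfaffian-of-inverse identity silently absorbs the missing sign. The statement should carry the prefactor $(-1)^{N/2}$ (or $\widetilde{\mathsf{N}}^{(1)}$, equivalently $K_N^{(1)}$, should be redefined with a compensating sign); with that amendment and the corrected identity, your argument---whose remaining steps, including the hypotheses of \eqref{eq:CB2} and the index bookkeeping, are all in order---goes through verbatim.
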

\begin{remark}
This structure of the probability distribution function (involving both the CD kernel and the interaction kernel) is similarly discussed in the context of the coupled matrix model~\cite{Eynard:1998JPA,Babinet:2022yij}.
\end{remark}

\subsection{$\beta = 4$ model}

We then consider the generalized version of the $\beta = 4$ matrix model.
\begin{definition}[Generalized $\beta = 4$ matrix model]
Let $N \in \mathbb{N}$, and $(f_{i-1}(x))_{i \in [2N]}$, $(g_{i-1}(x))_{i \in [2N]}$ be sequences of the integrable functions.
We define $2N$ variables $(z_i,z_{N+j}) = (x_i,y_j)$ for $i,j \in [N]$, and the corresponding integral measure by $\dd{\mu(Z)} = \prod_{i \in [2N]} \dd{\mu(z_i)}$.
For skew-symmetric functions, $A(x,y)$, $B(x,y)$, and a generic integrable function $S(x,y)$, we define the generalized $\beta = 4$ model partition function as follows,
\begin{align}
    \mathcal{Z}_{N}^{(4)} = \frac{1}{N!^2} \int \dd{\mu(Z)} \det_{\substack{i \in [N], j \in [2N]}} \qty[ f_{j-1}(x_i) \ g_{j-1}(y_i) ] \Pf_{\substack{i,j \in [N]}} 
    \mathsf{A}(x_i,x_j;y_i,y_j)
    \, ,
    \label{eq:Z4}
\end{align}
where we denote 
\begin{align}
    \mathsf{A}(x,\tilde{x};y,\tilde{y}) = 
    \begin{bmatrix}
    A(x,\tilde{x}) & S(x,\tilde{y}) \\ - S^\text{T}(y,\tilde{x}) & B(y,\tilde{y})
    \end{bmatrix}
    \, .
\end{align}
\end{definition}
As is the case of the $\beta = 1$ model, the generalized $\beta = 4$ model is reduced to the standard $\beta = 4$ model via the specialization as follows.
\begin{lemma}\label{lemma:beta4specialization}
Putting
\begin{gather}
    f_{i-1}(x) = p_{2N-i}(x) \, , \quad
    g_{i-1}(x) = p_{2N-i}'(x) \, , \quad 
    A(x,y) = B(x,y) = 0 \, , \quad
    S(x,y) = \delta(x-y) \, ,
\end{gather}
the generalized $\beta = 4$ model reduces to the standard $\beta = 4$ matrix model (quaternion self-dual; symplectic ensemble),
\begin{align}
    \mathcal{Z}_{N,N}^{(4)} \ \longrightarrow \ \frac{1}{N!} \int \dd{\mu(X)} \Delta_N(X)^4
    \, .
\end{align}
\end{lemma}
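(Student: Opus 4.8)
The plan is to substitute the specialization directly into the defining integral~\eqref{eq:Z4} and to evaluate the Pfaffian and determinant factors in turn. With $A = B = 0$ and $S(x,y) = \delta(x-y)$, the block kernel $\mathsf{A}$ loses its diagonal blocks, so the Pfaffian factor collapses to an off-diagonal block form,
\begin{align}
    \Pf_{i,j\in[N]} \mathsf{A}(x_i,x_j;y_i,y_j) = \Pf \begin{bmatrix} 0 & B \\ -B^\text{T} & 0 \end{bmatrix} = \det_{i,j\in[N]} \delta(x_i - y_j) \, ,
\end{align}
where $B = (\delta(x_i-y_j))_{i,j\in[N]}$ and the last equality is the determinant reduction~\eqref{eq:Pf2det}.

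First I would integrate out the variables $Y = (y_i)_{i\in[N]}$. Expanding the determinant of delta functions by the Leibniz rule, $\det_{i,j\in[N]}\delta(x_i-y_j) = \sum_{\sigma}\operatorname{sgn}(\sigma)\prod_{i\in[N]}\delta(x_i-y_{\sigma(i)})$, each term localizes the $y$-integration at $y_{\sigma(i)} = x_i$. In the remaining determinant $\det_{i\in[N],j\in[2N]}[\,f_{j-1}(x_i)\ g_{j-1}(y_i)\,]$ this replaces the lower block of rows $g_{j-1}(y_i)$ by $g_{j-1}(x_{\sigma^{-1}(i)})$; reordering these rows back to their natural order contributes a factor $\operatorname{sgn}(\sigma)$ that cancels the Leibniz sign. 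Hence all $N!$ permutations contribute identically, producing an overall factor $N!$ that removes one of the two $1/N!$ prefactors in~\eqref{eq:Z4}, and I am left with
\begin{align}
    \mathcal{Z}_{N}^{(4)} \ \longrightarrow \ \frac{1}{N!} \int \dd{\mu(X)} \det_{i\in[N],j\in[2N]} \qty[\, p_{2N-j}(x_i)\ p_{2N-j}'(x_i) \,] \, .
\end{align}

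The final step is to recognize the surviving determinant as exactly the object appearing in Lemma~\ref{lemma:beta=4Vandermonde}, which identifies it, up to the prefactor $(-1)^{N(N+1)/2}$, with $\Delta_N(X)^4$. Substituting this identification yields the claimed reduction to the standard $\beta = 4$ partition function $\frac{1}{N!}\int\dd{\mu(X)}\Delta_N(X)^4$.

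I expect the only real difficulty to lie in the consistent bookkeeping of signs rather than in any analytic subtlety. Three contributions must be reconciled against the fixed $x$-then-$y$ block ordering of the matrices: the prefactor $(-1)^{N(N+1)/2}$ supplied by Lemma~\ref{lemma:beta=4Vandermonde}, the sign carried by the block-Pfaffian reduction~\eqref{eq:Pf2det}, and the permutation sign generated when the delta functions localize the integral. Once these are assembled consistently with the chosen row ordering of the block matrices, the collapse of the permutation sum is routine and the remaining identification is immediate.
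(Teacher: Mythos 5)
Your proposal follows essentially the same route as the paper: reduce the Pfaffian to $\det_{i,j}\delta(x_i-y_j)$ via~\eqref{eq:Pf2det}, localize the $Y$-integration (the paper phrases this as an application of Andréief's formula, which is the same permutation-sum cancellation you describe), and invoke Lemma~\ref{lemma:beta=4Vandermonde}. Your explicit bookkeeping of the sign $(-1)^{N(N+1)/2}$ is in fact slightly more careful than the paper's final line, which silently drops it.
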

\begin{proof}
Applying the formula~\eqref{eq:Pf2det} and Andréief's formula~\eqref{eq:A_formula}, we obtain
\begin{align}
    \mathcal{Z}_{N,N}^{(4)} & = \frac{1}{N!^2} \int \dd{\mu(Z)} 
    \det_{\substack{i \in [N] \\ j \in [2N]}} \qty[ p_{2N-j}(x_i) \ p_{2N-j}'(y_i) ] \Pf_{\substack{i,j \in [N]}} 
    \begin{bmatrix}
    0 & \delta(x_i-y_j) \\ - \delta(y_i-x_j) & 0
    \end{bmatrix}
    \nonumber \\
    & = \frac{1}{N!^2} \int \dd{\mu(Z)} 
    \det_{\substack{i \in [N] \\ j \in [2N]}} \qty[ p_{2N-j}(x_i) \ p_{2N-j}'(y_i) ] \det_{\substack{i,j \in [N]}} \delta(x_i-y_j) 
    \nonumber \\
    & = \frac{1}{N!} \int \dd{\mu(X)} \det_{\substack{i \in [N] \\ j \in [2N]}} \qty[ p_{2N-j}(x_i) \ p_{2N-j}'(x_i) ] 
    \nonumber \\
    & = \frac{1}{N!} \int \dd{\mu(X)} \Delta_N(X)^4
    \, .
\end{align}
This completes the proof.
\end{proof}

Then, the following Pfaffian formula holds for the generalized $\beta = 4$ model partition function.
\begin{proposition}
The following Pfaffian formula holds for the generalized $\beta = 4$ model partition function,
\begin{align}
    \mathcal{Z}_{N}^{(4)} = \Pf_{i,j\in[2N]} \mathsf{N}^{(4)}_{i-1,j-1}
    \, ,
    \label{eq:Z4_pf}
\end{align}
where we define a skew-symmetric matrix,
\begin{align}
    \mathsf{N}^{(4)}_{i,j} & = \rvev{f_{i} \mid A \mid f_{j} } + \rvev{f_{i} \mid S \mid g_{j}} - \rvev{g_{i} \mid S^\text{T} \mid f_{j}} + \rvev{g_{i} \mid B \mid g_{j}} 
    \nonumber \\
    & = 
    \begin{bmatrix}
    \bra{f_i} & \bra{g_i}
    \end{bmatrix}
    \begin{bmatrix}
    A & S \\ - S^\text{T} & B
    \end{bmatrix}
    \begin{bmatrix}
    \ket{f_j} \\ \ket{g_j}
    \end{bmatrix}
    \nonumber \\
    & =: \rvev{f_i,g_i \mid \mathsf{A} \mid f_j,g_j}
    \, .
\end{align}
\end{proposition}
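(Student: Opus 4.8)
The plan is to observe that the defining integral \eqref{eq:Z4} is exactly the left-hand side of de Bruijn's formula \eqref{eq:dB_formula2}, specialized to $M = N$, so the entire statement follows from a single direct application. First I would line up the ingredients on the two sides. By the definition of $\mathsf{A}$, the block Pfaffian $\Pf_{i,j\in[N]}\mathsf{A}(x_i,x_j;y_i,y_j)$ in \eqref{eq:Z4} is precisely the $\Pf$ of the $2N\times 2N$ kernel matrix with the four blocks $A(x_i,x_j)$, $S(x_i,y_l)$, $-S^\text{T}(y_k,x_j)$, $B(y_k,y_l)$ appearing in \eqref{eq:dB_formula2}. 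Similarly, the determinant $\det_{i\in[N],\,j\in[2N]}[\,f_{j-1}(x_i)\ g_{j-1}(y_i)\,]$ matches the determinant $\det_{i\in[N],\,j\in[M],\,k\in[N+M]}[\,f_k(x_i)\ g_k(y_j)\,]$ after the harmless relabeling of the function index $k \mapsto k-1$, so that the two sequences run over $f_0,\dots,f_{2N-1}$ and $g_0,\dots,g_{2N-1}$. Finally, the prefactors agree, since $1/N!^2 = 1/(N!\,M!)$ at $M = N$.

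With these identifications in place, de Bruijn's formula \eqref{eq:dB_formula2} directly yields
\[
\mathcal{Z}_N^{(4)} = \Pf_{i,j\in[2N]}\qty[\rvev{f_{i-1}\mid A\mid f_{j-1}} + \rvev{f_{i-1}\mid S\mid g_{j-1}} - \rvev{g_{i-1}\mid S^\text{T}\mid f_{j-1}} + \rvev{g_{i-1}\mid B\mid g_{j-1}}].
\]
The remaining step is purely notational: the entry inside this Pfaffian is by definition the expanded form of $\mathsf{N}^{(4)}_{i-1,j-1}$, equivalently $\rvev{f_{i-1},g_{i-1}\mid \mathsf{A}\mid f_{j-1},g_{j-1}}$, which reproduces \eqref{eq:Z4_pf} exactly.

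Because the analytic content is entirely absorbed into de Bruijn's formula, I expect no genuine obstacle here; the proof is essentially a pattern-matching exercise. The only point worth a brief check is that $\mathsf{N}^{(4)}$ is genuinely skew-symmetric, so that the Pfaffian on the right-hand side is well defined: this follows from the skew-symmetry of $A$ and $B$ together with the way the $S$ and $-S^\text{T}$ blocks are paired in the kernel $\mathsf{A}$, which is precisely the skew-symmetric block structure assumed in \eqref{eq:dB_formula2}.
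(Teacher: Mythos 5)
Your proof is correct and follows exactly the paper's approach: the paper's proof is a one-line invocation of de Bruijn's formula~\eqref{eq:dB_formula2}, and you have simply spelled out the (correct) identifications of the determinant, the block Pfaffian kernel, and the prefactor at $M = N$. The added remark on skew-symmetry of $\mathsf{N}^{(4)}$ is a sensible sanity check but not a departure from the paper's argument.
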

\begin{proof}
We obtain this formula by using de Bruijn's integration formula~\eqref{eq:dB_formula2}.
\end{proof}

\begin{remark}[Skew-biorthogonal functions]
Defining the skew-biorthogonal functions $(F_i,G_i)_{i \in [N]}$ obeying the relation for $i, j \in [N]$,
\begin{subequations}\label{eq:skeworth4}
\begin{align}
    \rvev{ F_{2i-2}, G_{2i-2} \mid \mathsf{A} \mid F_{2j-1}, G_{2j-1} } = - \rvev{ F_{2j-1}, G_{2j-1} \mid \mathsf{A} \mid F_{2i-2}, G_{2i-2} } & = h^{(4)}_i \delta_{i,j}
    \, , \\
    \rvev{ F_{2i-2}, G_{2i-2} \mid \mathsf{A} \mid F_{2j-2}, G_{2j-2} } = \rvev{ F_{2i-1}, G_{2i-1} \mid \mathsf{A} \mid F_{2j-1}, G_{2j-1} }
    & = 0
    \, ,
\end{align}
\end{subequations}
the partition function is given by
\begin{align}
    \mathcal{Z}_N^{(4)} = \prod_{i \in [N]} h_{i-1}^{(4)}
    \, .
\end{align}
\end{remark}

Similarly to the $\beta = 1$ model, we consider the CD kernel for the $\beta = 4$ model, which is a matrix analog of the ordinary kernel.
\begin{definition}[Matrix Christoffel--Darboux kernel]
Let $N \in \mathbb{N}$.
We denote the inverse matrix of $\mathsf{N}^{(4)}$ by $\widetilde{\mathsf{N}}^{(4)}$.
Then, we define the matrix Christoffel--Darboux (CD) kernel for the $\beta = 4$ model,
\begin{align}
    & \mathsf{K}_{N}^{(4)} (x,\tilde{x}; y, \tilde{y})
    \nonumber \\
    & = 
    \sum_{i,j =0}^{2N-1}
    \begin{bmatrix}
    f_i(x) \\ g_i(y)
    \end{bmatrix}
    \widetilde{\mathsf{N}}^{(4)}_{i,j}
    \begin{bmatrix}
    f_j(\tilde{x}) & g_j(\tilde{y})
    \end{bmatrix}
    \nonumber \\
    & = \sum_{i=0}^{N-1} \frac{1}{h_i^{(4)}}
    \begin{bmatrix}
    F_{2i+1}(x) F_{2i}(\tilde{x}) - F_{2i}(x) F_{2i+1}(\tilde{x}) & 
    F_{2i+1}(x) G_{2i}(\tilde{y}) - F_{2i}(x) G_{2i+1}(\tilde{y}) \\
    G_{2i+1}(y) F_{2i}(\tilde{x}) - G_{2i}(y) F_{2i+1}(\tilde{x}) &
    G_{2i+1}(y) G_{2i}(\tilde{y}) - G_{2i}(y) G_{2i+1}(\tilde{y})
    \end{bmatrix}
    \, ,
\end{align}
where $(F_{i-1},G_{i-1})_{i \in [2N]}$ is a set of the skew-biorthogonal functions defined in \eqref{eq:skeworth4}.
In the operator formalism, it is given by
\begin{align}
    \mathsf{K}_N^{(4)} = \sum_{i,j=0}^{2N-1} 
    \begin{bmatrix}
    \ket{f_i} \\ \ket{g_i}
    \end{bmatrix}
    \widetilde{\mathsf{N}}^{(4)}_{i,j}
    \begin{bmatrix}
    \bra{f_j} & \bra{g_j}
    \end{bmatrix}
    =
    \sum_{i,j=0}^{2N-1} 
    \begin{bmatrix}
    \ket{f_i} \widetilde{\mathsf{N}}^{(4)}_{i,j} \bra{f_j} &
    \ket{f_i} \widetilde{\mathsf{N}}^{(4)}_{i,j} \bra{g_j} \\
    \ket{g_i} \widetilde{\mathsf{N}}^{(4)}_{i,j} \bra{f_j} &
    \ket{g_i} \widetilde{\mathsf{N}}^{(4)}_{i,j} \bra{g_j}
    \end{bmatrix}
    \, .
\end{align}
\end{definition}
This matrix CD kernel similarly exhibits the self-reproducing property.
\begin{proposition}
The CD kernel $K_N^{(4)}$ is self-reproducing,
\begin{subequations}
\begin{align}
 \mathsf{K}_N^{(4)} \cdot \mathsf{A} \cdot \mathsf{K}_N^{(4)} & = \mathsf{K}_N^{(4)}
 \, , \\
 \Tr \qty[ \mathsf{K}_N^{(4)} \cdot \mathsf{A} ] & = 2N
 \, .    
\end{align}
\end{subequations}
\end{proposition}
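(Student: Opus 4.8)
The plan is to mirror the computation already carried out for the $\beta = 1$ CD kernel, handling the $2\times 2$ block structure of the matrix kernel through a vectorized notation. First I would package the function pairs into a column and a row, $\ket{\psi_i} = \begin{bmatrix} \ket{f_i} \\ \ket{g_i} \end{bmatrix}$ and $\bra{\psi_j} = \begin{bmatrix} \bra{f_j} & \bra{g_j} \end{bmatrix}$, so that the matrix CD kernel takes the compact form $\mathsf{K}_N^{(4)} = \sum_{i,j=0}^{2N-1} \ket{\psi_i} \widetilde{\mathsf{N}}^{(4)}_{i,j} \bra{\psi_j}$, in complete formal analogy with the scalar $\beta=1$ expression. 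The pivotal observation is that sandwiching the block interaction kernel between such a pair collapses both the integral and the $\mathbb{C}^2$ block contractions to the scalar matrix element defining $\mathsf{N}^{(4)}$, namely $\bra{\psi_j} \mathsf{A} \ket{\psi_k} = \mathsf{N}^{(4)}_{j,k}$, which is exactly the definition $\rvev{f_j \mid A \mid f_k} + \rvev{f_j \mid S \mid g_k} - \rvev{g_j \mid S^\text{T} \mid f_k} + \rvev{g_j \mid B \mid g_k}$.

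With this identity in hand, the self-reproducing relation follows by the same algebraic collapse as in the $\beta=1$ case. Composing the three operators gives $\mathsf{K}_N^{(4)} \cdot \mathsf{A} \cdot \mathsf{K}_N^{(4)} = \sum_{i,j,k,l=0}^{2N-1} \ket{\psi_i} \widetilde{\mathsf{N}}^{(4)}_{i,j} \mathsf{N}^{(4)}_{j,k} \widetilde{\mathsf{N}}^{(4)}_{k,l} \bra{\psi_l}$, and the inverse-matrix relation $\sum_j \widetilde{\mathsf{N}}^{(4)}_{i,j} \mathsf{N}^{(4)}_{j,k} = \delta_{i,k}$ eliminates the two internal summations, leaving $\sum_{i,l=0}^{2N-1} \ket{\psi_i} \widetilde{\mathsf{N}}^{(4)}_{i,l} \bra{\psi_l} = \mathsf{K}_N^{(4)}$. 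For the trace identity I would use cyclicity of the full trace, taken over the Hilbert space tensored with the two-component block index, together with the same central contraction: $\Tr[\mathsf{K}_N^{(4)} \cdot \mathsf{A}] = \sum_{i,j=0}^{2N-1} \widetilde{\mathsf{N}}^{(4)}_{i,j} \bra{\psi_j} \mathsf{A} \ket{\psi_i} = \sum_{i,j=0}^{2N-1} \widetilde{\mathsf{N}}^{(4)}_{i,j} \mathsf{N}^{(4)}_{j,i} = \sum_{i=0}^{2N-1} 1 = 2N$, the count $2N$ arising precisely because the index ranges over the $2N$ function pairs.

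The only genuinely nontrivial point is verifying the central contraction $\bra{\psi_j} \mathsf{A} \ket{\psi_k} = \mathsf{N}^{(4)}_{j,k}$, i.e.\ confirming that the block matrix multiplication implicit in $\mathsf{K}_N^{(4)} \cdot \mathsf{A}$, whose four entries read $K_{ff}A - K_{fg}S^\text{T}$, $K_{ff}S + K_{fg}B$, $K_{gf}A - K_{gg}S^\text{T}$, and $K_{gf}S + K_{gg}B$ with $K_{ab} = \sum_{i,j} \ket{a_i}\widetilde{\mathsf{N}}^{(4)}_{i,j}\bra{b_j}$, reassembles exactly the four signed terms in the definition of $\mathsf{N}^{(4)}$. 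The hard part is thus purely the bookkeeping of signs and block positions coming from the off-diagonal $S$ and $-S^\text{T}$ entries of $\mathsf{A}$; once that is pinned down, the remainder is the identical inverse-matrix manipulation used in the scalar $\beta=1$ proof, and no new analytic input is required.
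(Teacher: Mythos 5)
Your proposal is correct and follows essentially the same route as the paper: the paper itself already writes $\mathsf{K}_N^{(4)}$ in the vectorized operator form $\sum_{i,j}\ket{\psi_i}\widetilde{\mathsf{N}}^{(4)}_{i,j}\bra{\psi_j}$, and its proof collapses the middle factor via exactly your central contraction $\bra{\psi_j}\mathsf{A}\ket{\psi_k}=\mathsf{N}^{(4)}_{j,k}$, which is literally the second line of the paper's definition of $\mathsf{N}^{(4)}$. The trace computation likewise matches the paper's, so no gap remains.
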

\begin{proof}
This property can be checked by direct calculation as follows,
\begin{align}
   \mathsf{K}_N^{(4)} \cdot \mathsf{A} \cdot \mathsf{K}_N^{(4)} & = 
   \sum_{i,j,k,l = 0}^{2N-1} 
   \begin{bmatrix}
    \ket{f_i} \\ \ket{g_i}
    \end{bmatrix}
    \widetilde{\mathsf{N}}^{(4)}_{i,j}
    \begin{bmatrix}
    \bra{f_j} & \bra{g_j}
    \end{bmatrix}
    \mathsf{A}
    \begin{bmatrix}
    \ket{f_k} \\ \ket{g_k}
    \end{bmatrix}
    \widetilde{\mathsf{N}}^{(4)}_{k,l}
    \begin{bmatrix}
    \bra{f_l} & \bra{g_l}
    \end{bmatrix}
    \nonumber \\
    & = 
    \sum_{i,j,k,l = 0}^{2N-1}
   \begin{bmatrix}
    \ket{f_i} \\ \ket{g_i}
    \end{bmatrix}
    \widetilde{\mathsf{N}}^{(4)}_{i,j}
    \mathsf{N}^{(4)}_{j,k}
    \widetilde{\mathsf{N}}^{(4)}_{k,l}
    \begin{bmatrix}
    \bra{f_l} & \bra{g_l}
    \end{bmatrix}
    = \mathsf{K}_N^{(4)}
   \, .
\end{align}
The trace condition is shown as follows,
\begin{align}
   \Tr\qty[ \mathsf{K}_N^{(4)} \cdot \mathsf{A} ]
   & = \sum_{i,j = 0}^{2N-1} \int \dd{\mu(x)} \dd{\mu(\tilde{x})} \dd{\mu(y)} \dd{\mu(\tilde{y})}
   \Tr \qty(
    \begin{bmatrix}
    f_i(x) \\ g_i(y)
    \end{bmatrix}
    \widetilde{N}^{(4)}_{i,j}
    \begin{bmatrix}
    f_j(\tilde{x}) & g_j(\tilde{y})
    \end{bmatrix}
    \begin{bmatrix}
    A(x,\tilde{x}) & S(x,\tilde{y}) \\ - S(y,\tilde{x}) & B(y,\tilde{y})
    \end{bmatrix}
    )
   \nonumber \\
   & = \sum_{i,j = 0}^{2N-1} \widetilde{\mathsf{N}}^{(4)}_{i,j} \mathsf{N}^{(4)}_{j,i} = 2N
   \, .
\end{align}
This completes the proof.
\end{proof}
\begin{remark}
From the completeness condition, we obtain
\begin{align}
    \mathsf{K}_\infty^{(4)} = \lim_{N \to \infty} \mathsf{K}_N^{(4)} = \widetilde{\mathsf{A}}
    \, ,
    \label{eq:K4_infini}
\end{align}
which is the inverse of the integral operator $\mathsf{A}$,
\begin{align}
    \mathsf{A} \cdot \widetilde{\mathsf{A}} = \widetilde{\mathsf{A}} \cdot \mathsf{A} = 1
    \, .
\end{align}
\end{remark}

Using the matrix CD kernel, the probability distribution function of the formal eigenvalues may be written as follows.
\begin{corollary}
The probability distribution function of the formal eigenvalues associated with
the $\beta = 4$ matrix model \eqref{eq:Z4},
\begin{align}
    &
    \mathsf{P}_{N}^{(4)}(X;Y) \dd{\mu(X;Y)} 
    \nonumber \\
    & = \frac{1}{N!^2 \mathcal{Z}_N^{(4)}} \det_{\substack{i \in [N], \ j \in [2N] }} \qty[ f_{j-1}(x_i) \ g_{j-1}(y_i) ] \Pf_{\substack{i,j \in [N]}} 
    \begin{bmatrix}
    A(x_i,x_j) & S(x_i,y_j) \\ - S^\text{T}(y_i,x_j) & B(y_i,y_j)
    \end{bmatrix}
    \dd{\mu(X;Y)}
\end{align}
is written in terms of the CD kernel,
\begin{align}
    \mathsf{P}_{N}^{(4)}(X;Y) \dd{\mu(X;Y)} & = \frac{1}{N!^2} \Pf_{i,j \in [N]} \mathsf{K}_{N}^{(4)} (x_i,x_j;y_i,y_j) \Pf_{\substack{i,j \in [N]}} 
    \begin{bmatrix}
    A(x_i,x_j) & S(x_i,y_j) \\ - S^\text{T}(y_i,x_j) & B(y_i,y_j)
    \end{bmatrix}
    \dd{\mu(X;Y)}
    \, .
\end{align}
\end{corollary}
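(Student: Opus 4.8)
The plan is to observe that the two displayed forms of $\mathsf{P}_N^{(4)}$ share the common factor $\frac{1}{N!^2}\Pf_{i,j\in[N]}\mathsf{A}(x_i,x_j;y_i,y_j)\dd{\mu(X;Y)}$, so the Corollary is equivalent to the single algebraic identity
\[
\frac{1}{\mathcal{Z}_N^{(4)}}\det_{\substack{i\in[N],\,j\in[2N]}}\qty[f_{j-1}(x_i)\ g_{j-1}(y_i)]
=\Pf_{i,j\in[N]}\mathsf{K}_N^{(4)}(x_i,x_j;y_i,y_j).
\]
Everything reduces to proving this, and the natural engine is the Cauchy--Binet-type Pfaffian formula \eqref{eq:CB2}, exactly as in the $\beta=1$ case treated just above.

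First I would repackage the wave-function data into a single $2N\times 2N$ matrix $U$ whose rows are labelled by pairs $(i,1),(i,2)$ with $i\in[N]$ and whose columns are labelled by $a\in\{0,\dots,2N-1\}$, via $U_{(i,1),a}=f_a(x_i)$ and $U_{(i,2),a}=g_a(y_i)$. With the two rows $(i,1),(i,2)$ stacked for each $i$, this ordering matches the convention implicit in the block notation, so $\det U=\det_{i\in[N],\,j\in[2N]}\qty[f_{j-1}(x_i)\ g_{j-1}(y_i)]$. Reading off the four entries in the definition of $\mathsf{K}_N^{(4)}$, one then checks that the $2\times2$ block of $\mathsf{K}_N^{(4)}(x_i,x_j;y_i,y_j)$ is precisely the $(i,\cdot),(j,\cdot)$ block of $U\,\widetilde{\mathsf{N}}^{(4)}U^{\mathrm T}$; since $\widetilde{\mathsf{N}}^{(4)}$ is the inverse of a skew-symmetric matrix it is itself skew-symmetric, so $U\widetilde{\mathsf{N}}^{(4)}U^{\mathrm T}$ is skew-symmetric and its Pfaffian is well defined, giving $\Pf_{i,j\in[N]}\mathsf{K}_N^{(4)}(x_i,x_j;y_i,y_j)=\Pf[U\widetilde{\mathsf{N}}^{(4)}U^{\mathrm T}]$.

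Next I would apply \eqref{eq:CB2} in the balanced case $m=n=2N$ with $W=U$ and $Z=\widetilde{\mathsf{N}}^{(4)}$, which yields $\Pf[U\widetilde{\mathsf{N}}^{(4)}U^{\mathrm T}]=\Pf(\widetilde{\mathsf{N}}^{(4)})\det U$. Combining this with the Pfaffian-of-inverse relation $\Pf(\widetilde{\mathsf{N}}^{(4)})=(\Pf\mathsf{N}^{(4)})^{-1}$ together with $\mathcal{Z}_N^{(4)}=\Pf_{i,j}\mathsf{N}^{(4)}_{i-1,j-1}$ from \eqref{eq:Z4_pf}, and with $\det U=\det_{i\in[N],\,j\in[2N]}\qty[f_{j-1}(x_i)\ g_{j-1}(y_i)]$, delivers the boxed identity and hence the Corollary after reinstating the common prefactor.

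The routine but error-prone part is the index bookkeeping in the second step: one must fix a consistent interleaving of the $f(x_i)$- and $g(y_i)$-rows so that the $2\times2$ block structure of the matrix CD kernel lines up with the row/column blocks of $U\widetilde{\mathsf{N}}^{(4)}U^{\mathrm T}$ without generating a spurious permutation sign. The one genuine subtlety, which I would flag explicitly, is the sign in $\Pf(\widetilde{\mathsf{N}}^{(4)})=(\Pf\mathsf{N}^{(4)})^{-1}$: in the most common convention $\Pf(A^{-1})=(-1)^n(\Pf A)^{-1}$ for a $2n\times 2n$ skew-symmetric matrix, so here a factor $(-1)^N$ could appear. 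The sign-free statement therefore presupposes the Pfaffian normalization in which this factor is absent, and I would verify that the conventions fixed for $\mathcal{Z}_N^{(4)}$ and $\widetilde{\mathsf{N}}^{(4)}$ are indeed the ones that make it vanish.
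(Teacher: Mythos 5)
Your argument is correct in structure and supplies essentially the derivation the paper leaves implicit (the corollary is stated without proof): reducing everything to the identity $\det U/\mathcal{Z}_N^{(4)}=\Pf\bigl[U\widetilde{\mathsf{N}}^{(4)}U^{\mathrm{T}}\bigr]$ and invoking the balanced case \eqref{eq:CB2} of the Cauchy--Binet-type formula together with \eqref{eq:Z4_pf} is exactly the right route, and your row/column bookkeeping is consistent with the block ordering already used in de Bruijn's formula \eqref{eq:dB_formula2}, so no spurious permutation sign arises there. The one point you leave open --- the sign in $\Pf(\widetilde{\mathsf{N}}^{(4)})$ --- should be settled, and it does \emph{not} come out in favor of the statement as displayed: for any invertible $2n\times 2n$ skew-symmetric $A$ one has $\Pf(A)\,\Pf(A^{-1})=(-1)^{n}$, which is forced by the identities $\Pf(BAB^{\mathrm{T}})=\det B\,\Pf A$, $\Pf(-M)=(-1)^{n}\Pf M$ and $\det A=(\Pf A)^{2}$ already in use in the paper, so no admissible normalization removes it. Hence $\Pf\bigl[U\widetilde{\mathsf{N}}^{(4)}U^{\mathrm{T}}\bigr]=(-1)^{N}\det U/\mathcal{Z}_N^{(4)}$, and the corollary holds only up to the factor $(-1)^{N}$; at $N=1$ one checks directly that $\Pf_{i,j\in[1]}\mathsf{K}_1^{(4)}(x_1,x_1;y_1,y_1)=\bigl(f_1(x_1)g_0(y_1)-f_0(x_1)g_1(y_1)\bigr)/\mathcal{Z}_1^{(4)}$, the negative of what is asserted. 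So your proof establishes the statement with $\widetilde{\mathsf{N}}^{(4)}$ replaced by $-(\mathsf{N}^{(4)})^{-1}$ in the definition of $\mathsf{K}_N^{(4)}$, or equivalently with an extra $(-1)^{N}$ on the right-hand side; with that caveat made explicit rather than left as something ``to verify,'' the argument is complete. (The analogous $\beta=1$ corollary carries the same issue with sign $(-1)^{N/2}$.)
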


\section{Characteristic polynomials}\label{sec:ch_poly}

In this Section, we consider the characteristic polynomial averages of the $\beta = 1$ and $\beta = 4$ matrix models.
We first define the expectation value with respect to the corresponding probability distribution functions.
\begin{definition}[Expectation value]
We define the expectation value with respect to the
probability distribution function $\mathsf{P}_N^{(\beta)}(X)$ for $\beta = 1, 4$ as follows,
\begin{align}
    \rvev{ \mathcal{O}(X) }_{\beta} & = \int \dd{\mu(X)} \mathsf{P}_N^{(\beta)}(X) \, \mathcal{O}(X)
    \, .
\end{align}
We remark that the number of variables are $N$ and $2N$ for $\beta = 1$ and $\beta = 4$, respectively.
\end{definition}

\subsection{$\beta = 1$ model}

We consider the following special case of the $\beta = 1$ model~\eqref{eq:Z1} for $N \in 2\mathbb{N}$ with a skew-symmetric function $A(x,y) = - A(y,x)$:
\begin{align}
    \mathcal{Z}_N^{(1)} = \frac{1}{N!} \int \dd{\mu(X)} \Delta_N(X) \Pf_{i,j \in [N]} A(x_i,x_j)
    \, .
\end{align}
Then, we study the Schur polynomial average with respect to the $\beta = 1$ model, which would be a building block of the correlation functions discussed below.
\begin{lemma}[Schur polynomial average]
Let $N \in 2 \mathbb{N}$.
The Schur polynomial average of the $\beta = 1$ model is given by a rank $N$ Pfaffian as follows,
\begin{align}
    \expval{s_\lambda(X)}_1 = \frac{1}{\mathcal{Z}_N^{(1)}} \Pf_{i,j \in [N]} \qty[ \rvev{ x^{\lambda_i + N - i} \mid A \mid x^{\lambda_j + N - j} } ]
    \, .
\end{align}
\end{lemma}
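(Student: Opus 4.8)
The plan is to combine the Schur polynomial definition~\eqref{eq:Schur_def} with the Pfaffian formula for the partition function~\eqref{eq:Z1_pf}, following exactly the same mechanism that produced $\mathcal{Z}_N^{(1)}$ in the first place. First I would write out the expectation value explicitly,
\begin{align}
    \expval{s_\lambda(X)}_1 = \frac{1}{N! \, \mathcal{Z}_N^{(1)}} \int \dd{\mu(X)} s_\lambda(X) \, \Delta_N(X) \Pf_{i,j \in [N]} A(x_i,x_j)
    \, ,
\end{align}
and then use the definition $s_\lambda(X) = \det_{i,j \in [N]} x_i^{\lambda_j + N - j} / \Delta_N(X)$ to cancel one power of the Vandermonde determinant. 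The key observation is that the leftover factor is again a single determinant, $\det_{i,j \in [N]} x_i^{\lambda_j + N - j}$, so the integrand has precisely the form $\det \times \Pf$ required by de Bruijn's formula~\eqref{eq:dB_formula1}.

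Concretely, after the cancellation the average reads
\begin{align}
    \expval{s_\lambda(X)}_1 = \frac{1}{N! \, \mathcal{Z}_N^{(1)}} \int \dd{\mu(X)} \det_{i,j \in [N]} x_i^{\lambda_j + N - j} \Pf_{i,j \in [N]} A(x_i,x_j)
    \, .
\end{align}
I would then identify the monomial $f_{j-1}(x) \leftrightarrow x^{\lambda_j + N - j}$ as the sequence of integrable functions entering de Bruijn's formula~\eqref{eq:dB_formula1}, with the skew-symmetric kernel $A$ playing its usual role. Applying that formula directly absorbs the $1/N!$ and turns the integral into the stated rank-$N$ Pfaffian,
\begin{align}
    \expval{s_\lambda(X)}_1 = \frac{1}{\mathcal{Z}_N^{(1)}} \Pf_{i,j \in [N]} \qty[ \rvev{ x^{\lambda_i + N - i} \mid A \mid x^{\lambda_j + N - j} } ]
    \, .
\end{align}

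The steps are essentially bookkeeping, so I do not expect a genuine obstacle; the only point demanding care is the identification of the correct exponents. The partition function uses $f_{i-1}(x) = p_{N-i}(x)$, i.e.\ monomials of degree $N-i$, whereas inserting $s_\lambda$ shifts these to $x^{\lambda_j + N - j}$. I would emphasize that since de Bruijn's formula is multilinear and antisymmetric in the rows, any monic choice of the $p_{N-i}$ gives the same Vandermonde determinant, so the Schur-shifted exponents simply replace $N-j$ by $\lambda_j + N - j$ without altering the structure of the argument. The fact that $N \in 2\mathbb{N}$ guarantees the Pfaffian is of even rank, matching the hypothesis of~\eqref{eq:dB_formula1}, so no parity issue arises.
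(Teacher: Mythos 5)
Your proof is correct and follows exactly the paper's own argument: cancel the Vandermonde in the Schur polynomial definition against the determinant factor of the $\beta=1$ weight, then apply de Bruijn's formula~\eqref{eq:dB_formula1} with $f_{j-1}(x) = x^{\lambda_j + N - j}$. No differences worth noting.
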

\begin{proof}
From the definition of the Schur polynomial~\eqref{eq:Schur_def}, we have
\begin{align}
    \expval{s_\lambda(X)}_1 = \frac{1}{N! \mathcal{Z}_N^{(1)}} \int \dd{\mu(X)} \det_{i,j\in [N]} x_i^{\lambda_j + N - j} \Pf_{i,j\in [N]} A(x_i,x_j)
    \, .
\end{align}
Then, applying the de Bruijn's formula~\eqref{eq:dB_formula1}, we arrive at the result.
\end{proof}

Applying this Schur polynomial average together with the Schur polynomial expansion of the characteristic polynomial, we obtain the Pfaffian formula for the characteristic polynomial average.
\begin{proposition}[Characteristic polynomial]
Let $N,M \in 2 \mathbb{N}$.
For $Z = (z_\alpha)_{i \in [M]}$, the characteristic polynomial average of the $\beta = 1$ model is given by the rank $M$ Pfaffian of the degree-$(N+M)$ CD kernel,
\begin{align}
    \expval{\prod_{\alpha \in [M]} \det(z_\alpha - X)}_1
    & = \frac{1}{\Delta_M(Z)} \frac{\mathcal{Z}_{N+M}^{(1)}}{\mathcal{Z}_N^{(1)}} \Pf_{\alpha,\beta \in [M]} K_{N+M}^{(1)} (z_\alpha,z_\beta)  
    \, .
\end{align}
\end{proposition}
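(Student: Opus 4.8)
The plan is to expand the product of characteristic polynomials in the Schur basis, apply the Schur polynomial average lemma termwise, and then recognize the resulting Pfaffian as a Cauchy--Binet contraction that collapses to the CD kernel. First I would use the expansion from Lemma~\ref{lemma:Schur_expansion}, writing
\begin{align}
    \prod_{\alpha \in [M]} \det(z_\alpha - X)
    = \prod_{\substack{i \in [N], \alpha \in [M]}} (z_\alpha - x_i)
    = \sum_{\lambda \subseteq (N^M)} (-1)^{|\lambda|} s_{\lambda^\vee}(X) s_\lambda(Z)
    \, ,
\end{align}
so that the average factors as $\expval{\prod_\alpha \det(z_\alpha - X)}_1 = \sum_\lambda (-1)^{|\lambda|} s_\lambda(Z) \expval{s_{\lambda^\vee}(X)}_1$. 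The Schur polynomial average lemma then turns each $\expval{s_{\lambda^\vee}(X)}_1$ into a rank-$N$ Pfaffian $\frac{1}{\mathcal{Z}_N^{(1)}}\Pf_{i,j\in[N]}[\,\rvev{x^{\mu_i+N-i}\mid A\mid x^{\mu_j+N-j}}\,]$ with $\mu=\lambda^\vee$.

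Next I would interpret the sum over $\lambda$ together with the factor $s_\lambda(Z)/\Delta_M(Z)$ as producing a single larger Pfaffian of size $N+M$. The natural mechanism is the Cauchy--Binet-type expansion (Lemma~\ref{lemma:CB_exp}): the right-hand side Pfaffian $\Pf_{\alpha,\beta\in[M]} K_{N+M}^{(1)}(z_\alpha,z_\beta)$ with $K_{N+M}^{(1)}(z_\alpha,z_\beta)=\sum_{i,j=0}^{N+M-1} f_i(z_\alpha)\widetilde{\mathsf{N}}^{(1)}_{i,j}f_j(z_\beta)$ is exactly of the form $\Pf[W Z W^\mathrm{T}]$ with $W$ the $M\times(N+M)$ matrix of values $f_{j-1}(z_\alpha)$ and $Z=\widetilde{\mathsf{N}}^{(1)}$. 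The identity \eqref{eq:CB1} expands this as $\sum_{I}\Pf \widetilde{\mathsf{N}}^{(1)}(I)\det W([M];I)$ over $M$-element subsets $I\subset[N+M]$, i.e.\ over the complementary $N$-element index sets which are precisely the partitions $\mu$ appearing above. The complementary minor identity relating $\Pf \widetilde{\mathsf{N}}^{(1)}(I)$ to $\Pf \mathsf{N}^{(1)}([N+M]\setminus I)$ (times $\Pf\mathsf{N}^{(1)}=\mathcal{Z}_{N+M}^{(1)}$, up to a sign from $\Sigma$) is what converts the inverse-matrix Pfaffians into the averaged-Schur Pfaffians, and the factor $\mathcal{Z}_{N+M}^{(1)}/\mathcal{Z}_N^{(1)}$ emerges from this complementation.

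I would then match the two sides index by index: the minors $\det W([M];I)$ carry the polynomials $p_{N+M-j}(z_\alpha)$ which, divided by $\Delta_M(Z)$, reassemble into the Schur polynomials $s_\lambda(Z)$ by \eqref{eq:Schur_def}, while the Pfaffian minors $\Pf\mathsf{N}^{(1)}([N+M]\setminus I)$ reproduce the $\expval{s_{\lambda^\vee}(X)}_1\,\mathcal{Z}_N^{(1)}$ factors. Keeping the degree bookkeeping consistent requires that $f_{i-1}=p_{N+M-i}$ so that the $(N+M)$-variable and $N$-variable polynomial bases align; the sign $(-1)^{|\lambda|}$ on the Schur side must be reconciled with the $(-1)^{\Sigma(I)}$-type signs produced by the complementary-minor and Pfaffian-expansion formulas.

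The main obstacle I anticipate is precisely this sign and normalization bookkeeping: tracking how the $\Sigma(I)$ signs from the Laplace/Cauchy--Binet expansions combine with the $(-1)^{|\lambda|}$ from the Schur expansion and with the sign in the dual-partition convention $\lambda^\vee$, and verifying that all of these cancel to leave the clean form stated. A secondary subtlety is confirming that the inverse matrix $\widetilde{\mathsf{N}}^{(1)}$ genuinely appears with the correct truncation so that the CD kernel of degree $N+M$ (rather than $N$) is the object whose Pfaffian is taken; this is why the ratio $\mathcal{Z}_{N+M}^{(1)}/\mathcal{Z}_N^{(1)}$, not $\mathcal{Z}_N^{(1)}$ alone, must prefactor the answer. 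Once the complementary-minor identity for Pfaffians is set up carefully, the remaining steps are routine term-matching.
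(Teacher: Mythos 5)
Your overall strategy is sound and reaches the right identity, but it runs in the opposite direction from the paper's proof, and this costs you one extra ingredient. The paper goes \emph{forward}: after the Schur expansion and the Schur-average lemma (your first two steps, which match the paper exactly up to the harmless relabeling of which factor carries $\lambda$ versus $\lambda^\vee$), it re-sums the partition sum using the Laplace-type expansion (Lemma~\ref{lemma:Laplace_exp}) to assemble a single bordered Pfaffian
\begin{align*}
\Pf_{\substack{i,j\in[N+M]\\ \alpha,\beta\in[M]}}
\begin{bmatrix}
\rvev{p_{N+M-i}\mid A\mid p_{N+M-j}} & p_{N+M-i}(z_\alpha)\\
-p_{N+M-i}(z_\beta) & 0
\end{bmatrix},
\end{align*}
and then applies the block factorization \eqref{eq:Pf_block} with $\widetilde{A}=0$ to peel off $\Pf\mathsf{N}^{(1)}=\mathcal{Z}_{N+M}^{(1)}$ times $\Pf_{\alpha,\beta}[K_{N+M}^{(1)}(z_\alpha,z_\beta)]$. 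You instead start from the right-hand side, expand $\Pf[W\widetilde{\mathsf{N}}^{(1)}W^{\text{T}}]$ by the Cauchy--Binet-type formula \eqref{eq:CB1}, and invoke a Jacobi-type complementary-minor identity $\Pf\widetilde{\mathsf{N}}^{(1)}(I)=\pm\Pf\mathsf{N}^{(1)}([N+M]\setminus I)/\Pf\mathsf{N}^{(1)}$ to convert inverse-matrix minors into moment-matrix minors. That identity is correct, but it is not among the paper's stated tools; it is in fact equivalent to the combination of Lemma~\ref{lemma:Laplace_exp} with \eqref{eq:Pf_block}, which is precisely why the paper's ordering is more economical --- it never needs to manipulate minors of $\widetilde{\mathsf{N}}^{(1)}$ at all, and the $(-1)^{\Sigma(I)}$ signs are absorbed once into the Laplace expansion rather than appearing twice (once in Cauchy--Binet complementation, once in the Jacobi identity). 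So your route works, but you should either prove the complementary-minor identity or derive it from the paper's two Pfaffian lemmas, and you have left exactly the sign cancellation unverified that constitutes the real content of the re-summation step in either direction.
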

\begin{proof}
We apply the Schur polynomial expansion formula (Lemma~\ref{lemma:Schur_expansion}) to obtain
\begin{align}
    & \expval{\prod_{\alpha \in [M]} \det(z_\alpha - X)}_1
    \nonumber \\
    & = \sum_{\lambda \subseteq (M^N)} (-1)^{|\lambda|} s_{\lambda^\vee}(Z) \expval{s_\lambda(X)}_1
    \nonumber \\
    & = \frac{1}{\mathcal{Z}_N^{(1)} \Delta_M(Z)} \sum_{\lambda \subseteq (M^N)} (-1)^{|\lambda|} \det_{\alpha,\beta \in [M]} z_\alpha^{\lambda^\vee_\beta + M - \beta} \Pf_{\substack{i,j \in [N] }} \qty[ \rvev{ x^{\lambda_i + N - i} \mid A \mid x^{\lambda_j + N - j} } ]
    \, .
\end{align}
Applying the Laplace-type expansion (Lemma~\ref{lemma:Laplace_exp}) and the Pfaffian formula for the block matrix~\eqref{eq:Pf_block}, we have
\begin{align}
    \expval{\prod_{\alpha \in [M]} \det(z_\alpha - X)}_1
    & = \frac{1}{\mathcal{Z}_N^{(1)} \Delta_M(Z)} \Pf_{\substack{i,j \in [N+M] \\ \alpha,\beta \in [M]}}
    \begin{bmatrix}
    \rvev{p_{N+M-i} \mid A \mid p_{N+M-j}} & p_{N+M-i}(z_\alpha) \\
    - p_{N+M-i}(z_{\beta}) & 0
    \end{bmatrix}
    \nonumber \\
    & = \frac{\mathcal{Z}_{N+M}^{(1)}}{\mathcal{Z}_N^{(1)} \Delta_M(Z)}
    \Pf_{\alpha,\beta \in [M]} \qty[ \sum_{i,j=0}^{N+M-1} p_i(z_\alpha) \widetilde{N}^{(1)}_{i,j} p_j(z_\beta) ]
    \nonumber \\
    & = \frac{1}{\Delta_M(Z)} \frac{\mathcal{Z}_{N+M}^{(1)}}{\mathcal{Z}_N^{(1)}} \Pf_{\alpha,\beta \in [M]} K_{N+M}^{(1)} (z_\alpha,z_\beta)
    \, .
\end{align}
This completes the proof.
\end{proof}

Next, we consider the average of the characteristic polynomial inverse.
For this purpose, we define the dual CD kernel as follows.
\begin{definition}[Dual Christoffel--Darboux kernel]
Let $N \in 2 \mathbb{N}$.
We define the integral operator as follows,
\begin{align}
    \bar{A}_N 
    = A - \sum_{k,l=0}^{N-1} A \ket{p_k} \widetilde{N}^{(1)}_{k,l} \bra{p_l} A
    = A \qty( \widetilde{A}  - K_{N}^{(1)} ) A
    \, .
    \label{eq:Abar}
\end{align}
Recalling the relation~\eqref{eq:K1_infini}, we may write
\begin{align}
    \bar{A}_N = \sum_{i=N/2}^\infty A \qty( \ket{\psi_{2i-1}} \bra{\psi_{2i-2}} - \ket{\psi_{2i-2}} \bra{\psi_{2i-1}} ) A
    \, ,
\end{align}
where $(\psi_k)_{k \in \mathbb{N}}$ is a set of degree-$k$ skew-orthonormal polynomials, $\rvev{ \psi_{2i-2} \mid A \mid \psi_{2j-1} } = \delta_{i,j}$, $\rvev{ \psi_{2i} \mid A \mid \psi_{2j}} = \rvev{\psi_{2i-1} \mid A \mid \psi_{2j-1}} = 0$.
Then, we define the dual Christoffel--Darboux kernel through the double Hilbert transform,
\begin{align}
    \widetilde{K}^{(1)}_{N}(z,w) 
    & = 
    \int \dd{\mu(x)} \dd{\mu(\tilde{x})} \frac{ \bar{A}_N(x,\tilde{x}) }{(z - x)(w - \tilde{x})}
    \nonumber \\
    & =
    \int \dd{\mu(x)} \dd{\mu(\tilde{x})} \sum_{i= N/2}^\infty  \frac{ \bra{x \mid A } \qty( \ket{\psi_{2i-1}} \bra{\psi_{2i-2}} -  \ket{\psi_{2i-2}} \bra{\psi_{2i-1}} ) \ket{ A \mid \tilde{x} } }{(z - x)(w - \tilde{x})}
    \, .
\end{align}
\end{definition}

Using this dual CD kernel, we obtain the following Pfaffian formula for the characteristic polynomial inverse.
\begin{proposition}[Characteristic polynomial inverse]
Let $N,M \in 2 \mathbb{N}$.
For $Z = (z_\alpha)_{\alpha \in [M]}$, the characteristic polynomial inverse average of the $\beta = 1$ model is given by a Pfaffian of the dual CD kernel,
\begin{subequations}
\begin{align}
    & N \ge M : &
    \expval{\prod_{\alpha \in [M]} \det(z_\alpha - X)^{-1}}_1
    & = (-1)^{NM} \frac{\mathcal{Z}_{N-M}^{(1)}/\mathcal{Z}_N^{(1)}}{\Delta_M(Z)} \Pf_{\alpha,\beta \in [M]} \widetilde{K}_{N-M}^{(1)} (z_\alpha,z_\beta)  
    \, , \\
    & N \le M : &
    \expval{\prod_{\alpha \in [M]} \det(z_\alpha - X)^{-1}}_1
    & = \frac{1/\mathcal{Z}_N^{(1)}}{\Delta_M(Z)}
    \Pf_{\alpha,\beta \in [M]} \widetilde{K}_0^{(1)}(z_\alpha,z_\beta)
    \nonumber \\
    &&& \quad \times 
    \Pf_{\substack{k,l \in [M-N]}}
    \qty[ \sum_{\alpha,\beta=1}^M 
    \rvev{p_{M-N-k} \mid z_\alpha} \rvev{z_\alpha \mid \frac{1}{\widetilde{K}^{(1)}_0} \mid z_\beta} \rvev{z_\beta \mid p_{M-N-l}} ]
    \, .
\end{align}
\end{subequations}
\end{proposition}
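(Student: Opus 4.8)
The plan is to bypass the Schur-polynomial expansion used for the direct characteristic polynomial and instead exploit the Cauchy determinant~\eqref{eq:Cauchy_det2} to cancel the Vandermonde factor sitting in the measure of $\mathcal{Z}_N^{(1)} = \frac{1}{N!}\int \dd{\mu(X)} \Delta_N(X) \Pf_{i,j} A(x_i,x_j)$, after which de Bruijn's formula applies almost immediately. Writing $\prod_{\alpha}\det(z_\alpha-X)^{-1}=(-1)^{NM}\prod_{i\in[N],\,\alpha\in[M]}(x_i-z_\alpha)^{-1}$ and substituting the Cauchy determinant identity, the denominator $\Delta_N(X)$ it produces cancels precisely against the $\Delta_N(X)$ of the measure. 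What survives inside the integral is $\Pf_{i,j\in[N]}A(x_i,x_j)$ multiplied by a single determinant whose entries are the Cauchy kernels $1/(x_i-z_\alpha)$ together with monomials $x_i^{k-1}$ (when $N\ge M$) or $z_\alpha^{k-1}$ (when $N\le M$); the overall sign $(-1)^{NM}$ and the factor $1/\Delta_M(Z)$ are then already explicit. The split into the two cases reflects exactly whether this surviving determinant has size $N$ (and is entirely in the eigenvalues) or size $M$ (with only $N$ rows involving eigenvalues).

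For $N \geq M$ the Cauchy determinant is $N\times N$ and every row depends on some $x_i$, so de Bruijn's formula~\eqref{eq:dB_formula1} applies directly and yields $\Pf_{i,j\in[N]}\rvev{\phi_i \mid A \mid \phi_j}$, where $\phi_\alpha(x) = 1/(x-z_\alpha)$ for $\alpha\in[M]$ and $\phi_{M+k}(x) = x^{k-1}$ for $k\in[N-M]$. I would then read off the block structure: the lower-right $(N-M)\times(N-M)$ block $\rvev{x^{k-1}\mid A\mid x^{l-1}}$ has Pfaffian $\mathcal{Z}_{N-M}^{(1)}$, and the block Pfaffian formula~\eqref{eq:Pf_block} factors the whole object as $\mathcal{Z}_{N-M}^{(1)}$ times the Schur-complement Pfaffian over $\alpha,\beta\in[M]$. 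The essential identification is that the Schur-complement entry equals $\rvev{\phi_\alpha\mid A\mid \phi_\beta} - \sum_{k,l}\rvev{\phi_\alpha\mid A\mid x^{k-1}}(\cdots)_{kl}\rvev{x^{l-1}\mid A\mid \phi_\beta}$, which, using that the reproducing kernel $\sum_{k,l\in[N-M]}\ket{x^{k-1}}(\cdots)_{kl}\bra{x^{l-1}}$ built from the inverse of that block is exactly $K_{N-M}^{(1)}$ (basis-independence of the CD kernel) and the relation $\bar A_{N-M}=A - A K_{N-M}^{(1)} A$ from~\eqref{eq:Abar}, is precisely $\widetilde K_{N-M}^{(1)}(z_\alpha,z_\beta)$. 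I will stress that the crucial minus sign of this Schur complement matches the $-AK_{N-M}^{(1)}A$ term only after the skew-symmetry $\rvev{\phi_\beta\mid A\mid x^{l-1}}=-\rvev{x^{l-1}\mid A\mid \phi_\beta}$ is invoked on one of the cross factors.

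For $N \leq M$ the Cauchy determinant is $M\times M$ with only $N$ rows carrying eigenvalues, so de Bruijn cannot act on the whole determinant. Here I would first Laplace-expand along the $N$ Cauchy rows, selecting an $N$-element column set $J\subset[M]$; this splits it into $\det_{i\in[N],\,\alpha\in J}(x_i-z_\alpha)^{-1}$, to which de Bruijn applies and produces $\Pf_{\alpha,\beta\in J}\rvev{1/(x-z_\alpha)\mid A\mid 1/(x-z_\beta)}$, times the eigenvalue-free minor $\det_{\alpha\in[M]\setminus J}z_\alpha^{k-1}$. Recognizing the resulting sum over $J$ as the Laplace-type expansion of Lemma~\ref{lemma:Laplace_exp}, I would re-assemble it into one Pfaffian of size $2M-N$ (even, since $N\in2\mathbb{N}$) whose skew block is $\widetilde K_0^{(1)}(z_\alpha,z_\beta)=\rvev{1/(x-z_\alpha)\mid A\mid 1/(x-z_\beta)}$ (note $\bar A_0=A$) bordered by the rectangular block $z_\alpha^{k-1}$. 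Applying the block formula~\eqref{eq:Pf_block} with vanishing lower-right block factors this as $\Pf_{\alpha,\beta\in[M]}\widetilde K_0^{(1)}(z_\alpha,z_\beta)$ times $\Pf_{k,l\in[M-N]}\big[\sum_{\alpha,\beta}z_\alpha^{k-1}\,\rvev{z_\alpha\mid 1/\widetilde K_0^{(1)}\mid z_\beta}\,z_\beta^{l-1}\big]$, and a change of basis from the monomials $z^{k-1}$ to the monic $p_{M-N-k}$ inside the last Pfaffian puts it in the stated form.

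The main obstacle I anticipate is bookkeeping rather than conceptual, on two fronts. First, assembling all signs into the single clean factor $(-1)^{NM}$: the $(-1)^{NM}$ from reversing $z_\alpha-x_i$, the $\binom{M}{2}$ and $\Sigma(I)$ signs of the Laplace-type expansion, and the signs from reordering the monomial bases against the monic orderings implicit in $\mathcal{Z}_{N-M}^{(1)}$ and $K_{N-M}^{(1)}$, must all be tracked and shown to cancel. Second, the identification of the Schur complement (resp. the bordered block) with the dual CD kernel $\widetilde K_{N-M}^{(1)}$ (resp. $\widetilde K_0^{(1)}$) must be justified carefully; it rests on the exact finite-rank identity $\bar A_N = A - A K_N^{(1)} A$ and, critically, on the fact that the sign of the cross term is fixed by the skew-symmetry of $A$, which is what turns an apparent $+$ into the required $-$.
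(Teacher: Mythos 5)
Your proposal is correct and reaches the stated formulas, but it organizes the computation genuinely differently from the paper. The paper first expands $\prod_\alpha\det(z_\alpha-X)^{-1}$ in Schur polynomials (Lemma~\ref{lemma:Schur_expansion}), evaluates each Schur average as a Pfaffian via de Bruijn, applies the block formula~\eqref{eq:Pf_block} to pull out $\mathcal{Z}^{(1)}_{N-M}$ and produce $\bar A_{N-M}$, and only \emph{then} resums the partition sum (via the geometric series, resp.\ the Cauchy--Binet-type expansion~\eqref{eq:CB0} when $N\le M$) into the double Hilbert transform defining $\widetilde K^{(1)}$. You perform the resummation \emph{first}, packaging it into the Cauchy determinant~\eqref{eq:Cauchy_det2} so that the Vandermonde of the weight cancels and de Bruijn~\eqref{eq:dB_formula1} acts on a single finite determinant; the subsequent block factorization and the identification of the Schur complement with $\rvev{\tfrac{1}{x-z_\alpha}\mid\bar A_{N-M}\mid\tfrac{1}{\tilde x-z_\beta}}=\widetilde K^{(1)}_{N-M}(z_\alpha,z_\beta)$ via $\bar A_{N}=A-AK^{(1)}_{N}A$ from~\eqref{eq:Abar} and the basis-independence of the CD kernel, as well as the Laplace-type reassembly (Lemma~\ref{lemma:Laplace_exp}) for $N\le M$, then mirror the paper's steps. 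What your route buys is the avoidance of manipulating infinite sums of Pfaffians term by term; what it costs is exactly the sign bookkeeping you flag: the bordering columns in~\eqref{eq:Cauchy_det2} are the increasing monomials $x^{k-1}$ (resp.\ $z_\alpha^{k-1}$), whereas $\mathcal{Z}^{(1)}_{N-M}$ and the stated second Pfaffian are built from the decreasing monic basis $p_{N-M-k}$ (resp.\ $p_{M-N-k}$), so a reversal sign $(-1)^{(N-M)(N-M-1)/2}$ (resp.\ $(-1)^{(M-N)(M-N-1)/2}$) enters and must be absorbed into the normalization of the Cauchy determinant before integration. The paper sidesteps this by carrying the decreasing monic ordering from the start, so that the lower-right block is literally the Gram matrix of $\mathcal{Z}^{(1)}_{N-M}$; you should do the same (i.e., state the Cauchy determinant with columns $p_{N-M-k}(x_i)$) rather than merely anticipate that the signs cancel, since with the increasing-monomial convention taken at face value the factor does not disappear on its own.
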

\begin{proof}
We first consider the case $N \ge M$.
In this case, we apply the Schur polynomial expansion formula (Lemma~\ref{lemma:Schur_expansion}) to obtain
\begin{align}
    & \expval{\prod_{\alpha \in [M]} \det(z_\alpha - X)^{-1}}_1
    = (-1)^{NM} \sum_{\ell(\lambda) \le M} s_{\lambda}(Z) \expval{\frac{s_\lambda(X^{-1})}{\det X^{M}}}_1
    \nonumber \\
    & = \frac{(-1)^{NM}}{\mathcal{Z}_N^{(1)} \Delta_M(Z)} \sum_{\ell(\lambda) \le M} \det_{\alpha,\beta \in [M]} z_\alpha^{\lambda_\beta + M - \beta} 
    \nonumber \\
    & \hspace{5em} \times
    \Pf_{\substack{\alpha,\beta \in [M] \\ k,l \in [N-M] }} 
    \begin{bmatrix}
    \rvev{ x^{- \lambda_\alpha - M + \alpha - 1} \mid A \mid x^{- \lambda_\beta - M + \beta - 1} } & \rvev{ x^{- \lambda_\alpha - M + \alpha - 1} \mid A \mid x^{N-M-l} } \\
    \rvev{ x^{N-M-k} \mid A \mid x^{- \lambda_\beta - M + \beta - 1} } & \rvev{ x^{N-M-k} \mid A \mid x^{N-M-l} }
    \end{bmatrix}
    \nonumber \\
    & = \frac{(-1)^{NM}}{\Delta_M(Z)} \frac{\mathcal{Z}_{N-M}^{(1)}}{\mathcal{Z}_N^{(1)}} \sum_{\ell(\lambda) \le M} \det_{\alpha,\beta \in [M]} z_\alpha^{\lambda_\beta + M - \beta}
    \Pf_{\substack{\alpha,\beta \in [M] }} 
    \qty[
    \rvev{ x^{- \lambda_\alpha - M + \alpha - 1} \mid \bar{A}_{N-M} \mid x^{- \lambda_\beta - M + \beta - 1} } 
    ]
    \, .
\end{align}
Applying the Cauchy--Binet-type expansion (Lemma~\ref{lemma:CB_exp}), we have
\begin{align}
    & \sum_{\ell(\lambda) \le M} \det_{\alpha,\beta \in [M]} z_\alpha^{\lambda_\beta + M - \beta} \Pf_{\substack{\alpha,\beta \in [M] }} 
    \begin{bmatrix}
    \rvev{ x^{- \lambda_\alpha - M + \alpha - 1} \mid \bar{A}_{N-M} \mid x^{- \lambda_\beta - M + \beta - 1} }
    \end{bmatrix}
    \nonumber \\
    & = \frac{1}{M!} \sum_{\substack{r_\alpha = 0, \ldots,\infty \\ r_\alpha \neq r_\beta}} 
    \det_{\alpha,\beta \in [M]} z_\alpha^{r_\beta} \Pf_{\substack{\alpha,\beta \in [M] }} 
    \begin{bmatrix}
    \rvev{ x^{-r_\alpha-1} \mid \bar{A}_{N-M} \mid x^{-r_\beta-1} }
    \end{bmatrix}
    \nonumber \\
    & = \Pf_{\alpha,\beta \in [M]}
    \qty[ \sum_{r,s=0}^\infty z_\alpha^{r} z_\beta^{s} \rvev{ x^{-r-1} \mid \bar{A}_{N-M} \mid x^{-s-1} } ]
    = \Pf_{\alpha,\beta \in [M]}
    \qty[ \widetilde{K}_{N-M}^{(1)} (z_\alpha,z_\beta) ]
    \, .
\end{align}
The last expression is obtained as follows,
\begin{align}
    \sum_{r,s=0}^\infty z^{r} w^{s} \rvev{ x^{-r-1} \mid \bar{A}_{N-M} \mid x^{-s-1} }
    & = 
    \int \dd{\mu(x)} \dd{\mu(\tilde{x})}  \frac{\bar{A}_{N-M}(x,\tilde{x})}{(z - x)(w - \tilde{x})}
    \nonumber \\
    & = \widetilde{K}^{(1)}_{N-M}(z,w)
    \, .
    \label{eq:sum_to_Hilbert}
\end{align}
This completes the derivation for the case $N \ge M$.
\if0
In order to obtain this expression, we use \rem{To be updated}
\begin{align}
    z^{M-N} \frac{x^{N-M}}{z-x}
    & = z^{M-N} \qty( \frac{z^{N-M}}{z-x} - \frac{z^{N-M} - x^{N-M}}{z-x} )
    \nonumber \\
    & = \frac{1}{z-x} + \pi(z)
    \, ,
\end{align}
where $\pi(z)$ is a polynomial function in $z$ degree strictly lower than $N-M$, and use the fact that the function of degree lower than $N-M$ is projected out by the kernel.
\fi
For the case $N \le M$, on the other hand, the Schur polynomial expansion yields
\begin{align}
    & \expval{\prod_{\alpha \in [M]} \det(z_\alpha - X)^{-1}}_1
    = \sum_{\ell(\lambda) \le N} \frac{ s_{\lambda}(Z^{-1}) }{\det Z^N} \expval{s_\lambda(X)}_1
    \nonumber \\
    & = \frac{1}{\mathcal{Z}_N^{(1)} \Delta_M(Z)} \sum_{\ell(\lambda) \le N} \det_{\substack{\alpha \in [M], \beta \in [N] \\ k \in [M-N]}} 
    \begin{bmatrix}
    z_\alpha^{- \lambda_\beta - N + \beta - 1} \\ p_{M - N - k}(z_\alpha)
    \end{bmatrix}
    \Pf_{\alpha,\beta \in [N]}
    \rvev{ x^{\lambda_\alpha + N - \alpha} \mid A \mid x^{\lambda_\beta + N - \beta} }
    \, .
\end{align}
Applying the Cauchy--Binet-type expansion~\eqref{eq:CB0}, we obtain
\begin{align}
    & \expval{\prod_{\alpha \in [M]} \det(z_\alpha - X)^{-1}}_1
    \nonumber \\
    & = \frac{1/\mathcal{Z}_N^{(1)}}{\Delta_M(Z)}
    \Pf_{\substack{ \alpha,\beta \in [M] \\ k,l \in [M-N]}}
    \begin{bmatrix}
    \widetilde{K}_0^{(1)}(z_\alpha,z_\beta) & p_{M-N-l}(z_\alpha) \\
    -p_{M-N-k}(z_\beta) & 0
    \end{bmatrix}
    \nonumber \\
    & = \frac{1/\mathcal{Z}_N^{(1)}}{\Delta_M(Z)}
    \Pf_{\alpha,\beta \in [M]} \widetilde{K}_0^{(1)}(z_\alpha,z_\beta)
    \Pf_{\substack{k,l \in [M-N]}}
    \qty[ \sum_{\alpha,\beta=1}^M 
    \rvev{p_{M-N-k} \mid z_\alpha} \rvev{z_\alpha \mid \frac{1}{\widetilde{K}^{(1)}_0} \mid z_\beta} \rvev{z_\beta \mid p_{M-N-l}} ]
    \, .
\end{align}
This completes the proof.
\end{proof}

\begin{remark}
Applying the characteristic polynomial formulas for the two-point function $(M = 2)$, we obtain an alternative form of the CD kernels,
\begin{subequations}
\begin{align}
   \frac{\mathcal{Z}_{N+2}^{(1)}}{\mathcal{Z}_N^{(1)}} K_{N+2}^{(1)}(z,w) & = (z-w) \expval{\det(z-X) \det(w-X)}_1
   \, , \\
   \frac{\mathcal{Z}_{N-2}^{(1)}}{\mathcal{Z}_N^{(1)}} \widetilde{K}_{N-2}^{(1)}(z,w) & = (z-w) \expval{\det(z-X)^{-1} \det(w-X)^{-1}}_1
   \, ,
\end{align}
\end{subequations}
which was originally observed in~\cite{Borodin:2006CPAM}.
Such a relation between the characteristic polynomial average and the CD kernel is also available for the $\beta = 2$ model (a.k.a., generalized Heine formula).
See, e.g.~\cite{Eynard:2015aea} for details.
\end{remark}

\subsection{$\beta = 4$ model}

We consider the following special case of the $\beta = 4$ model~\eqref{eq:Z4} for $N \in \mathbb{N}$ and skew-symmetric functions $A(x,y) = - A(y,x)$ and $B(x,y) = - B(y,x)$, and a generic integrable function $S(x,y)$:
\begin{align}
    \mathcal{Z}_N^{(4)} & = \frac{1}{N!^2} \int \dd{\mu{(X;Y)}} \det_{i \in [N], j \in [2N]} 
    \begin{bmatrix}
    p_{2N-j}(x_i) & p_{2N-j}'(y_i)
    \end{bmatrix}
    \Pf_{\substack{i,j \in [N]}} 
    \begin{bmatrix}
    A(x_i,x_j) & S(x_i,y_j) \\ - S^\text{T}(y_i,x_j) & B(y_i,y_j)
    \end{bmatrix}
    \nonumber \\
    & = \frac{1}{N!^2} \int \dd{\mu{(X;Y)}} \det_{i \in [N], j \in [2N]} 
    \begin{bmatrix}
    p_{2N-j}(x_i) & p_{2N-j}'(y_i)
    \end{bmatrix}
    \Pf_{\substack{i,j \in [N]}} 
    \mathsf{A}(x_i,x_j;y_i,y_j)
    \, .
\end{align}

In order to consider the characteristic polynomial average for the $\beta = 4$ model, we compute the average of the modified Schur polynomial as follows.
\begin{lemma}[Modified Schur polynomial average]
Let $N \in \mathbb{N}$.
Then, the modified Schur polynomial average of the $\beta = 4$ model is given by a rank $2N$ Pfaffian as follows,
\begin{align}
    \expval{\mathsf{s}_\lambda(X;Y)}_4 = \frac{1}{\mathcal{Z}_{N}^{(4)}} \Pf_{i,j \in [2N]} 
    \rvev{ x^{\lambda_i + 2N - i} \mid \mathsf{A} \mid y^{\lambda_j + 2N - j} }
    \, ,
\end{align}
where we here denote the norm by
\begin{align}
    \rvev{ f \mid \mathsf{A} \mid g } := \rvev{ f, g' \mid \mathsf{A} \mid f, g' }
    \, .
\end{align}
\end{lemma}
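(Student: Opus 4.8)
The plan is to mirror the proof of the $\beta=1$ Schur polynomial average, exploiting the fact that the denominator $a_\emptyset(X;Y)$ of the modified Schur polynomial cancels precisely against the determinantal weight built into the probability distribution $\mathsf{P}_N^{(4)}$. The key preliminary observation is that the determinant factor $\det_{i\in[N],\,j\in[2N]}[\,p_{2N-j}(x_i)\ p_{2N-j}'(y_i)\,]$ appearing in $\mathcal{Z}_N^{(4)}$ and $\mathsf{P}_N^{(4)}$ equals $a_\emptyset(X;Y)$. Granting this, inserting $\mathsf{s}_\lambda(X;Y)=a_\lambda(X;Y)/a_\emptyset(X;Y)$ into the expectation value makes the two copies of $a_\emptyset(X;Y)$ cancel, leaving
\begin{align}
    \expval{\mathsf{s}_\lambda(X;Y)}_4
    = \frac{1}{N!^2\,\mathcal{Z}_N^{(4)}} \int \dd{\mu(X;Y)}\, a_\lambda(X;Y)\, \Pf_{i,j\in[N]} \mathsf{A}(x_i,x_j;y_i,y_j)
    \, .
\end{align}

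The column-reduction step is where a little care is required. Since $p_k(x)=x^k+\cdots$ is monic, each column of $\det[\,p_{2N-j}(x_i)\ p_{2N-j}'(y_i)\,]$ can be turned into the monomial column $[\,x_i^{2N-j}\ (y_i^{2N-j})'\,]$ by subtracting suitable multiples of the strictly-lower-degree columns; this is a unitriangular change of basis on the column space, hence leaves the determinant invariant. The only point to verify is consistency across the two row-blocks: a column operation acts on an entire column, and because the bottom-block entries are obtained from the top-block polynomial forms by $y$-differentiation, a \emph{linear} operation, the same combination that replaces $p_{2N-j}(x_i)$ by $x_i^{2N-j}$ on top simultaneously replaces $p_{2N-j}'(y_i)$ by $(y_i^{2N-j})'$ on the bottom. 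Thus the determinant equals $a_\emptyset(X;Y)$ exactly, as needed for the cancellation.

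It then remains to recognize $a_\lambda(X;Y)=\det_{i\in[N],\,j\in[2N]}[\,f_j(x_i)\ g_j(y_i)\,]$ with $f_j(x)=x^{\lambda_j+2N-j}$ and $g_j(y)=(y^{\lambda_j+2N-j})'$, which is precisely the determinantal structure demanded by de Bruijn's formula~\eqref{eq:dB_formula2} in the balanced case $N=M$, the accompanying block Pfaffian being exactly $\Pf_{i,j\in[N]}\mathsf{A}(x_i,x_j;y_i,y_j)$. Applying~\eqref{eq:dB_formula2} to the remaining integral converts it into the rank-$2N$ Pfaffian
\begin{align}
    \Pf_{i,j\in[2N]} \qty[ \rvev{f_i \mid A \mid f_j} + \rvev{f_i \mid S \mid g_j} - \rvev{g_i \mid S^\text{T} \mid f_j} + \rvev{g_i \mid B \mid g_j} ]
    = \Pf_{i,j\in[2N]} \rvev{x^{\lambda_i+2N-i} \mid \mathsf{A} \mid y^{\lambda_j+2N-j}}
    \, ,
\end{align}
which is the claimed expression once divided by $\mathcal{Z}_N^{(4)}$.

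The main obstacle is organizational rather than conceptual: one must keep the de Bruijn index conventions straight (its column label $k\in[N+M]$ becomes $j\in[2N]$ here, and the combined bra/ket pair $(f_i,g_i)=(x^{\lambda_i+2N-i},(x^{\lambda_i+2N-i})')$ assembles into the single state appearing in the compressed notation $\rvev{f \mid \mathsf{A} \mid g}:=\rvev{f,g' \mid \mathsf{A} \mid f,g'}$), and one must confirm that the monic-polynomial specialization used to define $\mathcal{Z}_N^{(4)}$ is compatible with the monomial form built into $a_\lambda$. Once the $a_\emptyset$ cancellation is established, the remainder is a direct consequence of de Bruijn's formula, exactly as the $\beta=1$ Schur average followed from~\eqref{eq:dB_formula1}.
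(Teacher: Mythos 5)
Your proposal is correct and follows essentially the same route as the paper: substitute the definition $\mathsf{s}_\lambda = a_\lambda/a_\emptyset$, cancel $a_\emptyset(X;Y)$ against the determinantal weight of $\mathsf{P}_N^{(4)}$, and apply de Bruijn's formula~\eqref{eq:dB_formula2} to the remaining integral. The only difference is that you spell out the unitriangular column-operation argument identifying $\det[\,p_{2N-j}(x_i)\ p_{2N-j}'(y_i)\,]$ with $a_\emptyset(X;Y)$, a step the paper leaves implicit.
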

\begin{proof}
From the definition of the modified Schur polynomial~\eqref{eq:Schur_def_mod}, we have
\begin{align}
    & \expval{\mathsf{s}_\lambda(X;Y)}_4 
    \nonumber \\
    & = \frac{1}{N!^2 \mathcal{Z}_N^{(4)}} \int \dd{\mu(X; Y)} \det_{i \in [N], j \in [2N]} 
    \begin{bmatrix}
    x_i^{\lambda_j + 2N - j} & (y_i^{\lambda_j + 2N - j})'
    \end{bmatrix}
    \Pf_{\substack{i,j \in [N]}} 
    \begin{bmatrix}
    A(x_i,x_j) & S(x_i,y_j) \\ - S^\text{T}(y_i,x_j) & B(y_i,y_j)
    \end{bmatrix}
    \, .
\end{align}
Then, applying the de Bruijn's formula~\eqref{eq:dB_formula2}, we arrive at the formula.
\end{proof}

\begin{remark}
Putting 
\begin{align}
 A(x,y) = B(x,y) = 0 \, , \qquad S(x,y) = \delta(x-y)
 \, ,
 \label{eq:beta4specialization}
\end{align}
as in Lemma~\ref{lemma:beta4specialization}, the average of the modified Schur polynomial is reduced to that for the ordinary $2N$-variable Schur polynomial with respect to the symplectic ensemble,
\begin{align}
    \expval{\mathsf{s}_\lambda(X;Y)}_4 \ \xrightarrow{\eqref{eq:beta4specialization}} \ \expval{{s}_\lambda(X;X)}_4
    \, .
\end{align}
\end{remark}

In order to describe the characteristic polynomial average, we define the auxiliary kernel, which is the $(1,1)$-component of the matrix kernel $\mathsf{K}_N^{(4)}$, by
\begin{align}
    K_N^{(4)}(x,y) = \sum_{i,j=0}^{2N-1} p_i(x) \widetilde{\mathsf{N}}^{(4)}_{i,j} p_j(y)
    \, .
\end{align}
Then, we may write the matrix kernel in terms of the auxiliary kernel as follows,
\begin{align}
    \mathsf{K}_N^{(4)}(x,\tilde{x};y,\tilde{y}) = 
    \begin{bmatrix}
    K_N^{(4)}(x,\tilde{x}) & \partial_{\tilde{y}} K_N^{(4)}(x,\tilde{y}) \\
    \partial_{y} K_N^{(4)}({y},\tilde{x}) & \partial^2_{y,\tilde{y}} K_N^{(4)}(y,\tilde{y})
    \end{bmatrix}
    \, .
\end{align}
With this auxiliary kernel, we obtain the following Pfaffian formula for the characteristic polynomial average.
\begin{proposition}
Let $N \in \mathbb{N}$, $M \in 2 \mathbb{N}$, and we denote $Z = (z_\alpha)_{\alpha \in [M]}$.
Then, the following Pfaffian formula holds,
\begin{align}
    \sum_{\lambda \subseteq (M^{2N})} (-1)^{|\lambda|} s_\lambda(Z) \expval{\mathsf{s}_\lambda(X;Y)}_4
    & = \frac{1}{\Delta_M(Z)} \frac{\mathcal{Z}_{N+M/2}^{(4)}}{\mathcal{Z}_N^{(4)}}
    \Pf_{\alpha,\beta \in[M]} K_{N+M/2}^{(4)} (z_\alpha,z_\beta)
    \, .
\end{align}
\end{proposition}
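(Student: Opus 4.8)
The plan is to follow the template of the $\beta=1$ characteristic polynomial proposition, with the role of the rank-$N$ Pfaffian now played by the rank-$2N$ Pfaffian attached to the matrix kernel $\mathsf{A}$. First I would substitute the modified Schur polynomial average from the preceding Lemma, so that each term $\expval{\mathsf{s}_\lambda(X;Y)}_4$ becomes $\frac{1}{\mathcal{Z}_N^{(4)}}\Pf_{i,j\in[2N]}\rvev{x^{\lambda_i+2N-i}\mid\mathsf{A}\mid y^{\lambda_j+2N-j}}$, and simultaneously write $s_\lambda(Z)=\frac{1}{\Delta_M(Z)}\det_{\alpha,\beta\in[M]}z_\alpha^{\lambda_\beta+M-\beta}$, pulling $\Delta_M(Z)^{-1}$ out of the sum. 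This turns the left-hand side into $\frac{1}{\mathcal{Z}_N^{(4)}\Delta_M(Z)}$ times a sum over $\lambda\subseteq(M^{2N})$ of a signed product of a rank-$M$ determinant in the $z_\alpha$ and a rank-$2N$ Pfaffian in the $\mathsf{A}$-norms.

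The core step is to recognize this $\lambda$-sum as the Laplace-type expansion (Lemma~\ref{lemma:Laplace_exp}), run in reverse, of a single bordered Pfaffian of size $2N+2M$. Concretely, I would assemble the block matrix
\begin{align}
    \Pf_{\substack{i,j\in[2N+M]\\ \alpha,\beta\in[M]}}
    \begin{bmatrix}
    \rvev{p_{2N+M-i}\mid\mathsf{A}\mid p_{2N+M-j}} & p_{2N+M-i}(z_\alpha) \\
    -p_{2N+M-j}(z_\beta) & 0
    \end{bmatrix}
    \, ,
\end{align}
whose Laplace expansion along the $z$-columns selects a $2N$-element index subset $I\subset[2N+M]$ pairing inside the $\mathsf{A}$-block and its $M$-element complement pairing against the border. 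Reading the degrees $\{2N+M-i : i\in I\}$ as $\{\lambda_i+2N-i\}$ reproduces the Pfaffian factor, while the complementary degrees reconstitute the determinant $\det_{\alpha,\beta}z_\alpha^{\lambda_\beta+M-\beta}$ (monicity of the $p_k$ making the passage from polynomials to monomials unitriangular), and the Laplace sign $(-1)^{\Sigma(I)+\binom{2N+M}{2}}$ must be matched against $(-1)^{|\lambda|}$. I expect this sign-and-exponent bookkeeping, together with the complementation relating the index set $I$ to the partition $\lambda$, to be the main obstacle, exactly as the appearance of the dual partition $\lambda^\vee$ was the delicate point in the $\beta=1$ case.

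Finally I would factorize the assembled bordered Pfaffian using the block formula~\eqref{eq:Pf_block} with the lower-right block vanishing: the diagonal block $\Pf_{i,j\in[2N+M]}\rvev{p_{2N+M-i}\mid\mathsf{A}\mid p_{2N+M-j}}$ is precisely the partition function $\mathcal{Z}_{N+M/2}^{(4)}$ of the enlarged model by the Pfaffian formula~\eqref{eq:Z4_pf}, while the Schur complement against the border yields $\Pf_{\alpha,\beta\in[M]}\qty[\sum_{i,j=0}^{2N+M-1}p_i(z_\alpha)\widetilde{\mathsf{N}}^{(4)}_{i,j}p_j(z_\beta)]=\Pf_{\alpha,\beta\in[M]}K_{N+M/2}^{(4)}(z_\alpha,z_\beta)$, the auxiliary $(1,1)$-kernel. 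Collecting the prefactors then gives $\frac{1}{\Delta_M(Z)}\frac{\mathcal{Z}_{N+M/2}^{(4)}}{\mathcal{Z}_N^{(4)}}\Pf_{\alpha,\beta\in[M]}K_{N+M/2}^{(4)}(z_\alpha,z_\beta)$, as claimed. A secondary check I would keep in mind is that only the first polynomial family $p_i$, and not the derivative family, couples to the $z_\alpha$, which is what makes the border one-sided and singles out the $(1,1)$-component kernel.
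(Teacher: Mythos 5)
Your proposal reproduces the paper's own proof: the same substitution of the modified Schur polynomial average, the same reverse Laplace-type expansion (Lemma~\ref{lemma:Laplace_exp}) assembling the bordered Pfaffian of size $2N+2M$ from $\rvev{p_{2N+M-i}\mid\mathsf{A}\mid p_{2N+M-j}}$ and $p_{2N+M-i}(z_\alpha)$, and the same block-Pfaffian factorization~\eqref{eq:Pf_block} into $\mathcal{Z}_{N+M/2}^{(4)}$ times the $(1,1)$-component kernel. The complementation bookkeeping you flag is indeed the only delicate point (the complementary degrees produce $\lambda^\vee_\beta+M-\beta$ in the $z$-determinant), and the paper treats it exactly as you anticipate.
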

\begin{proof}
Applying the Laplace-type expansion (Lemma~\ref{lemma:Laplace_exp}) and the formula for the block matrix~\eqref{eq:Pf_block}, we have
\begin{align}
    & \sum_{\lambda \subseteq (M^{2N})} (-1)^{|\lambda|} s_\lambda(Z) \expval{\mathsf{s}_\lambda(X;Y)}_4
    \nonumber \\
    & = \frac{1}{\Delta_M(Z)\mathcal{Z}_{N}^{(4)}} \sum_{\lambda \subseteq (M^{2N})} (-1)^{|\lambda|} \det_{\alpha,\beta \in [M]} z_\alpha^{\lambda_\beta^\vee + M - \beta}
    \Pf_{i,j \in [2N]} 
    \rvev{ x^{\lambda_i + 2N - i} \mid \mathsf{A} \mid y^{\lambda_j + 2N - j} }
    \nonumber \\
    & = \frac{1}{\Delta_M(Z)\mathcal{Z}_{N}^{(4)}} \Pf_{\substack{i,j \in [2N+M] \\ \alpha,\beta \in [M]}}
    \begin{bmatrix}
    \rvev{ x^{2N + M - i} \mid \mathsf{A} \mid y^{2N + M - j} }
    & z_\alpha^{2N+M-i} \\
    - z_\beta^{2N+M-j} & 0
    \end{bmatrix}
    \nonumber \\
    & = \frac{1}{\Delta_M(Z)\mathcal{Z}_{N}^{(4)}} \Pf_{\substack{i,j \in [2N+M] \\ \alpha,\beta \in [M]}}
    \begin{bmatrix}
    \rvev{ p_{2N + M - i} \mid \mathsf{A} \mid p_{2N + M - j} }
    & p_{2N+M-i}(z_\alpha) \\
    - p_{2N+M-j}(z_\beta) & 0
    \end{bmatrix}
    \nonumber \\
    & = \frac{1}{\Delta_M(Z)}\frac{\mathcal{Z}_{N+M/2}^{(4)}}{\mathcal{Z}_{N}^{(4)}} \Pf_{\alpha,\beta \in [M]} \qty[ \sum_{i,j=0}^{2N+M-1} p_{i}(z_\alpha) \widetilde{N}^{(4)}_{i,j} p_j(z_\beta) ]
    = \frac{1}{\Delta_M(Z)} \frac{\mathcal{Z}_{N+M/2}^{(4)}}{\mathcal{Z}_N^{(4)}}
    \Pf_{\alpha,\beta \in[M]} {K}_{N+M/2}^{(4)} (z_\alpha,z_\beta)
    \, .
\end{align}
This completes the proof.
\end{proof}

\begin{corollary}[Characteristic polynomial]
The characteristic polynomial average with respect to the symplectic ensemble is given by a Pfaffian,
\begin{align}
    \expval{\prod_{\alpha \in [M]} \det(z_\alpha - X)^2}_4 = \left. \frac{\mathcal{Z}_{N+M/2}^{(4)}/\mathcal{Z}_N^{(4)}}{\Delta_M(Z)}
    \Pf_{\alpha,\beta \in[M]} {K}_{N+M/2}^{(4)} (z_\alpha,z_\beta)\right|_\eqref{eq:beta4specialization}
\end{align}
\end{corollary}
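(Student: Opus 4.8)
The plan is to obtain the Corollary simply as the specialization \eqref{eq:beta4specialization} of the preceding Proposition: its right-hand side is, by construction, the Proposition's right-hand side evaluated under \eqref{eq:beta4specialization}, so the entire content is to show that, under this specialization, the Schur-weighted sum on the Proposition's left-hand side collapses to the characteristic polynomial average $\expval{\prod_{\alpha\in[M]}\det(z_\alpha-X)^2}_4$. First I would apply the Remark following the modified Schur polynomial average: under $A=B=0$ and $S(x,y)=\delta(x-y)$ each term reduces as $\expval{\mathsf{s}_\lambda(X;Y)}_4\to\expval{s_\lambda(X;X)}_4$, where $s_\lambda(X;X)$ is the ordinary Schur polynomial evaluated on the $2N$ symplectic variables, in which every eigenvalue $x_i$ is repeated twice. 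By linearity of the expectation value I would then bring the finite sum over $\lambda\subseteq(M^{2N})$ inside, so that the specialized left-hand side reads $\expval{\sum_{\lambda\subseteq(M^{2N})}(-1)^{|\lambda|}\,s_{\lambda^\vee}(Z)\,s_\lambda(X;X)}_4$.

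The key step is then to recognize the integrand as a Schur expansion. Writing the doubled eigenvalue set as $W=(x_1,\dots,x_N,x_1,\dots,x_N)$, so that $\prod_{i\in[2N]}(z_\alpha-w_i)=\prod_{i\in[N]}(z_\alpha-x_i)^2=\det(z_\alpha-X)^2$, the first identity of Lemma~\ref{lemma:Schur_expansion}, applied with the $M$ variables $Z$ and the $2N$ variables $W$, yields $\sum_{\lambda\subseteq(M^{2N})}(-1)^{|\lambda|}\,s_{\lambda^\vee}(Z)\,s_\lambda(W)=\prod_{\alpha\in[M]}\prod_{i\in[2N]}(z_\alpha-w_i)=\prod_{\alpha\in[M]}\det(z_\alpha-X)^2$. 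Substituting this back identifies the specialized left-hand side with $\expval{\prod_{\alpha\in[M]}\det(z_\alpha-X)^2}_4$; note that the square in the Corollary is exactly the shadow of the Kramers doubling of the eigenvalues in the symplectic ensemble. Combining with the right-hand side, which is inherited verbatim from the Proposition (the Pfaffian of $K_{N+M/2}^{(4)}$ together with the ratio $\mathcal{Z}_{N+M/2}^{(4)}/\mathcal{Z}_N^{(4)}$ and the factor $1/\Delta_M(Z)$, all restricted to \eqref{eq:beta4specialization}), completes the proof.

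The point that needs care is the bookkeeping of the doubled-variable structure together with the box complementation $\lambda\mapsto\lambda^\vee$ inside $(M^{2N})$. One must check that the label attached to the Schur factor in $Z$ is the dual $\lambda^\vee$ rather than $\lambda$, consistently with the dual Cauchy form used in the Proposition's own proof (and matching the $\beta=1$ characteristic polynomial Proposition), and that the sign $(-1)^{|\lambda|}$ is compatible with this bijection --- which it is, because the box $(M^{2N})$ has even area $2MN$, so $(-1)^{|\lambda|}=(-1)^{|\lambda^\vee|}$. Everything else follows directly once the specialization $S=\delta$ is seen to collapse the confluent $\mathsf{s}_\lambda(X;Y)$ to the repeated-variable Schur polynomial $s_\lambda(X;X)$.
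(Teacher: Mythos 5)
Your proposal is correct and follows exactly the route the paper intends: the Corollary is the specialization~\eqref{eq:beta4specialization} of the preceding Proposition, combined with the Remark that $\expval{\mathsf{s}_\lambda(X;Y)}_4 \to \expval{s_\lambda(X;X)}_4$ and the dual Cauchy expansion of $\prod_{\alpha}\det(z_\alpha - X)^2$ over the doubled alphabet. Your observation that the Schur factor in $Z$ should carry the label $\lambda^\vee$ (as in the Proposition's own proof, where $\det_{\alpha,\beta} z_\alpha^{\lambda^\vee_\beta + M - \beta}$ appears) and that the sign is unaffected since the box $(M^{2N})$ has even area is a careful and correct reading of a notational wrinkle the paper leaves implicit.
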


We then consider the characteristic polynomial inverse for the $\beta = 4$ model.
For this purpose, we introduce the dual CD kernel as follows.
\begin{definition}[Dual Christoffel--Darboux kernel]
Let $N \in \mathbb{N}$.
We define the integral operator as follows,
\begin{align}
    \bar{\mathsf{A}}_N
    = \mathsf{A} \qty( \widetilde{\mathsf{A}}  - \mathsf{K}_{N}^{(4)} ) \mathsf{A}
    \, .
    \label{eq:Abar4}
\end{align}
Recalling the relation~\eqref{eq:K4_infini}, we may write
\begin{align}
    \bar{\mathsf{A}}_N = \sum_{i=N}^\infty \mathsf{A} 
    \begin{bmatrix}
    \ket{\phi_{2i-1}} \bra{\phi_{2i-2}} - \ket{\phi_{2i-2}} \bra{\phi_{2i-1}} &
    \ket{\phi_{2i-1}} \bra{\phi'_{2i-2}} - \ket{\phi_{2i-2}} \bra{\phi_{2i-1}'} \\[.5em]
    \ket{\phi_{2i-1}'} \bra{\phi_{2i-2}} - \ket{\phi_{2i-2}'} \bra{\phi_{2i-1}} &
    \ket{\phi_{2i-1}'} \bra{\phi_{2i-2}'} - \ket{\phi_{2i-2}'} \bra{\phi_{2i-1}'}
    \end{bmatrix}
    \mathsf{A}
    \, ,
\end{align}
where $(\phi_k)_{k \in \mathbb{N}}$ is a set of degree-$k$ skew-orthonormal polynomials, $\rvev{ \phi_{2i-2} \mid \mathsf{A} \mid \phi_{2j-1} } = \delta_{i,j}$, $\rvev{ \phi_{2i} \mid \mathsf{A} \mid \phi_{2j}} = \rvev{\phi_{2i-1} \mid \mathsf{A} \mid \phi_{2j-1}} = 0$.
Then, we define the dual matrix Christoffel--Darboux kernel through the double Hilbert transform,
\begin{align}
    \widetilde{\mathsf{K}}^{(4)}_{N}(z,w) 
    & = 
    \int \dd{\mu(x)} \dd{\mu(\tilde{x})} \frac{ \bar{\mathsf{A}}_N(x,\tilde{x}) }{(z - x)(w - \tilde{x})}
    \, .
\end{align}
We may also define an auxiliary kernel (the $(1,1)$-component of the matrix kernel $\widetilde{\mathsf{K}}^{(4)}_N$),
\begin{align}
    \widetilde{K}^{(4)}_N(z,w) 
    & =
    \int \dd{\mu(x)} \dd{\mu(\tilde{x})} \sum_{i= N}^\infty  \frac{ \bra{x \mid \mathsf{A} } \qty( \ket{\phi_{2i-1}} \bra{\phi_{2i-2}} -  \ket{\phi_{2i-2}} \bra{\phi_{2i-1}} ) \ket{ \mathsf{A} \mid \tilde{x} } }{(z - x)(w - \tilde{x})}
    \, .
\end{align}
\end{definition}

Using this dual CD kernel, we obtain the following Pfaffian formula.
\begin{proposition}
Let $N \in \mathbb{N}$, $M \in 2\mathbb{N}$, and we denote $Z = (z_\alpha)_{\alpha \in [M]}$.
Then, the following Pfaffian formulas hold.
\begin{subequations}
\begin{align}
    & 2N \ge M : 
    \sum_{\ell(\lambda) \le M} \frac{s_\lambda(Z^{-1})}{\det Z^{2N}} \expval{\mathsf{s}_\lambda(X;Y)}_4
    = \frac{1}{\Delta_M(Z)} \frac{\mathcal{Z}_{N-M/2}^{(4)}}{\mathcal{Z}_N^{(4)}}
    \Pf_{\alpha,\beta \in[M]} \widetilde{\mathsf{K}}_{N-M/2}^{(4)} (z_\alpha,z_\beta)
    \, , \\
    & 2N \le M : 
    \sum_{\ell(\lambda) \le 2N} \frac{s_\lambda(Z^{-1})}{\det Z^{2N}} \expval{\mathsf{s}_\lambda(X;Y)}_4 \nonumber \\
    & = \frac{1}{\Delta_M(Z)\mathcal{Z}_N^{(4)}}\Pf_{\alpha,\beta \in [M]} \mathsf{\widetilde{K}}_0^{(4)}(z_\alpha,z_\beta) \Pf_{k,l \in [M-2N]}\qty[\sum_{\alpha,\beta=1}^M \rvev{p_{M-2N-k} \mid z_\alpha}\rvev{z_\alpha \mid \frac{1}{\mathsf{\widetilde{K}}_0^{(4)}} \mid z_\beta} \rvev{z_\beta \mid p_{M-2N-l}}]
    \, .
\end{align}
\end{subequations}
\end{proposition}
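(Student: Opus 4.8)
The plan is to run the argument used for the $\beta=1$ characteristic polynomial inverse, now substituting the modified Schur polynomial average (the Lemma above) for the ordinary Schur average and the $2\times2$ block interaction $\mathsf{A}$ for the scalar $A$, while remembering that the relevant Pfaffians have rank $2N$ rather than $N$. In both cases I would first insert $\expval{\mathsf{s}_\lambda(X;Y)}_4 = (1/\mathcal{Z}_N^{(4)})\,\Pf_{i,j\in[2N]}\rvev{x^{\lambda_i+2N-i}\mid\mathsf{A}\mid y^{\lambda_j+2N-j}}$ into the left-hand sum, and expand the prefactor $s_\lambda(Z^{-1})/\det Z^{2N}$ into its determinantal form, so that each summand is a $Z$-determinant times a rank-$2N$ Pfaffian. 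The restriction $\ell(\lambda)\le\min(M,2N)$ is precisely the one imposed by the Cauchy/Schur expansion of the inverse characteristic polynomial, and it is what separates the two cases: whether $2N-M$ rows of the Pfaffian or $M-2N$ rows of the $Z$-determinant are inert.

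Case $2N\ge M$. Since $\lambda_i=0$ for $i>M$, the rank-$2N$ Pfaffian splits into an $[M]$-block carrying the $\lambda$-dependent powers $x^{\lambda_\alpha+2N-\alpha}$ (with derivative partners on the $y$-side) and an inert $[2N-M]$-block of fixed monomials spanning degrees $0,\dots,2N-M-1$. I would apply the block Pfaffian factorization~\eqref{eq:Pf_block}: the Pfaffian of the inert corner is exactly $\mathcal{Z}_{N-M/2}^{(4)}$ by~\eqref{eq:Z4_pf}, while the Schur complement turns $\mathsf{A}$ into $\bar{\mathsf{A}}_{N-M/2}=\mathsf{A}(\widetilde{\mathsf{A}}-\mathsf{K}_{N-M/2}^{(4)})\mathsf{A}$ of~\eqref{eq:Abar4}, reducing the problem to a rank-$M$ Pfaffian with entries $\rvev{x^{\lambda_\alpha+2N-\alpha}\mid\bar{\mathsf{A}}_{N-M/2}\mid y^{\lambda_\beta+2N-\beta}}$. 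I would then convert $\sum_{\ell(\lambda)\le M}$ into an unrestricted sum over $r_\alpha=\lambda_\alpha+M-\alpha$, absorb the $Z$-determinant into the Pfaffian by a de Bruijn-type summation, and recognize the resulting double power series as the large-$z$ expansion of the double Hilbert transform $\widetilde{\mathsf{K}}_{N-M/2}^{(4)}(z_\alpha,z_\beta)$, which is the claimed right-hand side.

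Case $2N\le M$. Now every row of the Pfaffian may carry a nontrivial $\lambda_i$, so there is no inert corner; instead the $M$-variable $Z$-determinant acquires $M-2N$ extra polynomial rows $p_{M-2N-k}(z_\alpha)$. I would use the Cauchy--Binet-type expansion~\eqref{eq:CB0} (Lemma~\ref{lemma:CB_exp}) to merge the $Z$-determinant, these extra rows, and the rank-$2N$ Pfaffian into a single block Pfaffian whose diagonal block is $\widetilde{\mathsf{K}}_0^{(4)}(z_\alpha,z_\beta)$ (the $N=0$ dual kernel, obtained from $\bar{\mathsf{A}}_0=\mathsf{A}$) and whose off-diagonal block is the polynomial rectangle $p_{M-2N-l}(z_\alpha)$. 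A final application of~\eqref{eq:Pf_block} then factors off $\Pf_{\alpha,\beta\in[M]}\widetilde{\mathsf{K}}_0^{(4)}$ and leaves the Schur-complement Pfaffian $\Pf_{k,l\in[M-2N]}[\,\cdots\,]$ in the stated form.

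The step I expect to be the main obstacle is the bookkeeping in the folding. Because $\mathsf{A}$ is a block operator mixing the polynomial entries $x^k$ with the derivative entries $(y^k)'$, I must check that the Schur complement produced by~\eqref{eq:Pf_block} genuinely reproduces $\bar{\mathsf{A}}_{N-M/2}$, with the matrix-valued skew-orthonormal projector $\ket{\phi_{2i-1}}\bra{\phi_{2i-2}}-\ket{\phi_{2i-2}}\bra{\phi_{2i-1}}$ summed over $i\ge N-M/2$, and that the inert corner is exactly $\mathcal{Z}_{N-M/2}^{(4)}$ rather than a variant with the wrong polynomial span. Tracking the signs through the $r_\alpha$-substitution and verifying the formal convergence of the double power series into the Hilbert transform is the only genuinely analytic point; the remainder is a transcription of the $\beta=1$ proof under $N\mapsto2N$, $A\mapsto\mathsf{A}$, and $K^{(1)}\mapsto K^{(4)}$.
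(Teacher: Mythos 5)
Your proposal is correct and follows essentially the same route as the paper: for $2N\ge M$ it inserts the modified Schur average, factors the inert $[2N-M]$-corner into $\mathcal{Z}_{N-M/2}^{(4)}$ via the block Pfaffian formula so that the Schur complement becomes $\bar{\mathsf{A}}_{N-M/2}$, and resums the unrestricted $r_\alpha$-series into the double Hilbert transform $\widetilde{\mathsf{K}}_{N-M/2}^{(4)}$; for $2N\le M$ it uses the Cauchy--Binet-type expansion followed by the block factorization, exactly as in the paper. The bookkeeping points you flag (the matrix-valued Schur complement and the identification of the inert corner with $\mathcal{Z}_{N-M/2}^{(4)}$) are indeed the only delicate steps, and the paper handles them by the same transcription of the $\beta=1$ argument that you describe.
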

\begin{proof}
We first consider the case $2N \ge M$.
The summation over the partition is given as follows,
\begin{align}
    & \sum_{\ell(\lambda) \le M} \frac{s_\lambda(Z^{-1})}{\det Z^{2N}} \expval{\mathsf{s}_\lambda(X;Y)}_4
    \nonumber \\
    & = \frac{1}{\Delta_{M}(Z) \mathcal{Z}_N^{(4)}} \sum_{\ell(\lambda) \le M} \det_{\alpha,\beta \in [M]} z_\alpha^{-\lambda_\beta - 2N + \beta - 1}
    \nonumber \\
    & \hspace{5em} \times
    \Pf_{\substack{\alpha,\beta \in [M] \\ k,l \in [2N-M]}}
    \begin{bmatrix}
    \rvev{ x^{\lambda_\alpha + 2N - \alpha} \mid \mathsf{A} \mid y^{\lambda_\beta + 2N - \beta} } &
    \rvev{ x^{\lambda_\alpha + 2N - \alpha} \mid \mathsf{A} \mid y^{2N - M - l} } \\
    \rvev{ x^{2N-M-k} \mid \mathsf{A} \mid y^{\lambda_\beta + 2N - \beta} } &
    \rvev{ x^{2N-M-k} \mid \mathsf{A} \mid y^{2N-M-l} }
    \end{bmatrix}
    \nonumber \\
    & =
    \frac{1}{\Delta_{M}(Z)} \frac{\mathcal{Z}_{N-M/2}^{(4)}}{\mathcal{Z}_N^{(4)}}
    \frac{1}{M!} \sum_{ \substack{ r_\alpha = 0 ,\ldots, \infty \\ r_\alpha \neq r_\beta } } \det_{\alpha,\beta \in [M]} z_\alpha^{-r_\beta + M - 2N - 1}
    \Pf_{\alpha,\beta \in [M]} \rvev{ x^{r_\alpha - M + 2N} \mid \bar{\mathsf{A}}_{N-M/2} \mid y^{r_\beta - M + 2N} }
    \nonumber \\
    & = 
    \frac{1}{\Delta_{M}(Z)} \frac{\mathcal{Z}_{N-M/2}^{(4)}}{\mathcal{Z}_N^{(4)}}
    \Pf_{\alpha,\beta \in [M]}
    \qty[ \sum_{r,s=0}^\infty z_\alpha^{-r+M-2N-1} z_\beta^{-s+M-2N-1} \rvev{ x^{r - M + 2N} \mid \bar{\mathsf{A}}_{N-M/2} \mid y^{s - M + 2N} } ]
    \nonumber \\
    & = 
    \frac{1}{\Delta_{M}(Z)} \frac{\mathcal{Z}_{N-M/2}^{(4)}}{\mathcal{Z}_N^{(4)}}
    \Pf_{\alpha,\beta \in [M]} \widetilde{\mathsf{K}}_{N-M/2}^{(4)} (z_\alpha,z_\beta)
    \, .
\end{align}

In order to obtain the last expression, we may apply the same argument to the $\beta = 1$ case~\eqref{eq:sum_to_Hilbert}. The case $2N \le M$ can be obtained in a similar way. Indeed,

\begin{align}
    &\sum_{\ell(\lambda) \le 2N} \frac{s_\lambda(Z^{-1})}{\det Z^{2N}} \expval{\mathsf{s}_\lambda(X;Y)}_4 \nonumber \\
    &= \frac{1}{\mathcal{Z}_N^{(4)} \Delta_M(Z)} \sum_{\ell(\lambda) \le 2N} \det_{\substack{\alpha \in [M], \beta \in [2N] \\ k \in [M-2N]}} 
    \begin{bmatrix}
    z_\alpha^{- \lambda_\beta - 2N + \beta - 1} \\ p_{M - 2N - k}(z_\alpha)
    \end{bmatrix}
    \Pf_{\alpha,\beta \in [2N]}
    \rvev{ x^{\lambda_\alpha + 2N - \alpha} \mid \mathsf{A} \mid y^{\lambda_\beta + 2N - \beta} } \nonumber \\
    &= \frac{1}{\mathcal{Z}_N^{(4)} \Delta_M(Z)} \Pf_{\substack{\alpha,\beta \in [M]\\k,l \in [M-2N]}} 
    \begin{bmatrix}
    \mathsf{\widetilde{K}}_0^{(4)}(z_\alpha,z_\beta) & p_{M-2N-l}(z_\alpha) \\
    -p_{M-2N-k}(z_\beta) & 0
    \end{bmatrix} \nonumber \\
    &= \frac{1}{\Delta_M(Z)\mathcal{Z}_N^{(4)}}\Pf_{\alpha,\beta \in [M]} \mathsf{\widetilde{K}}_0^{(4)}(z_\alpha,z_\beta) \Pf_{k,l \in [M-2N]}\qty[\sum_{\alpha,\beta=1}^M \rvev{p_{M-2N-k} \mid z_\alpha}\rvev{z_\alpha \mid \frac{1}{\mathsf{\widetilde{K}}_0^{(4)}} \mid z_\beta} \rvev{z_\beta \mid p_{M-2N-l}}]
\end{align}
where first the Cauchy-Binet-type formula \eqref{eq:CB0} and then the Pfaffian of a block matrix  \eqref{eq:Pf_block} have been used. This completes the proof. 
\end{proof}
\begin{corollary}[Characteristic polynomial inverse] 
The characteristic polynomial inverse with respect to the symplectic ensemble is therefore given by the Pfaffians 
\begin{subequations}
\begin{align}
    & 2N \ge M : 
    \expval{ \prod_{\alpha \in [M]} \det(z_\alpha - X)^{-2} }_4
    = \left. \frac{1}{\Delta_M(Z)} \frac{\mathcal{Z}_{N-M/2}^{(4)}}{\mathcal{Z}_N^{(4)}}
    \Pf_{\alpha,\beta \in[M]} \widetilde{\mathsf{K}}_{N-M/2}^{(4)} (z_\alpha,z_\beta) \right|_\eqref{eq:beta4specialization}
    \, , \\
    & 2N \le M : 
    \,  \expval{ \prod_{\alpha \in [M]} \det(z_\alpha - X)^{-2} }_4 \nonumber \\
    & = \frac{1}{\Delta_M(Z)\mathcal{Z}_N^{(4)}}\Pf_{\alpha,\beta \in [M]} \mathsf{\widetilde{K}}_0^{(4)}(z_\alpha,z_\beta) \Pf_{k,l \in [M-2N]}\qty[\sum_{\alpha,\beta=1}^M \rvev{p_{M-2N-k} \mid z_\alpha}\rvev{z_\alpha \mid \frac{1}{\mathsf{\widetilde{K}}_0^{(4)}} \mid z_\beta} \rvev{z_\beta \mid p_{M-2N-l}}].
\end{align}
\end{subequations}
\end{corollary}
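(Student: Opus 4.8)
The plan is to deduce this Corollary directly from the preceding Proposition by specializing the general $\beta=4$ data according to \eqref{eq:beta4specialization} and re-expanding the resulting sum over partitions as a characteristic-polynomial-inverse average via Lemma~\ref{lemma:Schur_expansion}. The key observation is that the two regimes $2N \ge M$ and $2N \le M$ in the statement correspond precisely to the two branches $\ell(\lambda) \le \min(M,2N)$ of the Cauchy expansion of a product of inverse linear factors.

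First I would recall, as noted in the Remark following the modified Schur polynomial average, that under \eqref{eq:beta4specialization} the modified Schur average collapses to an ordinary Schur average of the doubled variable set, $\expval{\mathsf{s}_\lambda(X;Y)}_4 \to \expval{s_\lambda(X;X)}_4$, where $s_\lambda(X;X)=s_\lambda(x_1,x_1,\dots,x_N,x_N)$ is the $2N$-variable Schur polynomial evaluated on $\tilde{X}=(x_1,x_1,\dots,x_N,x_N)$; simultaneously the partition function reduces to the symplectic-ensemble one of Lemma~\ref{lemma:beta4specialization}. Writing $\det(z_\alpha-X)^{-2}=\prod_{i\in[2N]}(z_\alpha-\tilde{x}_i)^{-1}$ and using that the overall sign $(-1)^{2NM}=1$ since $2N$ is even, I would apply the second expansion of Lemma~\ref{lemma:Schur_expansion} with the $2N$ variables $\tilde{X}$ and the $M$ variables $Z$, giving
\begin{equation}
    \prod_{\alpha\in[M]}\det(z_\alpha-X)^{-2}
    = \frac{1}{\det Z^{2N}}\sum_{\ell(\lambda)\le\min(M,2N)} s_\lambda(Z^{-1})\,s_\lambda(\tilde{X})
    \, .
\end{equation}

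Taking the expectation value of both sides and inserting the specialization identification above, the left-hand side becomes $\expval{\prod_\alpha\det(z_\alpha-X)^{-2}}_4$, while the right-hand side becomes exactly $\sum_\lambda \frac{s_\lambda(Z^{-1})}{\det Z^{2N}}\expval{\mathsf{s}_\lambda(X;Y)}_4$ evaluated at \eqref{eq:beta4specialization}, with the partition range $\ell(\lambda)\le M$ when $2N\ge M$ and $\ell(\lambda)\le 2N$ when $2N\le M$. These are precisely the two left-hand sides of the preceding Proposition, so substituting its Pfaffian right-hand sides yields the two claimed formulas, with the general-model kernels $\widetilde{\mathsf{K}}^{(4)}$ now understood under \eqref{eq:beta4specialization}.

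I expect no serious analytic obstacle, since this is a corollary of an already-established Proposition; the only care required is in the bookkeeping. Concretely, I must check that the doubled variable set $\tilde{X}$ is exactly the object produced by the specialization $S(x,y)=\delta(x-y)$, $A=B=0$, that the truncation $\ell(\lambda)\le\min(M,2N)$ of the Cauchy sum matches the two partition ranges appearing in the Proposition, and that the prefactor $\det Z^{-2N}$ combines with $\Delta_M(Z)$ to reproduce the stated normalizations. The sign is harmless here, as the parity factor $(-1)^{2NM}$ is always $+1$, in contrast to the $\beta=1$ case where the analogous $(-1)^{NM}$ genuinely survives.
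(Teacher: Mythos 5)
Your proposal is correct and follows exactly the route the paper intends: the corollary is stated as an immediate consequence of the preceding Proposition, obtained by invoking the specialization Remark ($\expval{\mathsf{s}_\lambda(X;Y)}_4 \to \expval{s_\lambda(X;X)}_4$ under \eqref{eq:beta4specialization}), identifying $\det(z_\alpha-X)^{-2}$ with the inverse product over the doubled variable set, and matching the two branches $\ell(\lambda)\le\min(M,2N)$ of Lemma~\ref{lemma:Schur_expansion} with the two partition ranges of the Proposition. Your bookkeeping of the trivial sign $(-1)^{2NM}=1$ and of the $\det Z^{-2N}$ prefactor is consistent with the stated normalizations.
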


\section{$BCD$-quiver matrix models}\label{sec:chain}

We consider more generalized situations, that we call quiver matrix models, involving several sets of formal eigenvalues with interactions.
We in particular explore the $BCD$-type quivers.

\subsection{$D$-type quiver models}\label{sec:D-model}

We first consider the matrix model that we call the $D$-type quiver model as follows.
\begin{definition}
Let $L, N \in \mathbb{N}$ and $(X_k,\widetilde{X}_k)_{k \in [L]} = (x_{k,i},\tilde{x}_{k,i})_{i \in [N]}^{k \in [L]}$ a set of formal eigenvalues.
Denoting $Z_k = (z_{k,i},z_{k,N+i})_{i \in [N]} = (x_{k,i},\tilde{x}_{k,i})_{i \in [N]}$ and $\underline{X} = (X_k, \widetilde{X}_k)_{k \in [L]}$, we define the partition function of the $D_{L+1}$-type quiver matrix model,
\begin{align}
    &
    \mathcal{Z}[D_{L+1}] 
    \nonumber \\
    & = \int \frac{\dd{\mu(\underline{X}})}{(2N)!^{L-1} N!^2}
    \frac{\Delta_{2N}(Z_1)^2 \cdots \Delta_{2N}(Z_{L-1})^2 \Delta_N(X_L)^2 \Delta_{N}(\widetilde{X}_L)^2}{\displaystyle  \prod_{i,j \in [2N]} (z_{1,i} - z_{2,j}) \cdots (z_{L-2,i} - z_{L-1,j}) \prod_{i\in[2N],j\in[N]} (z_{L-1,i} - x_{L,j}) (z_{L-1,i} - \tilde{x}_{L,j}) }
    \, .
    \label{eq:D-model}
\end{align}
\end{definition}
This model is graphically described using the quiver diagram of type $D_{L+1}$ as shown in \eqref{eq:Dynkin_BCD}.
For this $D$-type model, the following Pfaffian formula holds.
\begin{proposition}
We denote $\displaystyle \omega(x,y) = \frac{1}{x-y}$.
Then, the following Pfaffian formula holds for the partition function of the $D$-type matrix model,
\begin{align}
    \mathcal{Z}[D_{L+1}]
    & =  \Pf_{i, j \in [2N]}
    \Bigg[ \int \prod_{k=1}^{L} \dd{\mu(x_k;\tilde{x}_k)}
    \Bigg( \frac{p_{2N-i}(x_1)}{x_1-x_2} \cdots \frac{1}{x_{L-1} - x_L} \frac{1}{x_L - \tilde{x}_L} \frac{1}{\tilde{x}_L - \tilde{x}_{L-1}} \cdots \frac{p_{2N-j}(\tilde{x}_1)}{\tilde{x}_2 - \tilde{x}_1}  
    \nonumber \\ & \hspace{8em} 
    - \frac{p_{2N-j}(\tilde{x}_1)}{\tilde{x}_1-\tilde{x}_2} \cdots \frac{1}{\tilde{x}_{L-1} - \tilde{x}_L} \frac{1}{\tilde{x}_L - {x}_L} \frac{1}{{x}_L - {x}_{L-1}} \cdots \frac{p_{2N-i}({x}_1)}{{x}_2 - {x}_1} 
    \Bigg)\Bigg]
    \nonumber \\
    & =: \Pf_{i, j \in [2N]}
    \qty[ \rvev{p_{2N-i} \mid \omega^{2L-1} \mid p_{2N-j}} - \rvev{p_{2N-j} \mid \omega^{2L-1} \mid p_{2N-i}} ]  
    \, .
\end{align}
\end{proposition}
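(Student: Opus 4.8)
The plan is to integrate out every eigenvalue set except the tip $Z_1$, thereby reducing the model to a generalized $\beta=1$ integral over $Z_1$ whose effective interaction kernel is the $(2L-1)$-fold composition of $\omega$ prescribed by the unique path through the quiver (down the $x$-branch, across the turn at the bifurcation, and back up the $\tilde x$-branch), and then to apply the de Bruijn formula~\eqref{eq:dB_formula1}. The first move is to reorganize the integrand using square Cauchy determinants: for each interior edge $k=1,\dots,L-2$, the case $N=M$ of~\eqref{eq:Cauchy_det2} gives $\det_{i,j\in[2N]}\omega(z_{k,i},z_{k+1,j}) = \Delta_{2N}(Z_k)\Delta_{2N}(Z_{k+1})/\prod_{i,j}(z_{k,i}-z_{k+1,j})$. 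Distributing the powers $\Delta_{2N}(Z_k)^2$ across the two incident edges consumes exactly one Vandermonde factor per edge, leaving one free $\Delta_{2N}(Z_1)$ at the tip and one free $\Delta_{2N}(Z_{L-1})$ at the bifurcation node. The prefactor $1/((2N)!^{L-1}N!^2)$ will be consumed by the $L-1$ de Bruijn integrations over the $Z_k$ together with the $N!^2$ from the two leaves.

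Next I would treat the bifurcation. Writing $W:=Z_{L-1}$ and $u=(X_L,\widetilde X_L)$ for the combined $2N$ leaf variables, two applications of the square Cauchy determinant identity, together with the factorization $\Delta_{2N}(u)=\Delta_N(X_L)\Delta_N(\widetilde X_L)\prod_{a,b}(x_{L,a}-\tilde x_{L,b})$, turn the leaf factor into
\[
\begin{aligned}
&\Delta_{2N}(W)\,\frac{\Delta_N(X_L)^2\,\Delta_N(\widetilde X_L)^2}{\prod_{i,j}(w_i-x_{L,j})(w_i-\tilde x_{L,j})}\\
&\qquad=\det_{a,b\in[N]}\omega(x_{L,a},\tilde x_{L,b})\;\det_{i,j\in[2N]}\omega(w_i,u_j),
\end{aligned}
\]
the spurious factor $\prod_{a,b}(x_{L,a}-\tilde x_{L,b})$ cancelling between the two Cauchy factors. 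Integrating the two leaves with the de Bruijn formula~\eqref{eq:dB_formula2}, specialized to $A=B=0$, $S=\omega$, and $f_k=g_k=\omega(w_k,\cdot)$, then yields the skew-symmetric kernel
\[
\bar C(w_i,w_j)=\rvev{\omega(w_i,\cdot)\mid\omega\mid\omega(w_j,\cdot)}-(i\leftrightarrow j),
\]
a threefold $\omega$-propagator carrying the turn $\omega(x_L,\tilde x_L)$, so that the entire leaf contribution is $\Pf_{i,j\in[2N]}\bar C(w_i,w_j)$.

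The remaining chain now collapses iteratively. The integrand over $Z_1,\dots,Z_{L-1}$ reads $\Delta_{2N}(Z_1)\prod_{k=1}^{L-2}\det\omega(z_k,z_{k+1})\,\Pf\bar C(Z_{L-1})$, and integrating $Z_{L-1},Z_{L-2},\dots,Z_2$ one at a time is precisely the de Bruijn formula~\eqref{eq:dB_formula1} applied with the current skew-symmetric kernel. Each such step replaces the kernel $\bar C^{(m)}(z,z')$ by $\rvev{\omega(z,\cdot)\mid\bar C^{(m)}\mid\omega(z',\cdot)}$, i.e. composes one $\omega$ on each side and raises the propagator length by two; the bilinear form of this composition preserves skew-symmetry at every stage, so each application is legitimate. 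After the $L-2$ chain integrations the kernel is the $(2L-1)$-fold composition $\omega^{2L-1}$ with endpoints on $Z_1$, and a final use of~\eqref{eq:dB_formula1} over $Z_1$, with $\det_{i,j\in[2N]}p_{2N-j}(z_{1,i})=\Delta_{2N}(Z_1)$ from~\eqref{eq:VDMdet}, produces $\Pf_{i,j\in[2N]}\qty[\rvev{p_{2N-i}\mid\omega^{2L-1}\mid p_{2N-j}}-(i\leftrightarrow j)]$, which is the claimed expression.

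The main obstacle is bookkeeping rather than structure. One must track the signs generated by the Cauchy determinant conventions in~\eqref{eq:Cauchy_det2}, by the antisymmetrization built into~\eqref{eq:dB_formula2}, and by the orientation of the turn factor $\omega(x_L,\tilde x_L)$, and then verify that the two antisymmetrized terms reassemble into exactly the displayed pair of $x$-branch and $\tilde x$-branch integrals, with $p_{2N-i}$ anchored at the $x$-tip and $p_{2N-j}$ at the $\tilde x$-tip. One should also confirm that every intermediate $\omega$-composition step is a genuine de Bruijn-1 application, which rests on the skew-symmetry of $\bar C$ being inherited by each $\bar C^{(m)}$; the rest is routine and follows the same manipulation as in the $\beta=1$ and $\beta=4$ computations above.
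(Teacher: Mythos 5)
Your argument is correct; it uses the same ingredients as the paper (the Cauchy-determinant rewriting of the Vandermonde/interaction factors, Andr\'eief, and the two de Bruijn formulas) but sweeps the quiver in the opposite direction. The paper first merges the two leaves into a single $2N$-variable node via $\Delta_N(X_L)\Delta_N(\widetilde X_L)=\Delta_{2N}(Z_L)/\prod_{i,j}(x_{L,i}-\tilde x_{L,j})$, collapses the chain from the tail $Z_1$ toward the fork with repeated Andr\'eief integrations (so every intermediate object is a determinant), and only at the very last step converts $\det_{i,j}\omega(x_{L,i},\tilde x_{L,j})$ into a Pfaffian via \eqref{eq:Pf2det} and applies \eqref{eq:dB_formula2} once. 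You instead integrate the two leaves first with \eqref{eq:dB_formula2}, producing a skew-symmetric kernel on $Z_{L-1}$ immediately, and then propagate that Pfaffian back to $Z_1$ with $L-1$ applications of \eqref{eq:dB_formula1}; this trades a single Pfaffian step for many, at the cost of having to verify skew-symmetry of each dressed kernel (which, as you note, is automatic since conjugation by $\omega$ preserves it). Your bookkeeping of the Vandermonde distribution, the cancellation of $\prod_{a,b}(x_{L,a}-\tilde x_{L,b})$ at the fork, and the prefactors $1/((2N)!^{L-1}N!^2)$ all check out, so both routes land on the same $\omega^{2L-1}$ kernel antisymmetrized in $(i,j)$.
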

\begin{proof}
Recalling the relation
\begin{align}
    \Delta_N(X_L) \Delta_N(\widetilde{X}_L) 
    = \frac{\Delta_{2N}(Z_L)}{\prod_{i,j \in [N]} (x_{L,i} - \tilde{x}_{L,j} ) }
    \, ,
\end{align}
we may rewrite the partition function as follows,
\begin{align}
    \mathcal{Z}[D_{L+1}]
    & = \int \frac{\dd{\mu(\underline{X}})}{(2N)!^{L-1} N!^2}
    \frac{\Delta_{2N}(Z_1)^2 \cdots \Delta_{2N}(Z_{L-1})^2  \Delta_{2N}(Z_L)^2}{\displaystyle  \prod_{i,j \in [2N]} (z_{1,i} - z_{2,j}) \cdots (z_{L-1,i} - z_{L,j}) \prod_{i,j\in[N]} (x_{L,i} - \tilde{x}_{L,j}) }
    \nonumber \\
    & = 
    \int \frac{\dd{\mu(\underline{X}})}{(2N)!^{L-1} N!^2}
    \det_{i,j \in [2N]} p_{2N-i}(z_{1,j}) 
    \prod_{k=1}^{L-1} \qty[ \det_{i,j \in [2N]} \qty(\frac{1}{z_{k,i} - z_{k+1,j}}) ] 
    \det_{i,j \in [N]} \qty( \frac{1}{x_{L,i} - \tilde{x}_{L,j}} )
    \, .
\end{align}
Applying Andréief formula~\eqref{eq:A_formula} and de Bruijn's formula~\eqref{eq:dB_formula} recursively, we obtain 
\begin{align}
    \mathcal{Z}[D_{L+1}]
    & = \int \frac{\dd{\mu(Z_L)}}{N!^2}
    \det_{ \substack{i\in [2N] \\ j \in [N]} } 
    \begin{bmatrix}
    \rvev{p_{2N-i} \mid \omega^{L-1} \mid x_{L,j}} \\
    \rvev{p_{2N-i} \mid \omega^{L-1} \mid \tilde{x}_{L,j}}
    \end{bmatrix}
    \det_{i,j \in [N]} \rvev{x_{L,i} \mid \omega \mid \tilde{x}_{L,j}}
    \nonumber \\
    & = \int \frac{\dd{\mu(Z_L)}}{N!^2}
    \det_{ \substack{i\in [2N] \\ j \in [N]} } 
    \begin{bmatrix}
    \rvev{p_{2N-i} \mid \omega^{L-1} \mid x_{L,j}} \\
    \rvev{p_{2N-i} \mid \omega^{L-1} \mid \tilde{x}_{L,j}}
    \end{bmatrix}
    \Pf_{i,j \in [N]}
    \begin{bmatrix}
    0 & \rvev{x_{L,i} \mid \omega \mid \tilde{x}_{L,j}} \\
    - \rvev{\tilde{x}_{L,i} \mid \omega \mid {x}_{L,j}} & 0
    \end{bmatrix}
    \nonumber \\
    & = \Pf_{i, j \in [2N]}
    \qty[ \rvev{p_{2N-i} \mid \omega^{2L-1} \mid p_{2N-j}} - \rvev{p_{2N-j} \mid \omega^{2L-1} \mid p_{2N-i}} ]
    \, .
\end{align}
This completes the proof.
\end{proof}

\begin{remark}
We may apply the same approach to consider the average of the characteristic polynomial coupled with the left-most matrix $X_1$, and obtain the Pfaffian formula in terms of the corresponding CD kernel.
See also \cite{Babinet:2022yij} for the $A$-type quiver model calculation.
\end{remark}

\subsection{Matrix chain with the Pfaffian interaction}

We then consider the matrix chain generalization of the $\beta = 1$ and $\beta = 4$ models.

\begin{definition}
Let $N \in 2\mathbb{N} \ (\beta = 1), \mathbb{N} \ (\beta = 4)$, $L \in \mathbb{N}$ and $\underline{X} = (X_k)_{k \in [L]} = (x_{k,i})_{i \in [N]}^{k \in [L]}$ a set of formal eigenvalues.
Let $A(x,y)$, $B(x,y)$ and $S(x,y)$ be skew-symmetric and generic integrable functions as before.
We denote by $(\omega_{k}(x,y))_{k=1,\ldots,L-1}$ generic integrable two-variable functions, and by $(f_{i}(x),g_{i}(x))_{i \in \mathbb{Z}_{\ge 0}}$ a set of integrable functions.
Then, we define the partition function of the $\beta = 1$ and $\beta = 4$ matrix chain models as follows,
\begin{subequations}
\begin{align}
    \mathcal{Z}_{N^L}^{(1)} & = \frac{1}{N!^L} \int \dd{\mu(\underline{X})} \det_{i,j \in [N]} f_{j-1}(x_{1,i}) 
    \prod_{k=1}^{L-1} \qty[ \det_{i,j \in [N]} \omega_{k}(x_{k,i},x_{k+1,j}) ]
    \Pf_{i,j \in [N]} A(x_{L,i},x_{L,j})
    \, , \\
    \mathcal{Z}_{N^L}^{(4)} & = \frac{1}{N!^{2L}} \int \dd{\mu(\underline{X})} \det_{i \in [N], j \in [2N]} \qty[ f_{j-1}(x_{1,i}) \ g_{j-1}(\tilde{x}_{1,i}) ]
    \nonumber \\
    & \hspace{5em} \times
    \prod_{k=1}^{L-1} \qty[ \det_{i,j \in [N]} {\Omega}_{k}(x_{k,i},\tilde{x}_{k,i};x_{k+1,j},\tilde{x}_{k+1,j}) ]
    \Pf_{i,j \in [N]} \mathsf{A}(x_{L,i},\tilde{x}_{L,i};x_{L,j},\tilde{x}_{L,j})
\end{align}
where we write
\begin{align}
    {\Omega}_{k}(x,\tilde{x};y,\tilde{y}) = 
    \begin{bmatrix}
    \omega_{k}^{11}(x,y) & \omega_{k}^{12}(x,\tilde{y}) \\
    \omega_{k}^{21}(\tilde{x},y) & \omega_{k}^{22}(\tilde{x},\tilde{y})
    \end{bmatrix}
    \, , \qquad
    \mathsf{A}(x,\tilde{x};y,\tilde{y}) = 
    \begin{bmatrix}
    A(x,y) & S(x,\tilde{y}) \\
    -S^\text{T}(\tilde{x},y) & B(\tilde{x},\tilde{y})
    \end{bmatrix}
    \, .
\end{align}
\end{subequations}
\end{definition}
For these partition functions, we obtain the Pfaffian formulas as follows.
\begin{proposition}\label{prop:Z_chain}
We define the dressed integral operators,
\begin{subequations}
\begin{align}
    A_{L-1} & = \omega_{1} \cdots \omega_{L-1} \cdot A \cdot \omega_{L-1}^\text{T} \cdots \omega_1^\text{T}
    \, , \\
    \mathsf{A}_{L-1} & = \Omega_1 \cdots \Omega_{L-1} \cdot \mathsf{A} \cdot \Omega_{L-1}^\text{T} \cdots \Omega_1^\text{T}
    \, .
 \end{align}
 \end{subequations}
Then, the partition functions of the coupled $\beta = 1$ and $\beta = 4$ models are given by Pfaffians,
\begin{subequations}
\begin{align}
    \mathcal{Z}_{N^L}^{(1)} & = \Pf_{i,j \in [N]} \rvev{ f_{i-1} \mid A_{L-1} \mid f_{j-1} }
    \, , \\
    \mathcal{Z}_{N^L}^{(4)} & = \Pf_{i,j \in [N]} \rvev{ f_{i-1}, g_{i-1} \mid \mathsf{A}_{L-1} \mid f_{j-1}, g_{j-1} }
    \, .
\end{align}
\end{subequations}
\end{proposition}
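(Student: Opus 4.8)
The plan is to integrate out the $L$ sets of eigenvalues one layer at a time, starting from the ``free end'' $X_1$ and propagating toward the ``Pfaffian end'' $X_L$. Each of the first $L-1$ integrations couples only two adjacent determinantal factors, so it can be performed by Andr\'eief's formula~\eqref{eq:A_formula} (for $\beta=1$) or its block analogue (for $\beta=4$); the effect of each such integration is to dress the left-hand functions by one more kernel $\omega_k$ (resp.\ $\Omega_k$). The final integration over $X_L$ is the only one that meets the Pfaffian factor, and it is carried out by de Bruijn's formula~\eqref{eq:dB_formula1} (resp.~\eqref{eq:dB_formula2}). Reassembling the accumulated dressings then yields exactly the dressed operators $A_{L-1}$ and $\mathsf{A}_{L-1}$, so that the two $\beta$ cases run in complete parallel.

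Concretely, for $\beta=1$ I would first isolate the factors containing $X_1$, namely $\det_{i,j}f_{j-1}(x_{1,i})$ and $\det_{i,j}\omega_1(x_{1,i},x_{2,j})$, and apply Andr\'eief's formula in the variables $X_1$. After transposing one determinant this produces $\det_{i,j}\rvev{f_{i-1}\mid\omega_1\mid x_{2,j}}$, i.e.\ the same shape as the original integrand but with $f_{i-1}$ replaced by the dressed function $f^{(1)}_{i-1}(x)=\rvev{f_{i-1}\mid\omega_1\mid x}$ and with one fewer layer. Iterating over $k=1,\dots,L-1$ dresses the functions into $f^{(L-1)}_{i-1}(x)=\rvev{f_{i-1}\mid\omega_1\cdots\omega_{L-1}\mid x}$, leaving
\begin{align}
\mathcal{Z}_{N^L}^{(1)}=\frac{1}{N!}\int\dd{\mu(X_L)}\det_{i,j\in[N]}f^{(L-1)}_{j-1}(x_{L,i})\Pf_{i,j\in[N]}A(x_{L,i},x_{L,j})
\, .
\end{align}
This is precisely a single-layer $\beta=1$ integral, so de Bruijn's formula~\eqref{eq:dB_formula1} gives $\Pf_{i,j}\rvev{f^{(L-1)}_{i-1}\mid A\mid f^{(L-1)}_{j-1}}$. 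Finally I would rewrite the dressed bra/ket in operator form, $\bra{f^{(L-1)}_{i-1}}=\bra{f_{i-1}}\omega_1\cdots\omega_{L-1}$ and $\ket{f^{(L-1)}_{j-1}}=\omega_{L-1}^\text{T}\cdots\omega_1^\text{T}\ket{f_{j-1}}$, so that the kernel collapses to $\rvev{f_{i-1}\mid A_{L-1}\mid f_{j-1}}$, which is the claim.

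For $\beta=4$ the argument is structurally identical, with the scalar kernels replaced by the $2\times2$ matrix kernels $\Omega_k$ and the Pfaffian end governed by $\mathsf{A}$. Each intermediate integration over the pair $(X_k,\widetilde{X}_k)$ now combines the $2N\times2N$ determinant carrying that pair with the block determinant $\det_{i,j}\Omega_k$; the relevant identity is the matrix (block) version of Andr\'eief's formula, whose output composes the matrix kernels and dresses the two-component functions $(f_{i-1},g_{i-1})$ by one more $\Omega_k$. After $L-1$ steps the final integral over $(X_L,\widetilde{X}_L)$ has exactly the shape of the single-layer generalized $\beta=4$ model, and de Bruijn's formula~\eqref{eq:dB_formula2} produces $\Pf_{i,j}\rvev{f_{i-1},g_{i-1}\mid\mathsf{A}_{L-1}\mid f_{j-1},g_{j-1}}$ once the accumulated dressings are reassembled into $\mathsf{A}_{L-1}=\Omega_1\cdots\Omega_{L-1}\cdot\mathsf{A}\cdot\Omega_{L-1}^\text{T}\cdots\Omega_1^\text{T}$.

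The routine parts are the repeated invocations of Andr\'eief and de Bruijn together with the index/transpose bookkeeping that turns the nested inner products into the left/right operator products defining $A_{L-1}$ and $\mathsf{A}_{L-1}$. The one genuinely delicate point is the intermediate step in the $\beta=4$ chain: one must justify the block version of Andr\'eief's formula that integrates out a full pair $(X_k,\widetilde{X}_k)$ against the $2\times2$-block determinant $\det_{i,j}\Omega_k$, and verify that the four scalar entries compose into the matrix product $\Omega_k\cdot(\,\cdot\,)\cdot\Omega_{k+1}^{\text{T}}$ with the transposes in the correct place. Establishing this block identity (or reducing it to~\eqref{eq:A_formula}) is the main obstacle; once it is in hand, the $\beta=1$ and $\beta=4$ derivations are formally the same.
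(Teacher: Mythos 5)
Your proposal is correct and follows essentially the same route as the paper: Andr\'eief's formula~\eqref{eq:A_formula} is applied iteratively to integrate out the first $L-1$ layers and dress the functions $f_{i-1}$ (resp.\ the pairs $(f_{i-1},g_{i-1})$) by the kernels $\omega_k$ (resp.\ $\Omega_k$), after which de Bruijn's formula~\eqref{eq:dB_formula1} (resp.~\eqref{eq:dB_formula2}) handles the final Pfaffian-coupled integration over $X_L$ and the inner products are reassembled into $A_{L-1}$ and $\mathsf{A}_{L-1}$. The one delicate point you flag --- the block/paired-variable version of Andr\'eief's formula needed for the intermediate $\beta=4$ integrations --- is likewise asserted without further justification in the paper's own proof, so your awareness of it is a point in your favor rather than a deviation.
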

\begin{proof}
We first consider the case $\beta = 1$.
Applying Andréief's formula~\eqref{eq:A_formula}, we integrate the eigenvalues $(x_{k,i})_{i \in [N]}^{k=1,\ldots,L-1}$ as follows,
\begin{align}
    &
    \frac{1}{N!^L} \int \dd{\mu(\underline{X})} \det_{i,j \in [N]} f_{j-1}(x_{1,i}) 
    \prod_{k=1}^{L-1} \qty[ \det_{i,j \in [N]} \omega_{k}(x_{k,i},x_{k+1,j}) ]
    \nonumber \\
    & = \frac{1}{N!} \int \dd{\mu(X_L)} \det_{i,j \in [N]} \rvev{f_{j-1} \mid \omega_1 \cdots \omega_{L-1} \mid x_{L,i}}
    \, .
\end{align}
We have a similar relation for the case $\beta = 4$,
\begin{align}
    & \frac{1}{N!^{2L}} \int \dd{\mu(\underline{X})} \det_{i \in [N], j \in [2N]} \qty[ f_{j-1}(x_{1,i}) \ g_{j-1}(\tilde{x}_{1,i}) ]
    \prod_{k=1}^{L-1} \qty[ \det_{i,j \in [N]} {\Omega}_{k}(x_{k,i},\tilde{x}_{k,i};x_{k+1,j},\tilde{x}_{k+1,j}) ]
    \nonumber \\
    & = \frac{1}{N!^{2}} \int \dd{\mu(X_L)} \det_{i \in [N], j \in [2N]} \qty[
    \begin{bmatrix}
    \bra{f_{j-1}} \ \bra{g_{j-1}}
    \end{bmatrix}
    \Omega_1 \cdots \Omega_{L-1}
    \begin{bmatrix}
    \ket{x_{L,i}} \\ \ket{\tilde{x}_{L,i}}
    \end{bmatrix}
    ]
    \, .
\end{align}
Then, we apply de Bruijn's formula~\eqref{eq:dB_formula} to integrate the remaining variables $(x_{L,i})_{i \in [N]}$, and obtain the result.
\end{proof}

These generic models yield the following special cases that would have quiver interpretations.

\subsubsection{$B$-type quiver models}\label{sec:B-quiver}

We have the following examples for the matrix chain models.
For the $\beta = 4$ model, we consider
\begin{gather}
    f_{i-1}(x) = g_{i-1}(x) = p_{2N-i}(x) \, , \quad
    A(x,y) = B(x,y) = 0 \, , \quad
    S(x,y) = \delta(x-y) \, , \nonumber \\
    \omega_k^{l,m}(x,y) = 
    \frac{1}{x-y}  \quad (k=1,\ldots,L-2)
    \, . \qquad
    \Omega_{L-1}(x,\tilde{x};y,\tilde{y}) = 
    \begin{bmatrix}
    \frac{1}{x-y} & \frac{1}{(x-\tilde{y})^2} \\
    \frac{1}{\tilde{x} - y} & \frac{1}{(\tilde{x} - \tilde{y})^2}
    \end{bmatrix}
    \, ,
\end{gather}
and denote $Z_k = (z_{k,i},z_{k,N+i})_{i \in [N]} = (x_{k,i},\tilde{x}_{k,i})_{i \in [N]}$ and $\underline{X} = (X_k)_{k \in [L]}$.
Then, we obtain
\begin{align}
    \mathcal{Z}_{N^L}^{(4)} = 
    \int \frac{\dd{\mu(\underline{X})}}{(2N)!^{L-1} N!} 
    \frac{\Delta_{2N}(Z_1)^2 \cdots \Delta_{2N}(Z_{L-1})^2 \Delta_N(X_L)^4}{\displaystyle  \prod_{i,j \in [2N]} (z_{1,i} - z_{2,j}) \cdots (z_{L-2,i} - z_{L-1,j}) \prod_{i\in[2N],j\in[N]} (z_{L-1,i} - x_{L,j})^2 }
    \, ,
    \label{eq:B-model}
\end{align}
where we change the normalization of the partition function.

In fact, this model may have an interpretation as the $B$-type quiver matrix model, which is constructed from the $D$-type model~\eqref{eq:D-model}: 
Under the process called the folding $X_L = \widetilde{X}_L$, we obtain the partition function~\eqref{eq:B-model}.
This situation is graphically explained using the quiver diagrams:
\begin{align}
    \begin{tikzpicture}[baseline=(current bounding box.center),scale=1.5]
    \node at (-1,0) {$D_{L+1}$ :};
    \draw (0,0) -- (1.75,0);
    \draw[dotted] (1.75,0) -- (2.25,0);
    \draw (2.25,0) -- (3,0);
    \draw (4,.5) -- (3,0) -- (4,-.5);
    \filldraw[fill = white, draw = black] (0,0) circle (.2) node {$2N$};
    \filldraw[fill = white, draw = black] (1,0) circle (.2) node {$2N$};
    \filldraw[fill = white, draw = black] (3,0) circle (.2) node {$2N$};
    \filldraw[fill = white, draw = black] (4,.5) circle (.2) node {$N$};
    \filldraw[fill = white, draw = black] (4,-.5) circle (.2) node {$N$};    
    \draw[latex-latex,blue,very thick] (4.3,.5) to [out=-45,in=45] ++(0,-1); 
    \draw[-latex,thick] (2,-.5) -- ++(0,-.5);
    \begin{scope}[shift={(0,-1.5)}]
    \node at (-1,0) {$B_{L}$ :};
    \draw (0,0) -- (1.75,0);
    \draw[dotted] (1.75,0) -- (2.25,0);
    \draw (2.25,0) -- (3,0);
    \draw[double,double distance = 3pt] (3,0) -- (4,0);
    \draw (3.4,.15) -- ++(.25,-.15) -- ++(-.25,-.15);    
    \filldraw[fill = white, draw = black] (0,0) circle (.2) node {$2N$};
    \filldraw[fill = white, draw = black] (1,0) circle (.2) node {$2N$};
    \filldraw[fill = white, draw = black] (3,0) circle (.2) node {$2N$};
    \filldraw[fill = c1!50, draw = black] (4,0) circle (.2) node {$N$};
    \end{scope}
    \end{tikzpicture}
    \label{eq:DtoB}
\end{align}
In these diagrams, we denote the $\beta$-node $\Delta_N(X)^\beta$ by 
\tikz[baseline=-4pt,scale=1.2] \filldraw[fill = c2!50,draw = black] (0,0) circle (.2) node {$N$}; ($\beta = 1$),%
\footnote{%
We will consider the $\beta = 1$ node in the next part.
}
\tikz[baseline=-4pt,scale=1.2] \draw (0,0) circle (.2) node {$N$}; ($\beta = 2$),
\tikz[baseline=-4pt,scale=1.2] \filldraw[fill = c1!50,draw = black] (0,0) circle (.2) node {$N$}; ($\beta = 4$), and the line connecting the nodes indicates the Cauchy-type interaction $\prod_{i,j \in [N]} (x_{k,i} - x_{k+1,j})^{-1}$~\cite{Bertola:2009CMP}.

\subsubsection{$C$-type quiver models}\label{sec:C-quiver}

For the $\beta = 1$ model, putting
\begin{align}
    f_{i-1}(x) = p_{N-i}(x) \, , \qquad
    \omega_k(x,y) = \frac{1}{x-y} \, , \qquad
    A(x,y) = \operatorname{sgn}(x-y) \, ,
\end{align}
we obtain
\begin{align}
    \mathcal{Z}_{N^L}^{(1)} = 
    \int \frac{\dd{\mu(X^L)}}{N!^L} 
    \frac{\Delta_N(X_1)^2 \cdots \Delta_N(X_{L-1})^2 |\Delta_N(X_L)|}{\displaystyle \prod_{i,j \in [N]} (x_{1,i} - x_{2,j}) \cdots (x_{L-1,i} - x_{L,j})} 
    \, .
    \label{eq:C-model}
\end{align}
This model would have an interpretation as the $C$-type model obtained from the $A$-type model through the folding.
We consider the quiver matrix model of the type $A_{2L-1}$,
\begin{align}
    \mathcal{Z}[A_{2L-1}] = \int \dd{\mu(\underline{X})} 
    \frac{\Delta_N(X_1)^2 \cdots \Delta_N(X_{2L-1})^2}{\displaystyle \prod_{i,j \in [N]} (x_{1,i} - x_{2,j}) \cdots (x_{2L-2,i} - x_{2L-1,j})} 
    \, .
\end{align}
Identifying $X_{k} = X_{2L-k}$ for $k = 1,\ldots,L-1$, we obtain
\begin{align}
    \int \dd{\mu(\underline{X})} 
    \frac{\Delta_N(X_1)^4 \cdots \Delta_N(X_{L-1})^4 \Delta_N(X_L)^2}{\displaystyle \prod_{i,j \in [N]} (x_{1,i} - x_{2,j})^2 \cdots (x_{L-1,i} - x_{L,j})^2} 
    \, .
\end{align}
Then, formally taking a square root of the integrand, we obtain the partition function shown in \eqref{eq:C-model}.
This situation is depicted as follows:
\begin{align}
    \begin{tikzpicture}[baseline=(current bounding box.center),scale=1.5]
    \node at (-1,0) {$A_{2L-1}$ :};
    \draw (0,.5) -- (1.75,.5);
    \draw[dotted] (1.75,.5) -- (2.25,.5);
    \draw (2.25,.5) -- (3,.5) -- (4,0) -- (3,-.5) -- (2.25,-.5);
    \draw (0,-.5) -- (1.75,-.5);
    \draw[dotted] (1.75,-.5) -- (2.25,-.5);
    \filldraw[fill = white, draw = black] (0,.5) circle (.2) node {$N$};
    \filldraw[fill = white, draw = black] (1,.5) circle (.2) node {$N$};
    \filldraw[fill = white, draw = black] (3,.5) circle (.2) node {$N$};
    \filldraw[fill = white, draw = black] (4,0) circle (.2) node {$N$};
    \filldraw[fill = white, draw = black] (0,-.5) circle (.2) node {$N$};
    \filldraw[fill = white, draw = black] (1,-.5) circle (.2) node {$N$};
    \filldraw[fill = white, draw = black] (3,-.5) circle (.2) node {$N$};    
    \draw[latex-latex,blue,very thick] (0,.25) -- ++(0,-.5); 
    \draw[latex-latex,blue,very thick] (1,.25) -- ++(0,-.5); 
    \draw[latex-latex,blue,very thick] (3,.25) -- ++(0,-.5); 
    \draw[-latex,thick] (2,-1) -- ++(0,-.5);
    \begin{scope}[shift={(0,-2)}]
    \node at (-1,0) {$C_{L}$ :};
    \draw (0,0) -- (1.75,0);
    \draw[dotted] (1.75,0) -- (2.25,0);
    \draw (2.25,0) -- (3,0);
    \draw[double,double distance = 3pt] (3,0) -- (4,0);
    \draw (3.6,.15) -- ++(-.25,-.15) -- ++(.25,-.15);    
    \filldraw[fill = white, draw = black] (0,0) circle (.2) node {$N$};
    \filldraw[fill = white, draw = black] (1,0) circle (.2) node {$N$};
    \filldraw[fill = white, draw = black] (3,0) circle (.2) node {$N$};
    \filldraw[fill = c2!50, draw = black] (4,0) circle (.2) node {$N$};
    \end{scope}
    \end{tikzpicture}
\end{align}
We may apply the Pfaffian formula (Proposition~\ref{prop:Z_chain}) to obtain the partition function of this quiver matrix model, and also compute the characteristic polynomial average coupled to the left-most matrix $X_1$ as discussed in Section~\ref{sec:ch_poly}.

\subsection{Generic quiver matrix model}

Summarizing the results above, we establish the rule to construct generic quiver matrix models.
For the oriented arrows and the trivalent vertex, we assign the following interaction terms,
\begin{subequations}
\begin{align}
    \begin{tikzpicture}[baseline=(current bounding box.center)]
    \draw (-.25,.2) -- ++(-.5,0);
    \draw [dotted] (-.75,.2) -- ++(-.5,0);
    \node at (0,0) {\dynkin[edge length=.8cm,labels={a,b}]{B}{.oo}};
    \end{tikzpicture}
    \ & : \quad
    \frac{\Delta_{N_{a}}(X_{a})^2 \Delta_{N_b}(X_{b})^4}{\prod_{i,j} (x_{a,i} - x_{b,j})^2}
    \\[1em]
    \begin{tikzpicture}[baseline=(current bounding box.center)]
    \draw (-.25,.2) -- ++(-.5,0);
    \draw [dotted] (-.75,.2) -- ++(-.5,0);
    \node at (0,0) {\dynkin[edge length=.8cm,labels={a,b}]{C}{.oo}};
    \end{tikzpicture}
    \ & : \quad
    \frac{\Delta_{N_{a}}(X_{a})^2 |\Delta_{N_b}(X_{b})|}{\prod_{i,j} (x_{a,i} - x_{b,j})}
    \\[1em]
    \begin{tikzpicture}[baseline=(current bounding box.center)]
    \draw (0,0) -- ++(-.5,0);
    \draw [dotted] (-.5,0) -- ++(-.5,0);
    \draw (60:.6) node [right=.2cm] {{\scriptsize $b$}} -- (0,0) node [right=.2cm] {{\scriptsize $a$}} -- (-60:.6) node [right=.2cm] {{\scriptsize $c$}};
    \filldraw [fill = white, draw = black] (0,0) circle (.08);
    \filldraw [fill = white, draw = black] (60:.6) circle (.08);
    \filldraw [fill = white, draw = black] (-60:.6) circle (.08);
    \end{tikzpicture}
    \ & : \quad
    \frac{\Delta_{N_a}(X_{a})^2 \Delta_{N_b}(X_{b})^2 \Delta_{N_c}(X_{c})^2 }{\prod_{i,j} (x_{a,i} - x_{b,j}) (x_{a,i} - x_{c,j})}
\end{align}
\end{subequations}
In order to write down the quiver matrix model partition function associated with a Dynkin-quiver diagram $\Gamma$, we prepare the notations.
Let $(\alpha_a)_{a \in \gamma}$ be a set of the simple roots associated with the Dynkin-quiver $\Gamma$.
We define the Cartan matrix $(c_{ab})_{a,b \in \Gamma}$ and its symmetrization $(b_{ab})_{a,b \in \Gamma}$ as follows,
\begin{subequations}
\begin{align}
    c_{ab} & = (\alpha_a^\vee,\alpha_b) = \frac{(\alpha_a,\alpha_b)}{(\alpha_a,\alpha_b)} \, , \\
    b_{ab} & = (\alpha_a,\alpha_b) = d_a c_{ab} \, ,\\
    d_{a} & = (\alpha_a,\alpha_a) \, .
\end{align}
\end{subequations}
We also define the factored version of the symmetrized Cartan matrix,
\begin{align}
    \bar{b}_{ab} = \frac{1}{\operatorname{gcd}(b_{ab})} b_{ab} 
    \, .
\end{align}
Then, for a Dynkin-quiver diagram $\Gamma$, we may write the partition function of the quiver matrix model as follows,
\begin{align}
    Z[\Gamma] = \int \dd{\mu(\underline{X})} \prod_{a,b \in \Gamma} \prod_{\substack{ i \in [N_a], j \in [N_b] }} (x_{i,a} - x_{j,b})^{\bar{b}_{ab}(\Gamma^\vee)}
    \, ,
    \label{eq:quiver_model}
\end{align}
where $\bar{b}_{ab}(\Gamma^\vee)$ is the factored symmetrized Cartan matrix for the Langlands dual quiver $\Gamma^\vee$.
For the simply-laced quivers $\Gamma^\vee = \Gamma$, this reproduces the well-known construction~\cite{Marshakov:1991gc,Kharchev:1992iv} as $\bar{b}_{ab} = c_{ab}$.
From this point of view, the duality between $\beta = 1$ and $\beta = 4$ could be interpreted as the Feigin--Frenkel duality~\cite{Feigin:1991wy}.
See \cite{Frenkel:2005pa} for more details along this direction.

For example, for the $B$ and $C$ types, the symmetrization of the Cartan matrix is given as follows,
\begin{subequations}
\begin{align}
    B_r \ : \
    b = 
    \begin{bmatrix}
    2 &&&&\\
    & 2 &&& \\
    && \ddots && \\
    &&& 2 & \\
    &&&& 1
    \end{bmatrix}
    \begin{bmatrix}
    2 & -1 & & & \\
    -1 & 2 & -1 & & \\
    & \ddots & \ddots & \ddots & \\
    && -1 & 2 & -1 \\
    &&& - 2 & 2
    \end{bmatrix}
    = 
    \begin{bmatrix}
    4 & -2 & & & \\
    -2 & 4 & -2 & & \\
    & \ddots & \ddots & \ddots & \\
    && -2 & 4 & -2 \\
    &&& - 2 & 2
    \end{bmatrix}   
    \\
    C_r \ : \
    b = 
\begin{bmatrix}
    1 &&&&\\
    & 1 &&& \\
    && \ddots && \\
    &&& 1 & \\
    &&&& 2
    \end{bmatrix}
    \begin{bmatrix}
    2 & -1 & & & \\
    -1 & 2 & -1 & & \\
    & \ddots & \ddots & \ddots & \\
    && -1 & 2 & -2 \\
    &&& - 1 & 2
    \end{bmatrix}
    = 
    \begin{bmatrix}
    2 & -1 & & & \\
    -1 & 2 & -1 & & \\
    & \ddots & \ddots & \ddots & \\
    && -1 & 2 & -2 \\
    &&& - 2 & 4
    \end{bmatrix}       
\end{align}
\end{subequations}
Hence, the factored matrix is given by
\begin{align}
    \bar{b} = 
    \begin{cases}
    b/2 & (\Gamma = B_r) \\
    b & (\Gamma = C_r)
    \end{cases}
    \, .
\end{align}
Recalling $B_r^\vee = C_r$ and vice versa, the partition function defined in~\eqref{eq:quiver_model} reproduces the expressions \eqref{eq:B-model} and \eqref{eq:C-model}.

\bibliographystyle{amsalpha_mod}
\bibliography{ref}

\end{document}